\newcommand{\un}{\underline}
\newcommand{\be}{\begin{equation}}
\newcommand{\ee}{\end{equation}}
\newcommand{\ben}{\begin{equation*}}
\newcommand{\een}{\end{equation*}}
\newcommand{\mc}{\mathcal}
\newcommand{\mbf}{\mathbf}
\newtheorem{lem}{Lemma}
\newtheorem{applem}{Lemma}[section]
\newtheorem{thm}{Theorem}
\newcommand{\e}{\epsilon}
\newcommand{\abs}[1]{\left \lvert #1\right \rvert}
\newcommand{\norm}[1]{\left \lVert #1\right \rVert}
\newcommand{\expec}{\mathbb{E}}
\newcommand{\mscrs}{\mathscr{S}}
\newcommand{\midd}{\textsf{mid}\,}
\newcommand{\edge}{\textsf{edge}\,}
\newcommand{\pure}{\mathsf{pure}}
\newcommand{\tot}{\mathsf{total}}
\begin{document}
%
% paper title
% Titles are generally capitalized except for words such as a, an, and, as,
% at, but, by, for, in, nor, of, on, or, the, to and up, which are usually
% not capitalized unless they are the first or last word of the title.
% Linebreaks \\ can be used within to get better formatting as desired.
% Do not put math or special symbols in the title.
\title{Analysis of Approximate Message Passing\\
with Non-Separable Denoisers and\\
 Markov Random Field Priors}

% author names and IEEE memberships
% note positions of commas and nonbreaking spaces ( ~ ) LaTeX will not break
% a structure at a ~ so this keeps an author's name from being broken across
% two lines.
% use \thanks{} to gain access to the first footnote area
% a separate \thanks must be used for each paragraph as LaTeX2e's \thanks
% was not built to handle multiple paragraphs
%

\author{Yanting~Ma,~\IEEEmembership{Member,~IEEE,}
       Cynthia~Rush,~\IEEEmembership{Member,~IEEE,}
        and~Dror~Baron,~\IEEEmembership{Senior Member,~IEEE}% <-this % stops a space
\thanks{Y. Ma is with Mitsubishi Electric Research Laboratories (MERL), Cambridge, MA 02138, USA (e-mail: yma@merl.com).}% <-this % stops a space
\thanks{C. Rush is with the Department of Statistics, Columbia University, New York, NY 10027, USA (e-mail: cynthia.rush@columbia.edu).}% <-this % stops a space
\thanks{D. Baron is with the Department of Electrical and Computer Engineering, North Carolina State University, Raleigh, NC 27695, USA (e-mail: barondror@ncsu.edu).}%
\thanks{This work was supported by the National Science Foundation (NSF) under grants CCF-1217749 and ECCS-1611112.}%
\thanks{Portions of the work appeared at the IEEE International Symposium on Information Theory (ISIT), Aachen, Germany, June 2017 \cite{MaRushBaron17}.}%
\thanks{The work was completed while Y. Ma was with North Carolina State University.}
}

\maketitle

% As a general rule, do not put math, special symbols or citations
% in the abstract or keywords.
\begin{abstract}
Approximate message passing (AMP) is a class of low-complexity, scalable algorithms for solving high-dimensional linear regression tasks where one wishes to recover an unknown signal from noisy, linear measurements.  AMP is an iterative algorithm that performs estimation by updating an estimate of the unknown signal at each iteration and the performance of AMP (quantified, for example, by the mean squared error of its estimates) depends on the choice of a ``denoiser" function that is used to produce these signal estimates at each iteration.

An attractive feature of AMP is that its performance can be tracked by a scalar recursion referred to as state evolution.   Previous theoretical analysis of the accuracy of the state evolution predictions has been limited to the use of only separable denoisers or block-separable denoisers, a class of denoisers that underperform when sophisticated dependencies exist between signal entries.  Since signals with entrywise dependencies are common in image/video-processing applications, in this work we study the high-dimensional linear regression task when the dependence structure of the input signal is modeled by a Markov random field prior distribution.
We provide a rigorous analysis of the performance of AMP, demonstrating the accuracy of the state evolution predictions, when a class of non-separable sliding-window denoisers is applied.  Moreover, we provide numerical examples where AMP with sliding-window denoisers can successfully capture local dependencies in images.
\end{abstract}

% Note that keywords are not normally used for peerreview papers.
\begin{IEEEkeywords}
approximate message passing, non-separable denoiser, Markov random field, finite sample analysis.
\end{IEEEkeywords}

% For peer review papers, you can put extra information on the cover
% page as needed:
% \ifCLASSOPTIONpeerreview
% \begin{center} \bfseries EDICS Category: 3-BBND \end{center}
% \fi
%
% For peerreview papers, this IEEEtran command inserts a page break and
% creates the second title. It will be ignored for other modes.
%\IEEEpeerreviewmaketitle

% !TEX root = main.tex

\section{Introduction} \label{sec:intro}

In this work, we study the problem of estimating an unknown signal $\beta:=(\beta_i)_{i\in\Gamma}$ from noisy, linear measurements as in the following model:
\be
y=A\mc{V}(\beta)+w,
\label{eq:model1}
\ee
where for some integer $p\in\{1,2,3\}$, $\Gamma\subset\mathbb{Z}^p$ is an index set with cardinality $|\Gamma|$, $y \in \mathbb{R}^n$ is the output, $A \in \mathbb{R}^{n \times |\Gamma|}$ is a known measurement matrix, $w \in \mathbb{R}^n$ is zero-mean noise with finite variance $\sigma^2$, and $\mc{V}$ (script $V$ stands for ``vectorization") is an invertible operator that rearranges elements of an array into a vector, hence $\mc{V}(\beta)$ is a length-$|\Gamma|$ vector.  We assume that the ratio of the dimensions of the measurement matrix is a constant value, $\delta := n/|\Gamma|$, with $\delta \in (0, \infty)$. 

Approximate message passing (AMP) \cite{DonMalMont09, MontChap11, BayMont11, krz12, Rangan11} is a class of low-complexity, scalable algorithms studied to solve the high-dimensional regression task of \eqref{eq:model1}.  The performance of AMP depends on a sequence of functions $\{\eta_t\}_{t\geq 0}$ used to generate a sequence of estimates $\{\beta^t\}_{t\geq 0}$ from effective observations computed in every iteration of the algorithm. A nice property of AMP is that under some technical conditions these observations can be approximated as the input signal $\beta$ plus independent and identically distributed (i.i.d.)\ Gaussian noise. 
For this reason, the functions $\{\eta_t\}_{t\geq 0}$ are referred to as ``denoisers."

Previous analysis of the performance of AMP only considers denoisers $\{\eta_t\}_{t\geq 0}$ that act coordinate-wise when applied to a vector; such denoisers are referred to as \emph{separable}. If the unknown signal $\beta$ has a prior distribution with
i.i.d.\ entries, restricting consideration to only separable denoisers causes no loss in performance.
However, in many real-world applications, the unknown signal $\beta$ contains dependencies between entries, and therefore a coordinate-wise independence structure does not approximate the prior for $\beta$ well. 
Instead of using a separable denoiser, {\em non-separable} denoisers can improve reconstruction quality for signals with such dependencies among entries. For example, when the signals are images \cite{Tan15,Metzler16} or sound clips \cite{Ma16}, non-separable
denoisers outperform reconstruction techniques based on over-simplified i.i.d.\ models.
In such cases, a more appropriate model might be a finite memory model, well-approximated with a Markov random field (MRF) prior.
In this paper, we extend the previous performance guarantees for AMP to a class of non-separable sliding-window denoisers when the unknown signal is a realization of an MRF. Sliding-window schemes have been studied for denoising signals with dependencies among entries by, for example, Sivaramakrishnan and Weissman \cite{SW_iDUDE2006,Sivaramakrishnan2008}.
MRFs are appropriate models for many types of images, especially texture images, which have an inherently random component \cite{dubes1989,cross1983}.

When the measurement matrix $A$ has i.i.d.\ Gaussian entries and the empirical distribution function\footnote{For a vector $(x_1,\ldots,x_N)\in\mathbb{R}^N$, the empirical distribution function $F_N:\mathbb{R}\to [0,1]$ is defined as $F_N(x):=\frac{1}{N}\sum_{i=1}^N \mathbb{I}_{\{x_i\leq x\}}$, where $\mathbb{I}$ denotes the indicator function. The empirical distribution function $F_N$ is said to converge to some distribution function $F:\mathbb{R}\to [0,1]$ if for all $x\in\mathbb{R}$ such that $F(x)$ is continuous, we have $\lim_{N\to\infty} F_N(x)=F(x)$.} of the unknown signal $\beta$ converges to some distribution function on $\mathbb{R}$, Bayati and Montanari \cite{BayMont11} proved 
that at each iteration
the performance of AMP can be accurately predicted by a simple, scalar iteration referred to as \emph{state evolution} in the large system limit ($n,|\Gamma| \to \infty$ such that $(n/|\Gamma|) \rightarrow \delta$ 
is a constant).  For example, if $\beta^t$ is the estimate produced by AMP at iteration $t$, the result by Bayati and Montanari \cite{BayMont11} implies that the normalized squared error, $\frac{1}{|\Gamma|}\norm{\beta^t - \beta}^2$, and other performance measures converge to deterministic values predicted by state evolution, which is a deterministic recursion calculated using the prior distribution of $\beta$.\footnote{Throughout the paper, $\|v\|^2$ denotes the sum of squares of all the entries in $v$, where $v$ could be, for example, in $\mathbb{R}^n$, $\mathbb{R}^{n\times n}$, or $\mathbb{R}^{n\times n\times n}$.}
Rush and Venkataramanan \cite{RushV18} provided a concentration version of the asymptotic result when the prior distribution of $\beta$ is i.i.d.\ sub-Gaussian.  The result in Rush and Venkataramanan \cite{RushV18} implies that
the probability of $\e$-deviation between various performance measures and their limiting constant values decay exponentially in $|\Gamma|$. 

Extensions of AMP performance guarantees beyond separable denoisers have been considered in special cases  
\cite{JavMonState13,RushGVISIT15} for
certain classes of block-separable denoisers that allow dependencies within blocks of the signal $\beta$ with independence across blocks.  A preliminary version \cite{MaRushBaron17} of this work has analyzed the performance of AMP with sliding-window denoisers applied to the setting where the unknown signal has a Markov chain prior. In this paper, we generalize the previous result with the applications of compressive imaging \cite{Tan15,Metzler16} and compressive hyperspectral imaging \cite{Tan.etal2015} in mind.  We consider 2D/3D MRF priors for the input signal $\beta$, and provide performance guarantees for AMP with 2D/3D sliding-window denoisers under some technical conditions. 

While we were concluding this manuscript, we became aware of recent work of Berthier \emph{et al.}\ \cite{Berthier17}. The authors prove that the loss of the estimates generated by AMP (for a class of loss functions) with general non-separable denoisers converges to the state evolution predictions asymptotically. Our work differs from \cite{Berthier17} in the following three aspects: (\emph{i}) our work provides finite sample analysis, whereas the result in \cite{Berthier17} is asymptotic; (\emph{ii}) we adjust the state evolution sequence for the specific class of non-separable sliding-window denoisers to account for the ``edge" issue that occurs in the finite sample regime (this point will become clear in later sections); (\emph{iii}) we consider the setting where the unknown signal is a realization of an MRF and the expectation in the definition of the state evolution sequence is with respect to (w.r.t.) the signal $\beta$, the matrix $A$, and the noise $w$, whereas in \cite{Berthier17}, the signal $\beta$ is deterministic and unknown, hence the expectation is only w.r.t. the matrix $A$ and the noise $w$.

\subsection{Sliding-Window Denoisers and AMP Algorithm}
\label{subsec:amp_algo}

\textbf{Notation:}
Before introducing the algorithm, we provide some notation that is used to define the sliding window in the sliding-window denoiser. Without loss of generality, we let the index set $\Gamma\subset\mathbb{Z}^p$, on which the input signal $\beta$ in \eqref{eq:model1} is defined, be
\begin{equation}
\Gamma=\begin{cases}
[N], &\text{if }p=1,\\
[N]\times [N], &\text{if }p=2,\\
[N]\times [N]\times [N], &\text{if }p=3,
\end{cases}
\label{eq:def_Gamma}
\end{equation}
where for an integer $N$, the notation $[N]$ represents the set of integers $\{1,\ldots,N\}$, hence, $|\Gamma|=N^p$. Similarly, let $\Lambda$ be a $p$-dimensional cube in $\mathbb{Z}^p$ with length $(2k+1)$ in each dimension, namely, 
\begin{equation}
\Lambda:=\begin{cases}
[2k+1], &\text{if }p=1,\\
[2k+1]\times [2k+1], &\text{if }p=2,\\
[2k+1]\times [2k+1]\times [2k+1], &\text{if }p=3,
\end{cases}
\label{eq:def_Lambda}
\end{equation}
where $2k+1 \leq N$. We call $k$ the half-window size.

\textbf{AMP with sliding-window denoisers:} 
The AMP algorithm for estimating $\beta$ from $y$ and $A$ in \eqref{eq:model1} generates a sequence of estimates $\{\beta^t\}_{t\geq 0}$, where $\beta^t\in\mathbb{R}^{\Gamma}$, $t$ is the iteration index, and  the initialization $\beta^0:=0$ is an all-zero array with the same dimension as the input signal $\beta$.
For $t\geq 0$, the algorithm proceeds as follows:
\begin{align}
z^t &= y-A\mc{V}(\beta^t)\nonumber\\
&\quad+\frac{z^{t-1}}{n}\sum_{i\in\Gamma} \eta_{t-1}'\left(\left[\mc{V}^{-1}(A^*z^{t-1})+\beta^{t-1}\right]_{\Lambda_i}\right),\label{eq:amp1}\\
\beta_i^{t+1} &=\eta_t\left([\mc{V}^{-1}(A^*z^t)+\beta^t]_{\Lambda_i}\right), \qquad \text{ for all } i\in\Gamma, \label{eq:amp2}
\end{align}
where the function $\{\eta_t\}_{t\geq 0}: \mathbb{R}^{\Lambda} \rightarrow \mathbb{R}$ is a sequence of denoisers, $\eta_{t-1}'$ is the partial derivative w.r.t. the center coordinate of the argument, $A^*$ is the transpose of $A$, and $\Lambda_i\subset\mathbb{Z}^p$ for each $i\in \Gamma\subset\mathbb{Z}^p$ is the $p$-dimensional cube $\Lambda$ translated to be centered at location $i$. The translated $p$-dimensional cubes $\{\Lambda_i\}_{i\in\Gamma}$ are referred to as ``sliding windows," which will be used to subset elements of a $p$-dimensional array. The effective observation at iteration $t$ is $\mc{V}^{-1}(A^*z^t)+\beta^t \in \mathbb{R}^{\Gamma}$, which can be approximated as the true signal $\beta$ plus i.i.d.\ Gaussian noise (in a sense that will be made clear in the statement of our main result, Theorem \ref{thm:main_amp_perf}). Note that the sliding-windows  $\{\Lambda_i\}_{i\in\Gamma}$ and the sliding-window denoiser $\eta_t$ are defined on multidimensional signals, hence we use the inverse of the vectorization operator, $\mc{V}^{-1}$, to rearrange elements of vectors into arrays before applying the sliding-window denoiser $\eta_t$. 
It should also be noted that the denoiser $\eta_t$ may only process part of the signal elements in $\Lambda$. 
For example, in the 2D case, if $\Lambda$ is defined as a $3 \times 3$ window, then $\eta_t$ may only process the center and the four adjacent pixel values in the window (see Figure \ref{fig:partial_window}) and ignore the four corners.
To simplify notation, we will write $\eta_t:\mathbb{R}^\Lambda\to\mathbb{R}$ throughout the paper, and interpret this notation to mean that any processing of neighboring signal values is allowed, including the possibility of ignoring some of their values.
\begin{figure}
\centering
\includegraphics[width=0.2\textwidth]{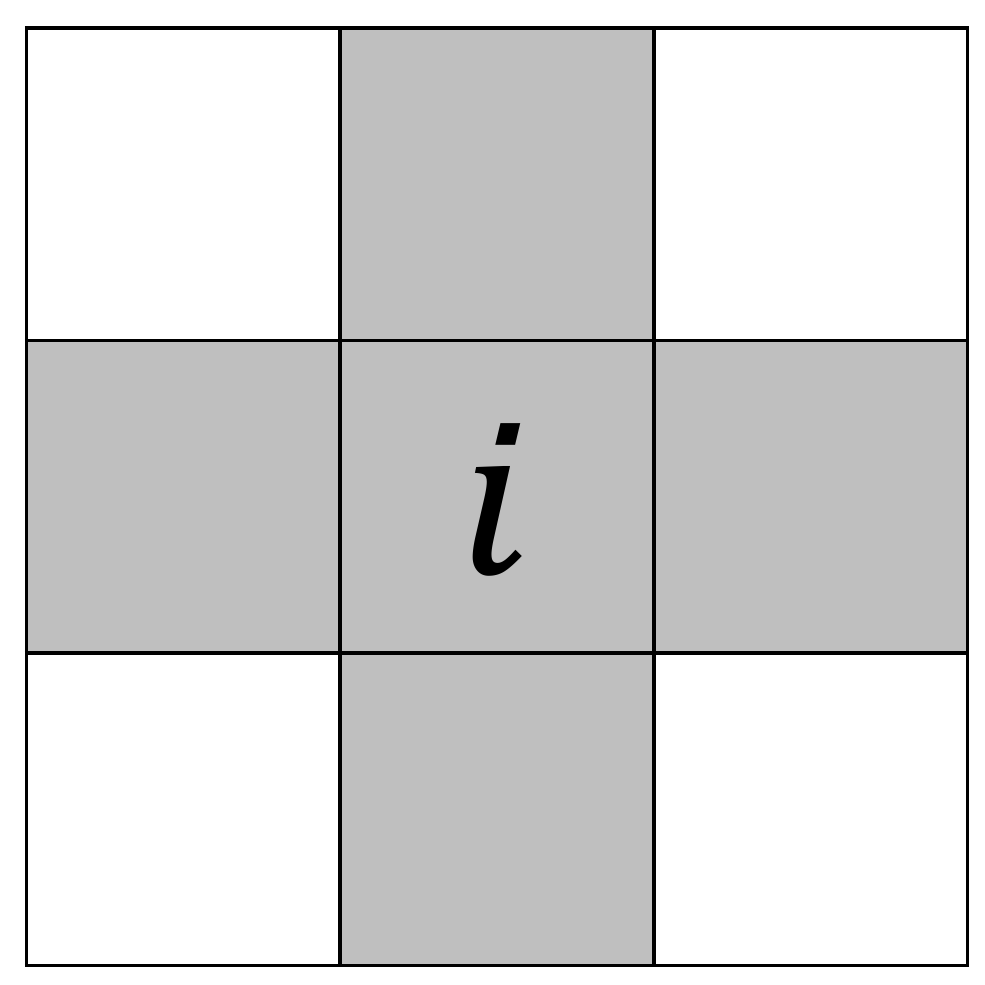}
\caption{For $\Lambda$ of size $3\times 3$, the denoiser $\eta_t:\mathbb{R}^\Lambda\to\mathbb{R}$ may only process the pixels in gray (the center and the four adjacent pixels).}
\label{fig:partial_window}
\end{figure}

\textbf{Edge cases:}
Notice that when the center coordinate $i$ is near the edge of $\Gamma$, some of the elements in $\Lambda_i$ may fall outside $\Gamma$, meaning that $\Lambda_i\cap\Gamma^c\neq\emptyset$ where $\Gamma^c$ is the complement of $\Gamma$ w.r.t. $\mathbb{Z}^p$. In the definition of the AMP algorithm with sliding-window denoisers and the subsequent analysis, these ``edge cases" must be handled carefully.  The following definitions provide a framework for the special treatment of the edge cases.  

Based on whether $\Lambda_i$ has elements outside $\Gamma$, we partition the index set $\Gamma$ into two sets $\Gamma^{\midd}$ and $\Gamma^{\edge}$ defined as:
\begin{equation}
\begin{split}
\Gamma^{\midd}&:=\{ i\in\Gamma \, \vert \, \Lambda_i\cap\Gamma^c =\emptyset\},\\
\Gamma^{\edge}&:=\{i\in\Gamma \, \vert \, \Lambda_i\cap\Gamma^c\neq\emptyset\}.
\end{split}
\label{eq:def_edge_index_set}
\end{equation}
That is, for $i\in\Gamma^{\midd}$, all elements in $\Lambda_i$ lie inside $\Gamma$, whereas for $i\in\Gamma^{\edge}$, some of the elements in $\Lambda_i$ fall outside $\Gamma$. The size of each set will depend on the half-window size $k$ and the dimension $p$.

For any $v\in\mathbb{R}^{\Gamma}$,  let $v_{\Lambda_i}$ be a subset of the elements of $v$ with indices in $\Lambda_i$ and for any $j \in \Lambda$, let $[\Lambda_i]_j$ be the $j^{th}$ index of $\Lambda_i$ so that $v_{[\Lambda_i]_j}$ returns a single element of $v_{\Lambda_i}$. Notice that for $i\in\Gamma^{\midd}$, all entries of $v_{\Lambda_i}$ are well-defined. However, for $i\in\Gamma^{\edge}$, the subset $v_{\Lambda_i}$ has undefined entries, namely, for all $j \in \Lambda$ such that $[\Lambda_i]_j\in\Gamma^c$, the entry $v_{[\Lambda_i]_j}$ is undefined. We now define the value of those ``missing" entries to be the average of the entries of $v_{\Lambda_i}$ having indices in $\Gamma$. Formally, for all $ j \in \Lambda$ such that $[\Lambda_i]_j \in\Gamma^c$, define
\begin{equation}
v_{[\Lambda_i]_j}:=\frac{1}{|\Lambda_i\cap\Gamma|}\sum_{ \ell \,  \in \, \Lambda_i\cap\Gamma} v_{\ell}.
%\qquad \text{ for all } \qquad  j \in \Lambda \qquad\text{ such that } \qquad [\Lambda_i]_j \in\Gamma^c.
\label{eq:def_missing_entry}
\end{equation}
It may improve signal recovery quality to use other schemes for ``missing" entries,
like interpolation. We leave the study of these improved schemes for future work, but
the effect of improved processing around edges should become minor as $N$ increases.
Notice that $v_{\Lambda_i}$ for all $i\in\Gamma$ are now defined using only the entries in the original $v\in\mathbb{R}^{\Gamma}$. 
It will be useful to emphasize this point in the proof of our main result, so we define a set of functions $\{\mc{T}_i\}_{i\in\Gamma}$ with $\mc{T}_i:\mathbb{R}^{\Lambda_i\cap\Gamma}\to\mathbb{R}^{\Lambda}$ as
\begin{equation}
\mc{T}_i(v_{\Lambda_i\cap\Gamma}):= v_{\Lambda_i},\qquad\text{ for all } \qquad i\in\Gamma,
\label{eq:def_T}
\end{equation}
where $v_{\Lambda_i}$ follows our definition above.
That is, $\mc{T}_i$ is identity for $i\in\Gamma^{\midd}$, whereas for $i\in\Gamma^{\edge}$, $\mc{T}_i$ extends a smaller array $v_{\Lambda_i\cap\Gamma}$ to a larger one $v_{\Lambda_i}$ with the extended entries defined by \eqref{eq:def_missing_entry}.

\textbf{Examples for defining ``missing" entries:}
To illustrate the notations defined above, we present an example for the $p=1$ case (hence $v\in\mathbb{R}^N$ is a vector). As defined above in \eqref{eq:def_Gamma} and \eqref{eq:def_Lambda}, we have $\Gamma=\{1,\ldots,N\}$, $\Lambda=\{1, \ldots, 2k+1\}$, and $\Lambda_i=(i-k,\ldots,i-1,i,i+1,\ldots, i+k)$ for each $i \in [N]$. Moreover, $\Gamma^{\midd}=\{k+1, k+2, \ldots, N-k\}$ and $\Gamma^{\edge}=\{1, 2, \ldots, k\} \cup \{N-k+1, N-k+2, \ldots, N\}$ as defined in \eqref{eq:def_edge_index_set}.
Therefore, for $i \in \Gamma^{\midd}$,
\ben
\begin{split}
v_{\Lambda_i}&=\mc{T}_i(v_{i-k}, v_{i-k+1}, \ldots, v_{i+k})\\
& := (v_{i-k}, v_{i-k+1}, \ldots, v_{i+k}) \in \mathbb{R}^{2k+1}.
\end{split}
\een
For $i \in \Gamma^{\edge}$, the vector $v_{\Lambda_i}$ is still length-$(2k+1)$, and we set the values of the non-positive indices, i.e., $1-k, 2-k, \ldots, -1, 0$, or indices above $N$, i.e., $N+1, N+2, \ldots, N+k$, to be the average of values in the vector $v_{\Lambda_i}$ with indices in $\Lambda_i\cap [N]$.  For example, let $i=3$ and $k=5$ giving $\Lambda_3=(-2, -1, 0, 1 ,\ldots,8)$ so that for $j \in \{1, 2, 3\}$ we have $[\Lambda_3]_j \in \Gamma^c$.  Following \eqref{eq:def_missing_entry}, define
\begin{equation*}
\bar{v} = \frac{1}{8}\sum_{j=1}^8 v_j, \quad \text{ and set } \quad v_{[\Lambda_3]_1} = v_{[\Lambda_3]_2} = v_{[\Lambda_3]_3} = \bar{v},
\end{equation*}
and so $v_{\Lambda_3} =\mc{T}_3(v_1, \ldots, v_8) := (\bar{v}, \bar{v}, \bar{v}, v_1, \ldots, v_{8})\in \mathbb{R}^{11}$.  An example for the $p=2$ case (hence $v\in\mathbb{R}^{N\times N}$ is a matrix), is shown in Figure \ref{fig:def_missing_entry}.

\begin{figure}
\centering
\includegraphics[width=0.3\textwidth]{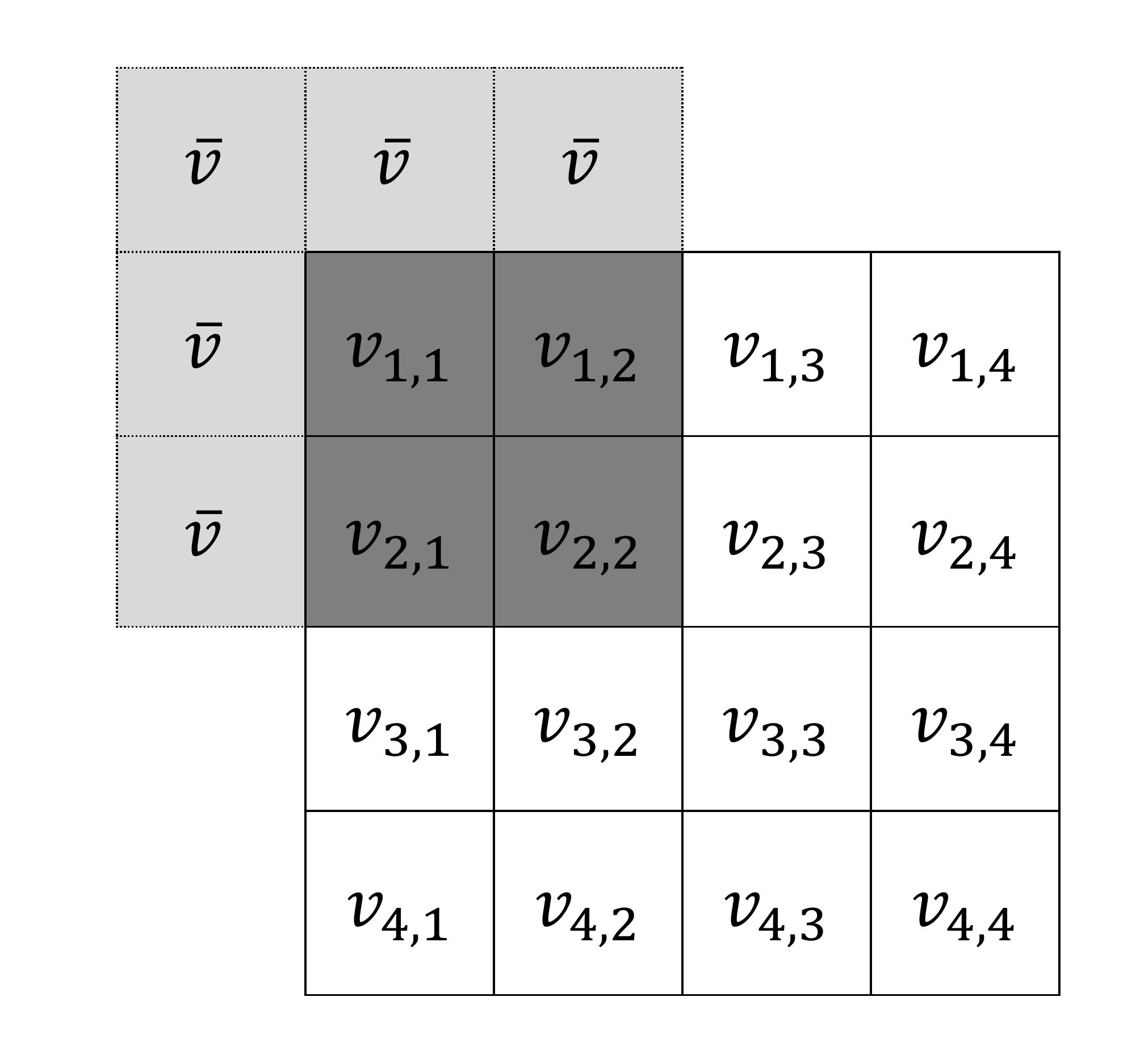}
\caption{Illustration of the definition of ``missing" entries in a window. The matrix $v\in\mathbb{R}^{4\times 4}$. The half-window size is $k=1$, thus $\Lambda=[3]\times[3]$. For the window $\Lambda_{(1,1)}$ centered at $(1,1)$, the ``existing" entries in the window are $v_{1,1},v_{1,2},v_{2,1},v_{2,2}$ shown in dark gray. Five entries, which are in light gray, are missing, hence we define their value to be the average of the existing ones, $\bar{v}:=\frac{1}{4}(v_{1,1}+v_{1,2}+v_{2,1}+v_{2,2})$.}
\label{fig:def_missing_entry}
\end{figure}

\subsection{Contributions and Outline}

Our main result proves concentration for (order-2) pseudo-Lipschitz (PL(2)) loss functions\footnote{A function $f: \mathbb{R}^{m} \to \mathbb{R}$ is (order-2) \emph{pseudo-Lipschitz} if there exists a constant $L >0$ such that for all $x, y \in \mathbb{R}^{m}$, $|f(x)-f(y)|\leq L(1+\norm{x}+\norm{y})\norm{x-y}$.} acting on the AMP estimate given in \eqref{eq:amp2} at any iteration $t$ of the algorithm to constant values predicted by the state evolution equations that will be introduced in the following.  
This work covers the case where the unknown signal $\beta$ has an MRF prior on $\mathbb{Z}^p$. For example, when $p=2$, $\beta$ can be thought of as an image, whereas when $p=3$, $\beta$ can be thought of as a hyperspectral image cube.  Moreover we use numerical examples to demonstrate the effectiveness of AMP with sliding-window denoisers when used to reconstruct images from noisy linear measurements.

The rest of the paper is organized as follows. Section \ref{sec:mainresult} provides model assumptions, state evolution formulas, the main performance guarantee, and numerical examples illustrating the effectiveness of the algorithm for compressive image reconstruction. Our main performance guarantee (Theorem \ref{thm:main_amp_perf}) is a concentration result for PL loss functions acting on the AMP outputs from \eqref{eq:amp1}-\eqref{eq:amp2}
to the state evolution predictions. Section \ref{sec:amp_proof} provides the proof of Theorem \ref{thm:main_amp_perf}. The proof is based on a technical lemma, Lemma \ref{lem:main_lem}, and the proof of Lemma \ref{lem:main_lem} is provided in Section \ref{sec:main_lem_proof}.

\section{Main Results} \label{sec:mainresult}

\subsection{Definitions and Assumptions}
\label{subsec:def_assumption}
First we include some definitions relating to MRFs that will be used to state our assumptions on the unknown signal $\beta$.
These definitions can be found in standard textbooks such as \cite{Georgii98}; we include them here for convenience.

\textbf{Definitions}: 
Let $(\Omega,\mathcal{F},P)$ be a probability space. A random field is a collection of random variables $X=\{X_i\}_{i\in\Gamma}$ defined on $(\Omega,\mathcal{F},P)$ having spatial dependencies, where $X_i:\Omega\to E$ for some measurable state space $(E,\mathcal{E})$ and $\Gamma \subset \mathbb{Z}^p$ is a non-empty,  finite subset of the infinite lattice $\mathbb{Z}^p$. Note that $i\in\Gamma\subset\mathbb{Z}^p$, hence $i=(i_1,\ldots,i_p)$. One can consider $\Gamma$ as a collection of spatial locations. Denote the $q^{th}$-order neighborhood of location $i\in\Gamma$ by $\mathcal{N}^q_i$, that is, $\mathcal{N}^q_i\subset\Gamma$ is a collection of location indices at a distance less than or equal to $q$ from $i$ but not including $i$. Formally,
\ben
\mathcal{N}^q_i = \{\, j \in \Gamma \setminus \{i\} \, \lvert \, \norm{i-j}^2 \leq q \, \}.
\een
Following these definitions, $X$ is said to be a $q^{th}$-order MRF if, for all $ i\in\Gamma$ and for all measurable subsets $B \in \mathcal{E}$, we have
\begin{equation*}
P(X_i\in B \, | \, X_j, \, j\in\Gamma\setminus \{i\}) = P(X_i\in B \, | \, X_j, \, j\in\mathcal{N}^q_i),
\end{equation*}
and for all $B \in \mathcal{E}^{\Gamma}$ we have $P(X \in B) > 0$.  The positivity condition ensures that the joint distribution of an MRF is a Gibbs distribution by the Hammersley-Clifford theorem \cite{HammersleyClifford1971}.

Let $\mu$ denote the distribution measure of $X$, namely for all $B\in\mathcal{E}^\Gamma$, we have
$P(X \in B) = \mu (B)$,
and let $\mu_\Lambda$ be the distribution measure of $X_\Lambda:=\{X_i\}_{i\in\Lambda}$ for $\Lambda\subset\Gamma$. For any $i \in \Gamma$, define the set $i+\Lambda:=\{i+j \, \lvert \, j\in\Lambda\}$. Then the random field is said to be \emph{stationary} if for all $i \in \Gamma$ such that $i + \Lambda \subset\Gamma$, it is true that $u_{\Lambda}=u_{i+\Lambda}$.

Next we introduce the \emph{Dobrushin uniqueness condition},  under which the random field admits a unique stationary distribution. Define the \emph{Dobrushin interdependence matrix} $(C_{i,j})_{i,j\in\Gamma}$ for the measure $\mu$ of the random field $X$ to be
\begin{equation}
C_{i,j} := \sup_{\substack{\xi ,\, \xi' \, \in \, E^\Gamma\\ \xi_{j^c}\, =\,\xi'_{j^c}}} \|\mu_i(\cdot \vert \xi) - \mu_i(\cdot | \xi')\|_{\mathsf{tv}}.
%C_{i,j}:=\sup_{\substack{\xi ,\, \xi' \, \in \, E^\Gamma\\ \xi_{j^c}\, =\,\xi'_{j^c}}} \max_f \frac{\abs{\mathbb{E}[f(X_i)| X_{i^c}=\xi_{i^c}] - \mathbb{E}[f(X_i)| X_{i^c}=\xi_{i^c}']}}{\sup_{u,u'} \abs{f(u)-f(u')}} .
\label{eq:Dob_matrix_def}
\end{equation}
In the above, the index set $j^c:=\Gamma\setminus \{j\}$ and the total variation distance $\|\cdot\|_{\mathsf{tv}}$ between two probability measures $\rho_1$ and $\rho_2$ on $(E,\mc{E})$ is defined as
\begin{equation*}
\|\rho_1(\cdot) - \rho_2(\cdot)\|_{\mathsf{tv}}: = \max_{B\in\mc{E}} \abs{\rho_1(B) - \rho_2(B)}.
\end{equation*}
Note that if $E$ is countable, then
\begin{equation}
\|\rho_1(\cdot) - \rho_2(\cdot)\|_{\mathsf{tv}} = \frac{1}{2} \sum_{x\in E} \abs{\rho_1(x) - \rho_2(x)}.
\label{eq:tv_countable}
\end{equation}
The measure $\mu$ is said to satisfy the Dobrushin uniqueness condition if
\begin{equation*}
c:=\sup_{i \, \in \, \Gamma}\sum_{j \, \in \, \Gamma} C_{i,j} <1.
\end{equation*}
The \emph{Dobrushin contraction coefficient}, $c$, is a quantity that estimates the magnitude of change of the single site conditional expectations, as they appear in \eqref{eq:Dob_matrix_def}, when the field values at the other sites vary. Similarly, we define the \emph{transposed Dobrushin contraction condition} as
\begin{equation*}
c^*:= \sup_{j \,\in \,\Gamma}\sum_{i\,\in\,\Gamma} C_{i,j}<1.
\end{equation*}

\textbf{Assumptions}: We can now state our assumptions on the signal $\beta$, the matrix $A$, and the noise $w$ in the linear system \eqref{eq:model1}, as well as the denoiser function $\eta_t$ used in the algorithm \eqref{eq:amp1} and \eqref{eq:amp2}.

\emph{Signal:} Let $E\subset\mathbb{R}$ be a bounded state space (countable or uncountable). Let $\beta=(\beta_i)_{i\in\Gamma}$ be a stationary MRF with Gibbs distribution measure $\mu$ on $E^\Gamma$, where $\Gamma\subset\mathbb{Z}^p$ is a finite and nonempty rectangular lattice. We assume that $\mu$ satisfies the Dobrushin uniqueness condition and the transposed Dobrushin uniqueness condition. These two conditions together are needed for the results in Lemma \ref{lem:kulske} and Lemma \ref{lem:PL_MRF_conc}, which demonstrate concentration of sums of pseudo-Lipschitz functions when the input to the functions are MRFs with distribution measure $\mu$.
Roughly, the conditions ensure that the dependencies between the terms in the sums are sufficiently weak for the desired concentration to hold.
The class of finite state space stationary MRFs, which is widely used for image analysis \cite{Li2009}, is one example that satisfies our assumption.

\emph{Denoiser functions:} The denoiser functions $\eta_t:\mathbb{R}^{\Lambda}\rightarrow\mathbb{R}$ used in \eqref{eq:amp2} are assumed to be Lipschitz\footnote{A function $f: \mathbb{R}^{m} \to \mathbb{R}$ is \emph{Lipschitz} if there exists a constant $L >0$ such that for all $x, y \in \mathbb{R}^{m}$,
$\abs{f(x) - f(y)} \leq L\norm{x-y}$.} for each $t>0$ and are, therefore, also weakly differentiable with bounded (weak) partial derivatives. We further assume that the partial derivative w.r.t.\ the center coordinate of $\Lambda$, which is denoted by $\eta_t':\mathbb{R}^\Lambda\to\mathbb{R}$, is itself differentiable with bounded partial derivatives.
Note that this implies $\eta_t'$ is Lipschitz. (It is possible to weaken this condition to allow $\eta_t'$ to have a finite number of discontinuities, if needed, as in \cite{RushV18}.)

\emph{Matrix:} The entries of the matrix $A$ are i.i.d.\ $\sim\mc{N}(0,1/n)$.

\emph{Noise:} The entries of the measurement noise vector $w$ are i.i.d.\ according to some sub-Gaussian distribution $p_w$ with mean 0 and finite variance $\sigma^2$. The sub-Gaussian assumption implies \cite{BLMConc} that for all $\e \in (0,1)$ and for some constants $K,\kappa>0$,
\begin{equation*}
P\Big(\Big \lvert \frac{1}{n}\|w\|^2 - \sigma^2 \Big \lvert \geq \e\Big)\leq Ke^{-\kappa n \e^2}.
\end{equation*}

\subsection{Performance Guarantee}

As noted in Section \ref{sec:intro}, the behavior of the AMP algorithm is predicted by a deterministic scalar recursion referred to as state evolution, which we now introduce. More specifically, the state evolution sequences $\{\tau_t^2\}_{t\geq 0}$ and $\{\sigma_t^2\}_{t\geq 0}$ defined below in \eqref{eq:taut_sigmat_def} will be used in Theorem \ref{thm:main_amp_perf} to characterize the estimation error of the estimates produced by AMP.  Let the joint distribution $\mu$ define the (stationary) prior distribution for the unknown signal $\beta$ in \eqref{eq:model1}. 
Following our assumption of stationarity, $\beta_i\sim\mu_1$ for all $i\in\Gamma$ and $\beta_{\Lambda_i}\sim\mu_\Lambda$ for all $i\in\Gamma^{\midd}$ with $\Gamma^{\midd}$ defined in \eqref{eq:def_edge_index_set}, where $\mu_1$ and $\mu_\Lambda$ denote the one-dimensional marginal and $\Lambda$-dimensional marginal of $\mu$, respectively. Define $\sigma_\beta^2 = \mathbb{E}[\beta_1^2] > 0$, and 
$\sigma_0^2 = \sigma_\beta^2/\delta$. 
Iteratively define the state evolution sequences $\{\tau_t^2\}_{t \geq 0}$ and $\{\sigma_t^2\}_{t \geq 1}$ as follows:
\begin{equation}
\begin{split}
\tau_t^2 &= \sigma^2 + \sigma_t^2,\\
\sigma_t^2 &=\frac{1}{\delta|\Gamma|} \sum_{i\in\Gamma}\mathbb{E} \left[ \left( \eta_{t-1}( [\beta + \tau_{t-1} Z]_{\Lambda_i} ) - \beta_i \right)^2 \right],
\end{split}
\label{eq:taut_sigmat_def}
\end{equation}
where $\eta_t:\mathbb{R}^\Lambda\to\mathbb{R}$ is the sliding-window denoiser and $Z \in \mathbb{R}^{\Gamma}$ has i.i.d.\ standard normal entries, independent of $\beta$, which implies that $Z_{\Lambda_i}$ is independent of $\beta_{\Lambda_i}$ and $\beta_i$ for all $i\in\Gamma$. 
Let $\beta'\in E^{\Lambda}\sim\mu_\Lambda$ and define $Z'\in\mathbb{R}^\Lambda$ with entries that are i.i.d.\ $\mc{N}(0,1)$.  We notice that for all $i\in\Gamma^{\midd}$, we have $\beta_{\Lambda_i}\overset{d}{=} \beta'$ and $Z_{\Lambda_i}\overset{d}{=}Z'$. 
Therefore, for all $i\in\Gamma^{\midd}$, the expectations in \eqref{eq:taut_sigmat_def} satisfy 
$\mathbb{E} \left[ \left( \eta_{t-1}( [\beta + \tau_{t-1} Z]_{\Lambda_i} ) - \beta_i \right)^2 \right]=\mathbb{E} \left[ \left( \eta_{t-1}( \beta' + \tau_{t-1} Z' ) - \beta_c' \right)^2 \right]$,
where $\beta_c'$ is the center coordinate of $\beta'$. For $i\in\Gamma^{\edge}$ with $\Gamma^{\edge}$ defined in \eqref{eq:def_edge_index_set}, it is not necessarily true that $\beta_{\Lambda_i}\overset{d}{=}\beta'$ since, by definition \eqref{eq:def_missing_entry}, some entries of $\beta_{\Lambda_i}$ are defined as the average of other entries. 

The explicit expression for the definition of $\sigma_t^2$ in \eqref{eq:taut_sigmat_def} is different when considering $\Gamma\subset\mathbb{Z}^p$ for different $p$ values, as the size of the set $\Gamma^{\edge}$ depends on the dimension. In the following, we provide explicit expressions for $\sigma_t^2$ for the cases $p=1,2$, but in the proof we will use the general expression given in \eqref{eq:taut_sigmat_def} for brevity. We emphasize that the definition of the state evolution sequence in \eqref{eq:taut_sigmat_def} only uses the marginal distribution $\mu_{\Lambda}$ (or $\beta'\in E^\Lambda$) instead of the joint distribution $\mu$ (or $\beta\in E^{\Gamma}$), as demonstrated in the two examples below in \eqref{eq:SE1D} and \eqref{eq:SE2D}.

\textbf{Examples for explicit expressions for $\sigma_t^2$:} Let $\beta_c'$ be the center coordinate of $\beta'\in E^\Lambda$ and $\Lambda_{c}$ the window $\Lambda\subset\mathbb{Z}^p$ translated with center $c\in\mathbb{Z}^p$.  Recall that $\Lambda$ is the $p$-dimensional cube with length $(2k+1)$ in each of the $p$ dimensions.  Then we have $\beta'=\beta'_{\Lambda_c}$ and when we consider shifts $\beta'_{\Lambda_{c + \ell}}$ for $\ell \in \{-k, -k + 1, \ldots, k-1, k\}$ we, analogous to the definition in \eqref{eq:def_missing_entry}, define ``missing" entries to be replaced by the average of the existing entries.  (Note that since $\beta'$ is exactly of size $\Lambda$, thus for any $\ell \neq 0$, there will be ``missing" entries.)  For example, when $p=1$,
\begin{equation*}
\begin{split}
\beta'_{\Lambda_{c}} &= (\, \beta'_1 \,,\, \beta'_2 \,, \, \ldots \,,\, \beta'_{2k+1} \, ),\\
\beta'_{\Lambda_{c-2}}&= (\,\text{avg} \,,\, \text{avg} \,,\, \beta'_1 \,,\, \beta'_2 \,,\, \ldots \,,\, \beta'_{2k-1} \,),
\end{split}
\end{equation*}
where $\text{avg} = \frac{1}{2k-1} \sum_{i=1}^{2k-1} \beta'_i$.  Generalizing, we have $\beta'_{\Lambda_{c+\ell}} ~\in~\mathbb{R}^{2k+1}$ with 
\ben
\beta'_{\Lambda_{c+\ell}} \!=\! \begin{cases}
&\hspace*{-0.15in}\Big(\frac{1}{2k+1+\ell} \sum_{i=1}^{2k+1+\ell} \beta'_i, \ldots ,\\
&\hspace*{-0.15in}\frac{1}{2k+1+\ell} \sum_{i=1}^{2k+1+\ell} \beta'_i ,  \beta'_1 , \beta'_2 , \ldots, \beta'_{2k+1 + \ell}\Big)  \text{ if } \ell < 0, \\
&\hspace*{-0.15in}\Big(\beta'_1 \,,\, \beta'_2\,,\, \ldots \,,\, \beta'_{2k+1} \Big) \text{ if } \ell = 0, \\
&\hspace*{-0.15in}\Big(\beta'_{1+\ell} , \beta'_{2+\ell} , \ldots , \beta'_{2k+1} , \\
&\hspace*{-0.15in}\frac{1}{2k+1-\ell} \sum_{i=1 + \ell}^{2k+1} \beta'_i , \ldots ,  \frac{1}{2k+1-\ell} \sum_{i=1+\ell}^{2k+1} \beta'_i \Big)  \text{ if } \ell > 0.
\end{cases}
\een
The same idea can be extended when $p>1$.

For the case $p=1$, we note that $\Gamma^{\midd} = \{k+1, k+2, \ldots, N-k-1\}$ and $\Gamma^{\edge}= \{1, 2, \ldots, k\} \cup \{N-k, N-k+1, \ldots, N\}$, hence $|\Gamma^{\midd}|=N-2k$ and $|\Gamma^{\edge}|=2k$.  Therefore, we have
\begin{equation}
\begin{split}
&\sigma_t^2 =\frac{(N-2k)}{\delta N}  \mathbb{E} \left[ \left( \eta_{t-1}( \beta' + \tau_{t-1} Z' ) - \beta_c' \right)^2 \right]\\
&+ \frac{1}{\delta N}  \sum_{\ell \in \mc{K}_0} \mathbb{E} \left[ \left(\eta_{t-1}( [\beta' + \tau_{t-1} Z']_{\Lambda_{c+\ell}} ) - \beta_{c+\ell}' \right)^2\right],
\end{split}
\label{eq:SE1D}
\end{equation}
where $\mc{K}_0 = \{-k, \ldots, -1\} \cup \{1, \ldots, k\}$.
In the above the first term corresponds to the $N-2k$ middle indices, while the second term sums over $2k$ terms corresponding to all the possible edge cases.

For the case $p=2$, we note that $\Gamma^{\midd} = \{(i,j)  \, \lvert \, k+1 \leq i, j \leq N-k+1\}$, hence $|\Gamma^{\midd}|=(N-2k)^2$.  Here we note $\ell = (\ell_1, \ell_2) \in \{-k , -k+1, \ldots, k-1, k\}\times \{-k , -k+1, \ldots, k-1, k\}$.  Therefore,

\begin{align}
&\sigma_t^2 =  \frac{(N-2k)^2}{\delta N^2}  \mathbb{E} \left[ \left( \eta_{t-1}( \beta' + \tau_{t-1} Z' ) - \beta_c' \right)^2 \right]\nonumber\\
&+  \frac{1}{\delta N^2} \sum_{\ell_1, \ell_2 \in \mc{K}_0 }\mathbb{E} \left[ \left(\eta_{t-1}( [\beta' + \tau_{t-1} Z']_{\Lambda_{c+\ell}} ) - \beta_{c+\ell}' \right)^2\right] \nonumber\\
&+  \frac{(N-2k)}{\delta N^2} \sum_{\substack{\ell_1 \in \mc{K}_0 \\ \ell_2 = 0} }\mathbb{E} \left[ \left(\eta_{t-1}( [\beta' + \tau_{t-1} Z']_{\Lambda_{c+\ell}} ) - \beta_{c+\ell}' \right)^2\right] \nonumber\\
&+  \frac{(N-2k)}{\delta N^2}\sum_{\substack{\ell_2 \in \mc{K}_0 \\ \ell_1 = 0} }\mathbb{E} \left[ \left(\eta_{t-1}( [\beta' + \tau_{t-1} Z']_{\Lambda_{c+\ell}} ) - \beta_{c+l}' \right)^2\right],
\label{eq:SE2D}
\end{align}
where we notice that there are $(2k)^2$ terms in the second summand, $2k$ terms in the third and fourth summands, and $\frac{(N-2k)^2}{N^2}+\frac{(2k)^2}{N^2}+\frac{2k(N-2k)}{N^2}+\frac{2k(N-2k)}{N^2}=1$.  Again, in the above the first term sums over all the middle indices.  In this case, the second term corresponds to the corner edge cases, while the third and fourth terms correspond to the edge cases in one dimension only. We note that $\sigma_t^2$ is a function of $N$, but do not explicitly represent this relationship to simplify the notation. Moreover, for fixed $k$, the terms $\frac{(2k)^2}{N^2}$, $\frac{2k(N-2k)}{N^2}$, and $\frac{2k(N-2k)}{N^2}$ vanish as $N$ goes to infinity. Therefore, we have $\lim_{N\to\infty}\sigma_t^2(N)=\frac{1}{\delta}\mathbb{E} [ ( \eta_{t-1}( \beta' + \tau_{t-1} Z' ) - \beta_c' )^2 ]$.

Similar to \cite{RushV18}, our performance guarantee, Theorem \ref{thm:main_amp_perf}, is a concentration inequality for PL(2) loss functions at any fixed iteration $t<T^*$, where $T^*$ is the first iteration when either $(\sigma_t^\perp)^2$ or $(\tau_t^\perp)^2$ defined in \eqref{eq:sigperp_defs} is smaller than a predefined quantity $\hat{\epsilon}$. The precise definition of $(\sigma_t^\perp)^2$ and $(\tau_t^\perp)^2$ is deferred to Section \ref{subsec:concvals}. For now, we can understand $(\sigma_t^\perp)^2$ (respectively, $(\tau_t^\perp)^2$) as a number that quantifies (in a probabilistic sense) how close an estimate $\beta^t$ (respectively, a residual $z^t$) is to the subspace spanned by the previous estimates $\{\beta^s\}_{s<t}$ (respectively, the previous residuals $\{z^s\}_{s<t}$). In the special case where $\{\eta_t\}_{t\geq 0}$ are Bayes-optimal conditional expectation denoisers, it can be shown that small $(\sigma_t^\perp)^2$ implies that the difference between $\sigma_t^2$ and $\sigma_{t-1}^2$ is small \cite{RushV18}.

\begin{thm}\label{thm:main_amp_perf}
Under the assumptions stated in Section \ref{subsec:def_assumption}, and for fixed half window-size $k>0$, then for any (order-$2$) pseudo-Lipschitz function $\phi: \mathbb{R}^{2} \rightarrow \mathbb{R}$, $\e \in (0,1)$, and  $0\leq t< T^*$,

\begin{align}
&P\Big(\Big \lvert \frac{1}{|\Gamma|}\sum_{i\in\Gamma}\Big(\phi(\beta^{t+1}_{i},\beta_{i}) - \mathbb{E}[\phi(\eta_{t}([\beta +\tau_t Z]_{\Lambda_i}),\beta_i)]\Big) \Big \lvert \geq \e\Big)\nonumber\\
&\hspace*{2in} \leq K_{k,t} e^{-\kappa_{k,t} n \e^2},
\label{eq:PL_conc_result}
\end{align}
where $\beta \in E^{\Gamma} \sim \mu$, $Z \in\mathbb{R}^{\Gamma}$  has i.i.d.\ standard normal entries and is independent of $\beta$, and the deterministic quantity $\tau_t$ is defined in \eqref{eq:taut_sigmat_def}. The constants $K_{k,t}, \kappa_{k,t} > 0$ do not depend on $n$ or $\e$, but do depend on $k$ and $t$.  Their values are not explicitly specified.
\end{thm}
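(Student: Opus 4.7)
The plan is to follow the now-standard conditioning technique of Bolthausen, as developed for rigorous finite-sample AMP analysis by Bayati--Montanari and Rush--Venkataramanan, and adapt it to the non-separable, MRF setting considered here. Concretely, Theorem \ref{thm:main_amp_perf} will be deduced as a specialization of a more general inductive technical lemma (the announced Lemma \ref{lem:main_lem}), which asserts exponential-in-$n$ concentration for a whole family of scalar quantities built from the AMP iterates $\{\beta^s, z^s\}_{s \leq t}$: squared norms $\frac{1}{n}\|z^t\|^2$, cross inner products $\frac{1}{n}\langle z^s, z^t\rangle$, cross inner products on the signal side involving $\beta^{s+1}$ and $\beta^{t+1}$, and pseudo-Lipschitz averages of the form $\frac{1}{|\Gamma|}\sum_i \phi(\beta^{t+1}_i, \beta_i)$. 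The specific PL(2) average in \eqref{eq:PL_conc_result} is one instance of the last family, so once the lemma is in place the theorem follows by a union bound across the iterations $s \leq t$, yielding constants $K_{k,t},\kappa_{k,t}$ that depend on $k$ and $t$ but not on $n$ or $\e$.

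The induction step is executed by conditioning on the $\sigma$-algebra generated by the iterates up to time $t$ (together with $w$), under which the matrix product $A^* z^t$ decomposes into a deterministic component lying in the span of previously observed directions plus an independent Gaussian component whose variance is governed by $\tau_t^2$ from \eqref{eq:taut_sigmat_def}. Writing $\mathcal{V}^{-1}(A^*z^t) + \beta^t$ in this form, and using the sliding-window representation $\beta^{t+1}_i = \eta_t([\mathcal{V}^{-1}(A^*z^t) + \beta^t]_{\Lambda_i})$ together with the extension operators $\mathcal{T}_i$ from \eqref{eq:def_T}, one reduces the concentration of any sliding-window PL average to (i) concentration of the idealized Gaussian-in-the-denoiser sum $\frac{1}{|\Gamma|}\sum_i \phi(\eta_t([\beta+\tau_t Z]_{\Lambda_i}),\beta_i)$ around its expectation, plus (ii) Lipschitz-type control of the discrepancy between the true effective observation and the idealized $\beta + \tau_t Z$ form. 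The averaging convention in \eqref{eq:def_missing_entry} is used crucially here: it ensures that each $\beta^{t+1}_{\Lambda_i}$ is a Lipschitz function of the length-$|\Gamma|$ effective observation, preserving the global Lipschitz structure that the concentration step exploits.

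The principal obstacle, and the main point of departure from \cite{RushV18}, is step (i): in the i.i.d.\ subgaussian case one invokes concentration of sums of pseudo-Lipschitz functions of independent coordinates, but here the inputs $\beta_{\Lambda_i}$ overlap across $i$ and, moreover, the entries of $\beta$ are not independent. The Dobrushin uniqueness condition and its transpose, imposed in Section \ref{subsec:def_assumption}, are precisely what permit a Kulske-type transportation/martingale inequality (Lemma \ref{lem:kulske}) to handle the MRF dependence, and Lemma \ref{lem:PL_MRF_conc} packages this into concentration of pseudo-Lipschitz sums with subgaussian tails of the correct order in $n$. A secondary but nontrivial bookkeeping task is the edge correction: because $\beta_{\Lambda_i}$ for $i\in\Gamma^{\edge}$ is not distributed as $\beta'$, the state evolution \eqref{eq:taut_sigmat_def} splits into a bulk term from $\Gamma^{\midd}$ plus $|\Gamma^{\edge}|$ distinct edge expectations, as exhibited in \eqref{eq:SE1D} and \eqref{eq:SE2D}; these must be carried through every induction step, even though $|\Gamma^{\edge}|/|\Gamma| = O(1/N)$ makes their aggregate contribution vanish in the limit.

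Once these ingredients are in place, the inductive proof of Lemma \ref{lem:main_lem} proceeds in the conventional way: assume concentration for all quantities at iterations $\leq t-1$, condition as above, peel off the deterministic/projected part using the assumed concentration, apply the MRF-pseudo-Lipschitz concentration together with standard Gaussian concentration to the remaining Gaussian-in-the-denoiser average, and close the induction at iteration $t$. Taking the relevant $\phi$ and $t$ in the PL average statement then yields \eqref{eq:PL_conc_result} and completes the proof of Theorem \ref{thm:main_amp_perf}.
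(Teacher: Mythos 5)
Your proposal follows essentially the same route as the paper: Theorem \ref{thm:main_amp_perf} is obtained by specializing part (b) of the general technical lemma (proved by induction via the Bolthausen-style conditioning of $A$, with the Dobrushin/K\"ulske concentration of Lemma \ref{lem:PL_MRF_conc} replacing the i.i.d.\ arguments of \cite{RushV18} and the $\mc{T}_i$ edge extensions carried through) to the PL(2) function $\phi_h(\hat{h}^{t+1}_{\Lambda_i},\beta_{\Lambda_i}) := \phi(\eta_t(\beta_{\Lambda_i}-\hat{h}^{t+1}_{\Lambda_i}),\beta_i)$, using $\beta^{t+1}_i=\eta_t(\beta_{\Lambda_i}-\hat{h}^{t+1}_{\Lambda_i})$. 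The only cosmetic difference is that the final deduction from the lemma is a single application of its PL-average clause at iteration $t+1$ rather than a union bound over $s\leq t$, which in the paper lives inside the induction.
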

\begin{proof}
See Section \ref{sec:amp_proof}.
\end{proof}

\textbf{\emph{Remarks}}:

(1) The probability in \eqref{eq:PL_conc_result} is w.r.t.\ the product measure on the space of the matrix $A$, signal $\beta$, and noise $w$.

(2) By choosing the following PL(2) loss function, $\phi(a,b) = (a - b)^2$, Theorem \ref{thm:main_amp_perf} gives the following concentration result for the mean squared error of the estimates.  For all $t \geq 0$,
\ben
P\Big(\Big \lvert \frac{1 }{|\Gamma|}\|\beta^{t+1}- \beta\|^2  - \delta \sigma_{t+1}^2\Big\lvert \geq \e \Big)\leq K_{k,t} e^{-\kappa_{k,t} n \e^2},
\een
with $\sigma_{t+1}^2$ defined in \eqref{eq:taut_sigmat_def}. 

% !TEX root = main.tex

\subsection{Numerical Examples}
\label{subsec:numerical}

Before moving to the proof of Theorem \ref{thm:main_amp_perf}, we first demonstrate the effectiveness of the AMP algorithm with sliding-window denoisers when used to reconstruct an image $\beta_0$ from its linear measurements acquired according to \eqref{eq:model1}. We verify that state evolution accurately tracks the normalized estimation error of AMP, as is guaranteed by Theorem \ref{thm:main_amp_perf}. While we use squared error as the error metric in our examples, which corresponds to the case where the PL(2) loss function $\phi$ in Theorem \ref{thm:main_amp_perf} is defined as $\phi(a,b):=(a-b)^2$, we remind the reader that Theorem \ref{thm:main_amp_perf} also supports other PL(2) loss functions. Moreover, we apply AMP with sliding-window denoisers to reconstruct texture images, which are known to be well-modeled by MRFs in many cases \cite{dubes1989,cross1983}.

\subsubsection{Verification of state evolution} \label{subsubseq:verification}
We consider a class of stationary MRFs on $\mathbb{Z}^2$ whose neighborhood is defined as the eight-nearest neighbors, meaning this is a $2^{nd}$-order MRF per the definition in Section \ref{subsec:def_assumption}. The joint distribution of such an MRF on any finite $M\times N$ rectangular lattice in $\mathbb{Z}^2$ has the following expression \cite{Champagnat1998}:
\begin{equation}
\begin{split}
&\mu(x)=P(\beta=x)\\
&= \frac{\prod_{m=1}^{M-1}\prod_{n=1}^{N-1} \left[\begin{matrix}
x_{m,n} & x_{m,n+1}\\
x_{m+1,n} & x_{m+1,n+1}
\end{matrix}\right]\prod_{m=2}^{M-1}\prod_{n=2}^{N-1}\left[\begin{matrix} x_{m,n}\end{matrix}\right]}{\prod_{m=2}^{M-1}\prod_{n=1}^{N-1}
\left[\begin{matrix}
x_{m,n} & x_{m,n+1}
\end{matrix}\right]
\prod_{m=1}^{M-1}\prod_{n=2}^{N-1}
\left[\begin{matrix}
x_{m,n}\\ 
x_{m+1,n}
\end{matrix}\right]},
\end{split}
\label{eq:mu}
\end{equation}
where we follow the notation in \cite{Champagnat1998} for the generic measure $\left[\begin{matrix}
x_{m,n} & x_{m,n+1}\\
x_{m+1,n} & x_{m+1,n+1}
\end{matrix}\right]$ defined as
\begin{equation*}
\begin{split}
&\left[\begin{matrix}
x_{m,n} & x_{m,n+1}\\
x_{m+1,n} & x_{m+1,n+1}
\end{matrix}\right]\\
&:=P\Big(\beta_{m,n}=x_{m,n},\beta_{m,n+1}=x_{m,n+1},\\
&\qquad\qquad\qquad\beta_{m+1,n}=x_{m+1,n},\beta_{m+1,n+1}=x_{m+1,n+1}\Big),
\end{split}
\end{equation*}
and the conditional distribution of the element in the box given the element(s) not in the box:
\begin{equation*}
\begin{split}
&\left[\begin{matrix}
x_{m,n} & x_{m,n+1}\\
x_{m+1,n} & \fbox{$x_{m+1,n+1}$}
\end{matrix}\right]\\
&:=P\Big(\beta_{m+1,n+1}=x_{m+1,n+1}\,\Big\vert\\
&\qquad\beta_{m,n}=x_{m,n},\beta_{m+1,n}=x_{m+1,n},\beta_{m,n+1}=x_{m,n+1}\Big). 
\end{split}
\end{equation*} 
The generic measure needs to satisfy some consistency conditions to ensure the Markovian property and stationarity of the MRF on a finite grid; details can be found in \cite{Champagnat1998}.
For convenience, in simulations we use a $\Pi_+$ Binary MRF as defined in \cite[Definition 7]{Champagnat1998}, for which the generic measure is conveniently parameterized by four parameters, namely,
\begin{equation}
[1\quad\framebox[1.5\width]{0}]=p,\quad [0\quad\framebox[1.5\width]{1}]=q,\quad \left[\begin{matrix}
0 & 0\\
\framebox[1.5\width]{1} & 0
\end{matrix}\right]=r,\quad \left[\begin{matrix}
1 & 1\\
1 & \framebox[1.5\width]{0}
\end{matrix}\right]=s.
\label{eq:MRF_para}
\end{equation} 
In the simulations, we set $\{p=0.4,q=0.5,r=0.01,s=0.4\}$.
Using \eqref{eq:Dob_matrix_def} and \eqref{eq:tv_countable}, it can be checked that the distribution measure of this MRF satisfies the Dobrushin uniqueness condition.

\begin{figure*}[t!]
    \centering
    \begin{subfigure}[t]{0.25\textwidth}
        \centering
        \includegraphics[height=1\textwidth]{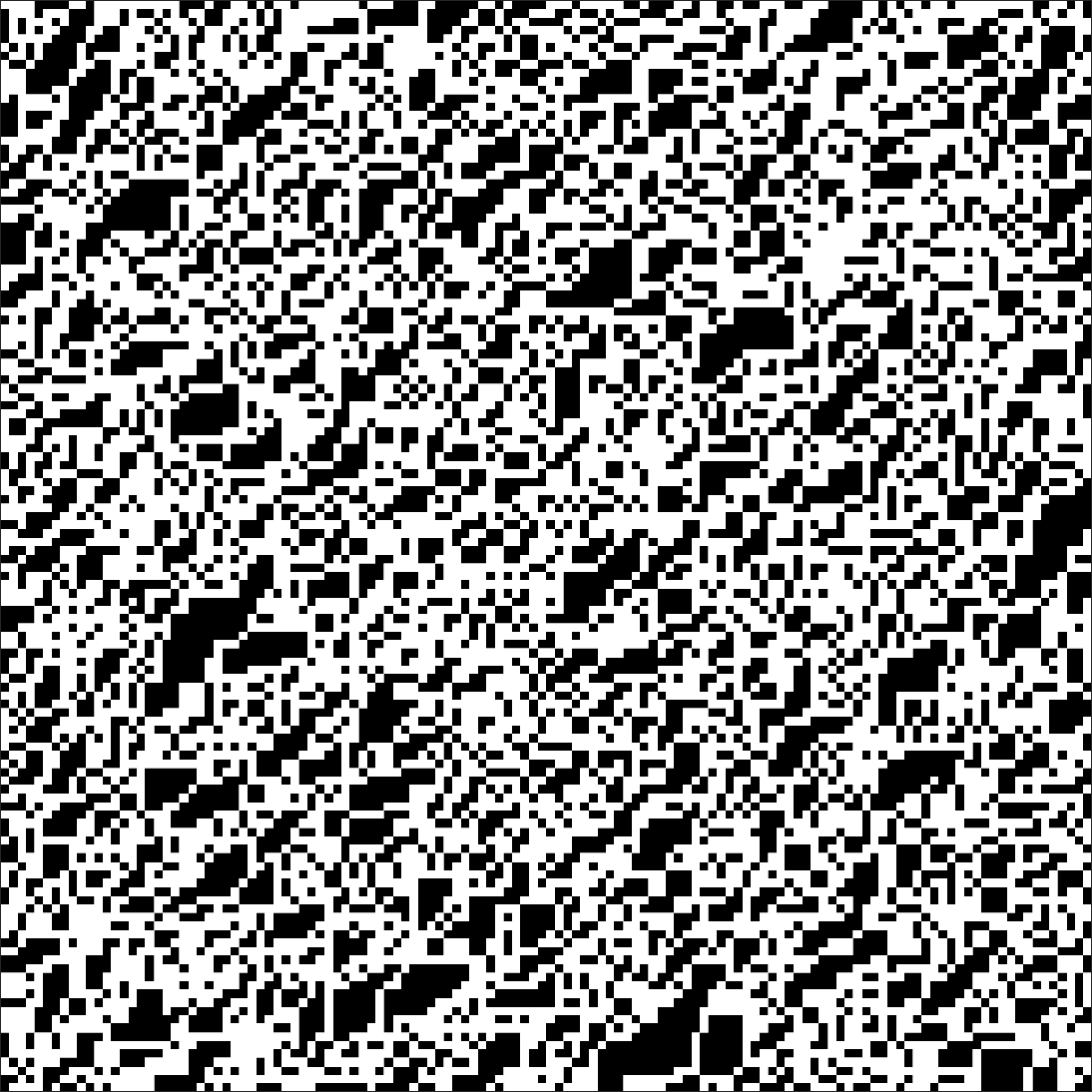}
    \end{subfigure}%
    ~ 
    \begin{subfigure}[t]{0.25\textwidth}
        \centering
        \includegraphics[height=1\textwidth]{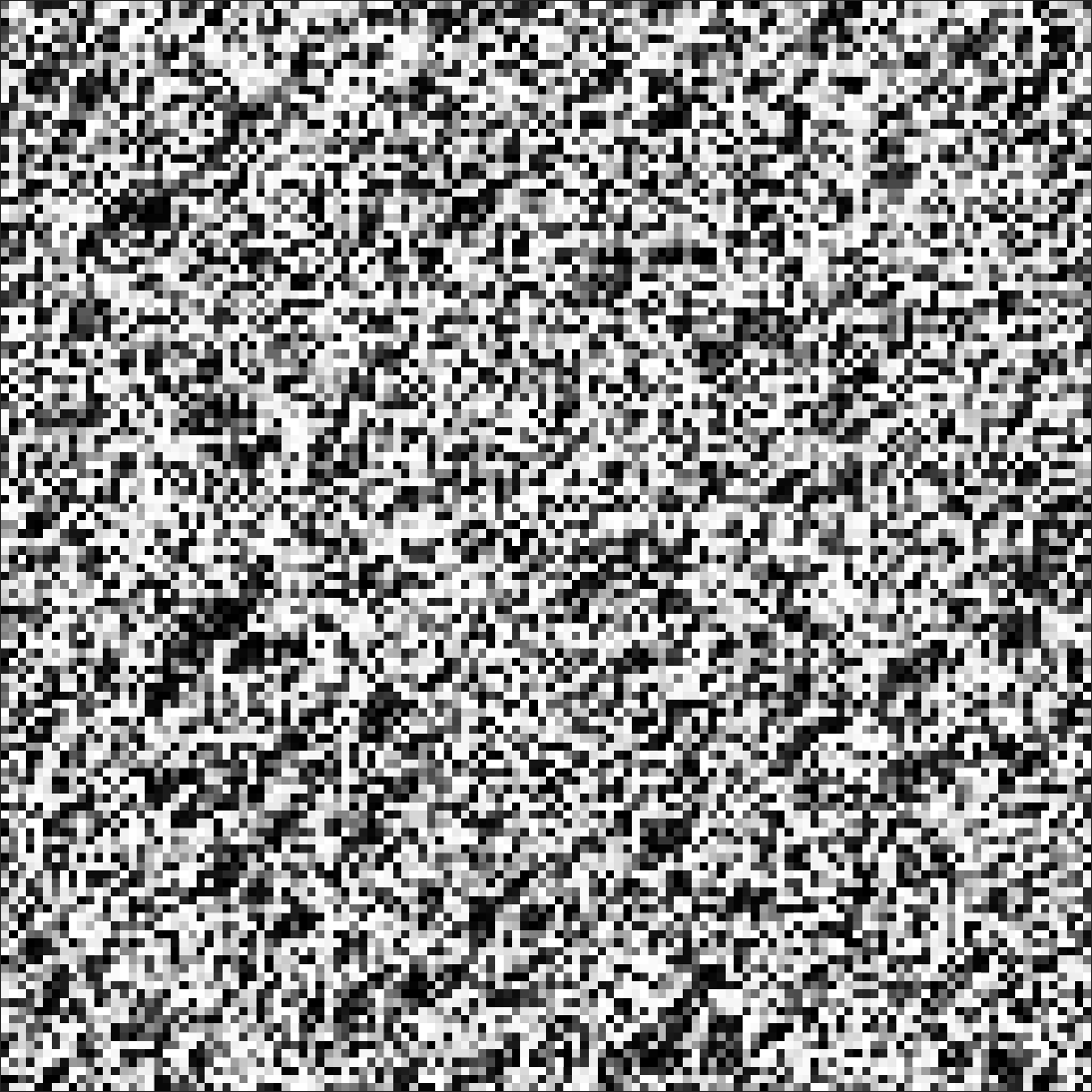}
    \end{subfigure}
    ~
    \begin{subfigure}[t]{0.25\textwidth}
        \centering
        \includegraphics[height=1\textwidth]{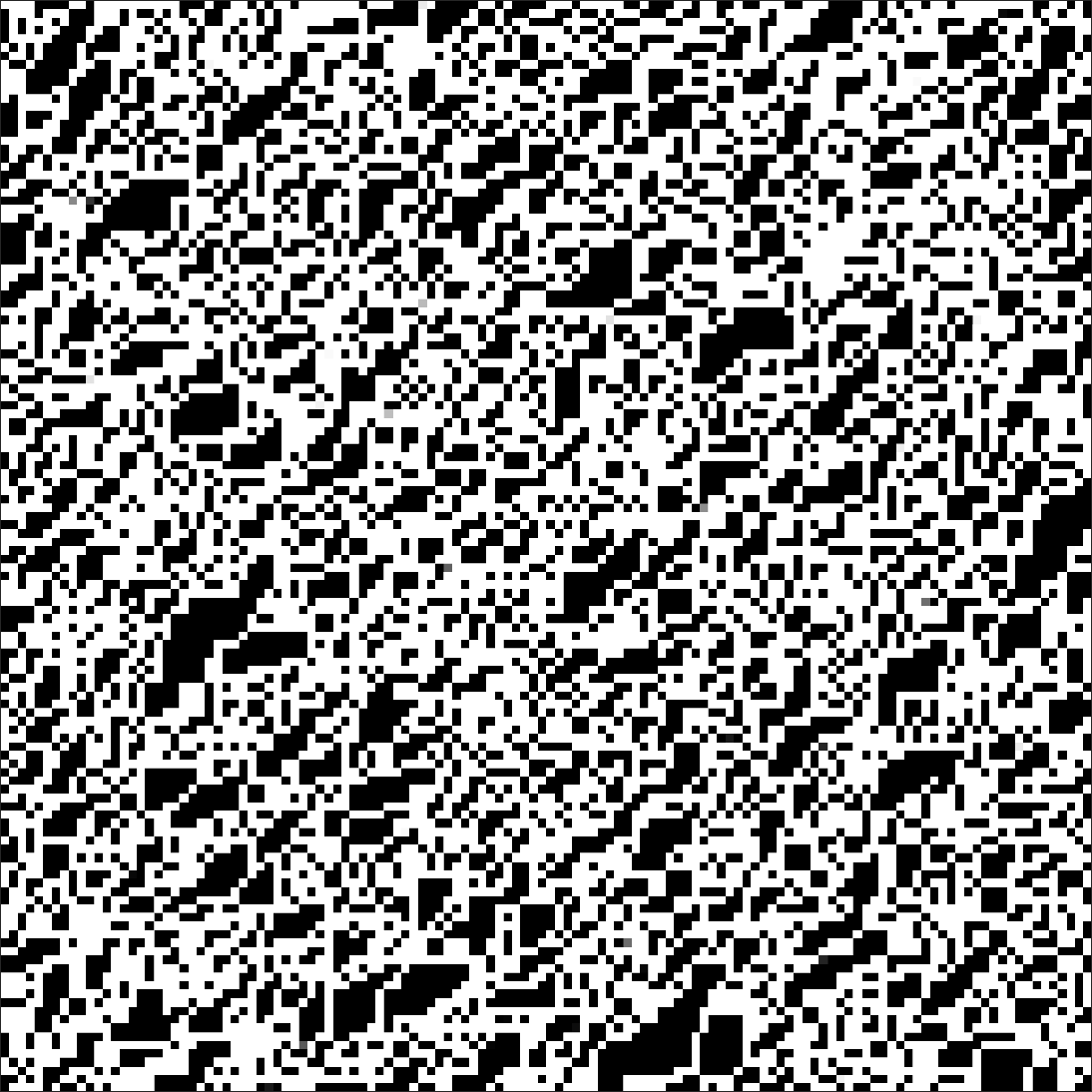}
    \end{subfigure}   
    \caption{From left to right: ground-truth image generated by the MRF described in Section \ref{subsubseq:verification}, image reconstructed by AMP with a separable Bayesian denoiser (computed from the incorrect assumption that the signal is generated from an i.i.d.\ Bernoulli distribution), and image reconstructed by AMP with a Bayesian sliding-window denoiser with $k=1$, hence $\Lambda=[3]\times [3]$. ($\Gamma = [128]\times [128]$, $\delta=0.5$, $\text{SNR}=17$ dB.)}
\label{fig:visual_comparison}
\end{figure*}

\begin{figure}[t!]
     \centering
     \includegraphics[width=0.45\textwidth]{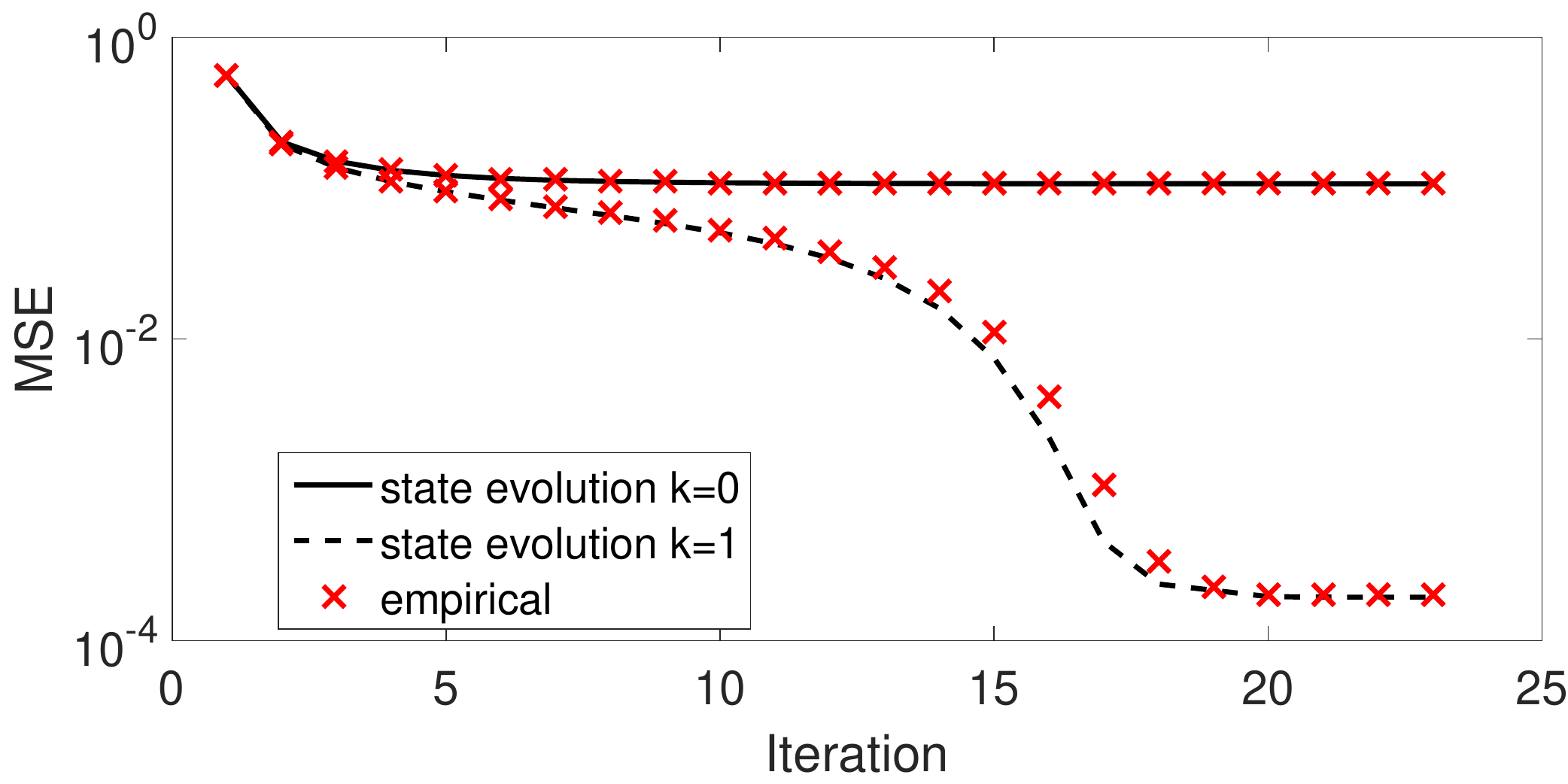}  
     \caption{Numerical verification that the empirical MSE achieved by AMP with sliding-window denoisers is tracked by state evolution. The empirical MSE is averaged over 50 realizations of the MRF (as described in Section \ref{subsubseq:verification}), measurement matrix, and measurement noise. ($\Gamma = [128]\times [128]$, $\delta=0.5$, $\text{SNR}=17$ dB.)}   
     \label{fig:se_track_empirical} 
\end{figure} 

As mentioned previously, an attractive property of AMP, which is formally stated in Theorem \ref{thm:main_amp_perf}, is the following: for large $n$ and $|\Gamma|$ and for $i\in\Gamma$, the observation vector $[A^*z^t + \beta^t]_{\Lambda_i}$ used as an input to the estimation function in \eqref{eq:amp2} is approximately distributed as $\beta' + \tau_t Z'$, where $\beta' \sim \mu_\Lambda$, $Z'$ has i.i.d.\ standard normal entries, independent of $\beta'$, and $\tau_t$ is defined in \eqref{eq:taut_sigmat_def}. With this property in mind, a natural choice of denoiser functions $\{\eta_t\}_{t\geq 0}$ are those that calculate the conditional expectation of the signal given the value of the input argument, which we refer to as Bayesian sliding-window denoisers.  Let $V_t=\beta' + \tau_t Z' $ and $v \in \mathbb{R}^{\Lambda}$, then for each $t \geq 0$ we define
%Suppose $v_i = [\mc{V}^{-1}(A^*z^t) + \beta^t]_{\Lambda_i} \in \mathbb{R}^{\Lambda}$, then for each $t \geq 0$ we set
\begin{align}
\eta_t(v) &:= \mathbb{E}\left[\, \beta_c' \, \left\vert \, V_t =  v \right. \,\right]\nonumber\\
&=P\left(\beta'_c=1|V_t=v\right)\nonumber=\frac{P\left( V_t=v, \beta'_c = 1\right)}{P(V_t=v)}\nonumber\\
& = \frac{\sum_{x_{\Lambda\setminus c} \in \{0,1\}^{\Lambda\setminus c}, x_c = 1} f_{V_t|\beta'}(v|x) \mu(x)}{\sum_{x\in\{0,1\}^\Lambda} f_{V_t|\beta'}(v|x) \mu(x)},\label{eq:bayes_denoiser}
\end{align}
where $x_c$ denotes the center coordinate of $x\in\mathbb{R}^\Lambda$, $x_{\Lambda\setminus c}$ denotes all coordinates in $x$ except the center, $f_{V_t|\beta'}(v|x)=\prod_{i\in\Lambda} \frac{1}{\sqrt{2\pi}\tau_t}\exp\left(- \frac{(v_i - \beta'_i)^2}{2\tau_t^2}\right)$ since coordinates of $Z'$ are i.i.d.\ normal, and $\mu(x)$ is computed according to \eqref{eq:mu} with $M=N=2k+1$ by using \eqref{eq:MRF_para} and the property of $\Pi_+$ Binary MRF given in \cite[Definition 7]{Champagnat1998}.  Figure \ref{fig:se_track_empirical} shows that the MSE achieved by AMP with the non-separable sliding-window denoiser defined above is tracked by state evolution at every iteration.
 
Notice that when $k=0$, the denoisers $\{\eta_t\}_{t \geq 0}$ are separable and since the empirical distribution of $\beta_0$ converges to the stationary probability distribution $\mu_1$ on 
$E\subset\mathbb{R}$, the state evolution analysis for AMP with separable denoisers ($k=0$) was justified by Bayati and Montanari \cite{BayMont11}. However, it can be seen in Figures \ref{fig:visual_comparison} and \ref{fig:se_track_empirical} that the MSE achieved by the separable denoiser ($k=0$) is significantly higher (worse) than that achieved by the non-separable denoisers ($k=1$).

\begin{figure*}[t!]
\centering
\includegraphics[width=0.8\textwidth]{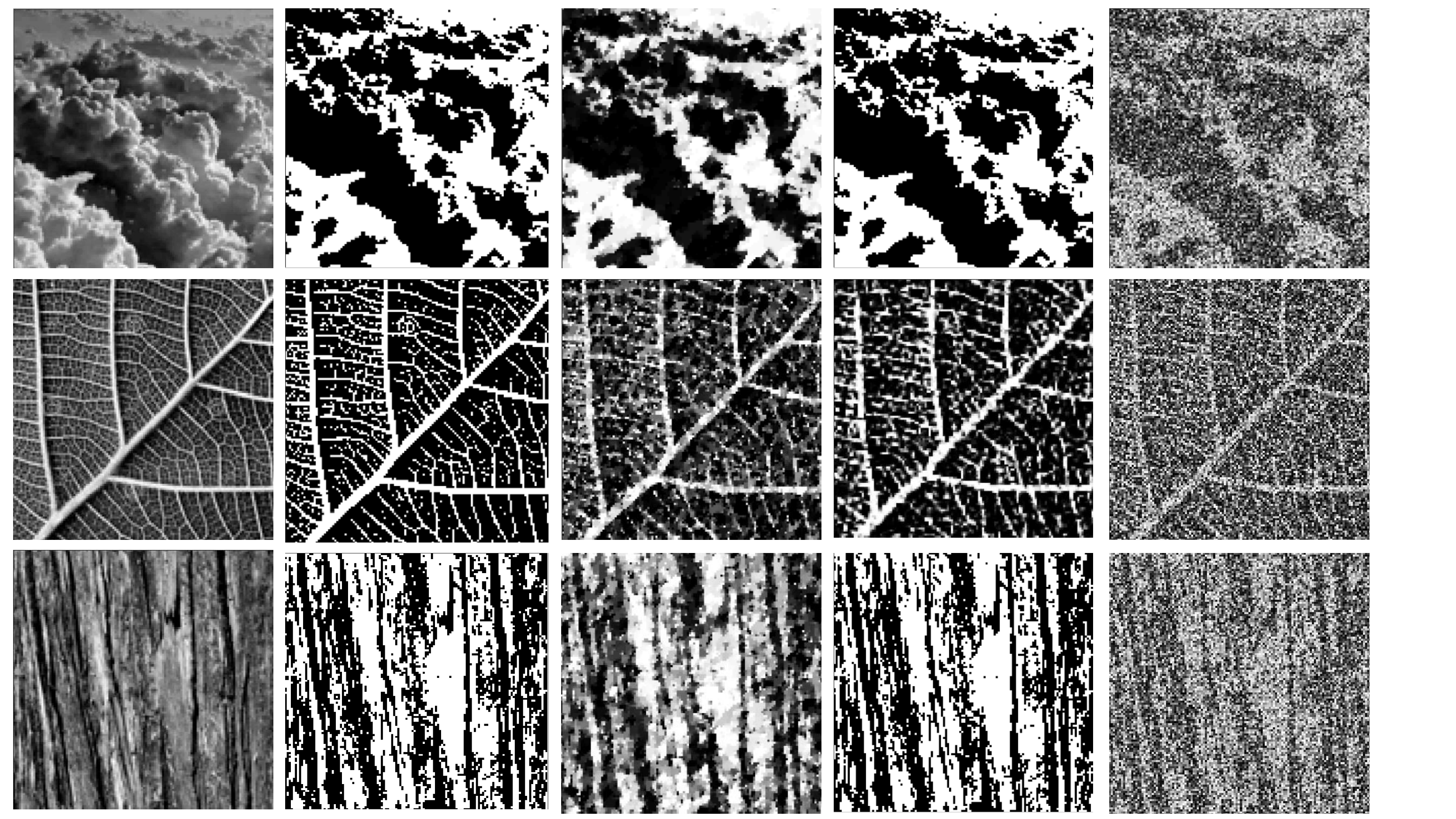}
\caption{Reconstruction of texture images using AMP with different denoisers. From left to right: original gray level images, binary ground-truth images, images reconstructed by AMP with a total variation denoiser \cite{Beck2009FISTA}, non-separable Bayesian sliding-window denoiser (MRF prior, $k=1$), and separable Bayesian denoiser (Bernoulli prior), respectively. From top to bottom: images of a cloud, a leaf, and wood, respectively. ($\Gamma = [128]\times [128]$, $\delta=0.3$, $\text{SNR}=20$ dB.)}
\label{fig:texture}
\end{figure*}

\subsubsection{Texture Image Reconstruction} \label{subsubseq:texture}
We now use the Bayesian sliding-window denoiser defined in \eqref{eq:bayes_denoiser} to reconstruct binary texture images shown in Figure \ref{fig:texture}. The MRF prior is the same type as described in Section \ref{subsubseq:verification}, namely the $\Pi_+$ Binary MRF, but we set the parameters $\{p=0.18, q=0.16, r=0.034, s=0.01\}$. Note that while it is possible to learn an MRF model for each of the images using well-established MRF learning algorithms, we do not include this procedure in our simulations since the study of texture image modeling is beyond the scope of this paper.  Moreover, the reconstruction results obtained using the simple MRF defined above are sufficiently satisfactory, despite the fact that the prior may be inaccurate. In Figure \ref{fig:texture}, we begin with natural images of a cloud, a leaf, and wood ($1^{st}$ column) and then use thresholding to generate binary test images ($2^{nd}$ column). In addition to presenting the reconstructed images obtained by the Bayesian sliding-window denoisers with $k=1$ ($4^{th}$ column) and $k=0$ ($5^{th}$ column), respectively, we also present those obtained by AMP with a total variation denoiser \cite{Beck2009FISTA} as a baseline approach ($3^{th}$ column).

% !TEX root = main.tex

\section{Proof of Theorem \ref{thm:main_amp_perf}} \label{sec:amp_proof}

The proof of Theorem \ref{thm:main_amp_perf} follows 
the work of Rush and Venkataramanan~\cite{RushV18},
with modifications for the dependent structure of the unknown vector $\beta$ in \eqref{eq:model1}.  
For this reason, we use much of the same notation.
We prove Theorem \ref{thm:main_amp_perf} using a technical lemma, Lemma \ref{lem:main_lem}, which corresponds to \cite[Lemma 6]{RushV18}. 
Before stating the lemma, we cover some preliminary results and establish notation to be used in its proof. 

% !TEX root = main.tex

\subsection{Proof Notation} \label{subsec:defs}
%As mentioned above, in order to streamline the proof of our technical lemma we use notation similar to \cite{RushV16} and consequently to \cite{BayMont11}.  
As in the previous work by Bayati and Montanari \cite{BayMont11}, as well as the work by Rush and Venkataramanan \cite{RushV18}, the technical lemma is proved for a more general recursion, with AMP being a specific example of the general recursion as shown below. The connection between AMP and the general recursion will be explained in \eqref{eq:hqbm_def_AMP} and \eqref{eq:fg_def}.

Fix the half-window size $0\leq k\leq(N-1)/2$, an integer. Let $\{f_t\}_{t\geq 0}:\mathbb{R}^{\Lambda\times\Lambda}\to\mathbb{R}$ and $\{g_t\}_{t\geq 0}:\mathbb{R}^2\to\mathbb{R}$ be sequences of Lipschitz functions. Specifically, the arguments of $f_t$ are two variables in  $\mathbb{R}^\Lambda$, for example, for $x,y\in\mathbb{R}^\Lambda$, we write $f_t(x,y)$ and call $x$ the first argument of $f_t$. Given noise $w \in \mathbb{R}^n$ and unknown signal $\beta \in E^{\Gamma}$, define vectors $h^{t+1}, q^{t+1} \in \mathbb{R}^{|\Gamma|}$ and $b^t, m^t \in \mathbb{R}^n$, as well as arrays $\hat{h}^{t+1}, \hat{q}^{t+1}\in\mathbb{R}^{\Gamma}$ (for which $h^{t+1}$ and $q^{t+1}$ are the vectorized versions) for $t \geq 0$ recursively as follows.  Starting with initial condition $\hat{q}^0 \in \mathbb{R}^{\Gamma}$:
\begin{equation}
\begin{split}
&h^{t+1}:=A^*m^t-\xi_t q^t, \quad q^t:=\mc{V}(\hat{q}^t),\\
&\hat{h}^{t+1}:=\mc{V}^{-1}(h^{t+1}), \quad \hat{q}_i^t:=f_t([\hat{h}^t]_{\Lambda_i},\, \beta_{\Lambda_i}), \text{ for all } i\in\Gamma, \\
&b^t:=Aq^t-\lambda_tm^{t-1},\quad m_i^t:=g_t(b_i^t,w_i), \text{ for all } i\in [n], %m^t:=g_t(b^t,w),
\label{eq:hqbm_def}
\end{split}
\end{equation}
with the scalars $\xi_t,\lambda_t$ defined as
\begin{equation}
\begin{split}
\xi_t &:= \frac{1}{n}\sum_{i=1}^n g_t'(b^t_i,w_i),\quad \lambda_t :=\frac{1}{n}\sum_{i\in\Gamma}f_t'([\hat{h}^t]_{\Lambda_i},\beta_{\Lambda_i}),
\end{split}
\label{eq:xi_lamb_def}
\end{equation}
where the derivative of $g_t$ is w.r.t.\ the first argument, and the derivative of $f_t$ is w.r.t.\ the center coordinate of the first argument. In the context of AMP, as made explicit in \eqref{eq:hqbm_def_AMP}, the terms $\hat{h}^{t+1}$ and $\hat{q}^t$ measure the error in the observation $\mc{V}^{-1}(A^*z^t) + \beta^t$ and the estimate $\beta^t$ at time $t$, respectively, (the error w.r.t.\ the true $\beta$).  The term $m^t$ measures the residual at time $t$ and the term $b^t$ is the difference between the noise and residual at time $t$.

Recall that the unknown vector $\beta\in E^{\Gamma}$ is assumed to have a stationary MRF prior with joint distribution measure $\mu$. 
%More precisely, $\beta$ is assumed to be an infinite Markov random field $\beta'$ restricted to a finite rectangular grid $\Gamma$. 
Let $\beta \in E^{\Gamma} \sim\mu$ and $\mathbf{0}\in\mathbb{R}^\Lambda$ be an all-zero array. Define
\begin{align}
\sigma_{\beta}^2 &:= \mathbb{E}[\beta_1^2], \label{eq:sigmabetadef} \\
\sigma_0^2 &:= \frac{1}{\delta |\Gamma|}\sum_{i\in\Gamma}\mathbb{E}[f_0^2(\mathbf{0},\beta_{\Lambda_i})] > 0. \label{eq:sigma0}
\end{align}
%for $\beta\in\mathbb{R}\sim\gamma_{\beta}$, $\beta \sim \pi$, and $\mathbf{0} \in \mathbb{R}^{2k+1}$ a vector of 0's.  We assume $\sigma_0^2$ is strictly positive.  
Further, for all $i\in\Gamma$ let
\be
\hat{q}_i^{0}  := f_0 (\mathbf{0}, \beta_{\Lambda_i}), \qquad \text{ and } \qquad q^{0} := \mc{V}(\hat{q}^{0}),
\label{eq:q0def}
\ee
and assume that there exist constants $K, \kappa > 0$ such that
\be
P\Big(\Big\lvert \frac{1}{n} \norm{q^0}^2 - \sigma_0^2 \Big \lvert \geq \e \Big) \leq K e^{-\kappa n \e^2}.
\label{eq:qassumption}
\ee
Define the state evolution scalars $\{\tau_t^2\}_{t \geq 0}$ and $\{\sigma_t^2\}_{t \geq 1}$ for the general recursion as follows, 
\be
\begin{split}
\tau_t^2 &:= \mathbb{E}[(g_{t}(\sigma_{t} Z, W))^2],\\
\sigma_t^2 &:= \frac{1}{\delta |\Gamma| } \sum_{i\in\Gamma}\mathbb{E}[(f_{t}(\tau_{t-1} \mbf{Z}_{\Lambda_i}, \beta_{\Lambda_i}))^2],
\end{split}
\label{eq:sigmatdef}
\ee
where random variables $W \sim p_w$ and $Z \sim \mc{N}(0,1)$ are independent and random arrays %vectors 
$\beta\in E^\Gamma \sim \mu$ and $\mbf{Z}\in\mathbb{R}^\Gamma$ with i.i.d.\ $\mc{N}(0,1)$ entries are also independent. We assume that both $\sigma_0^2$ and $\tau_0^2$ are strictly positive. The technical lemma will show that $\hat{h}^{t+1}$ can be approximated as i.i.d.\ $\mc{N}(0, \tau_t^2)$ in functions of interest for the problem, namely when used as an input to PL functions, and $b^t$ can be approximated as i.i.d.\ $\mc{N}(0, \sigma_t^2)$ in PL functions.  Moreover, it will be shown that the probability of the deviations of the quantities  $\frac{1}{n}\|m^t\|^2$ and $\frac{1}{n}\|\hat{q}^t\|^2$  from $\tau_t^2$ and $\sigma_t^2$, respectively, decay exponentially in $n$.

We note that the AMP algorithm introduced in \eqref{eq:amp1} and \eqref{eq:amp2} is a special case of the general recursion of \eqref{eq:hqbm_def} and \eqref{eq:xi_lamb_def}.  Indeed, define the following vectors  recursively for $t\geq 0$, starting with $\beta^0=0$ and $z^0=y$, 
\begin{equation}
\begin{split}
\hat{h}^{t+1} = \beta - (\mc{V}^{-1}(A^*z^t) + \beta^t), \qquad &  \hat{q}^t  = \beta^t - \beta, \\
b^t = w-z^t,\qquad & m^t  =-z^t.
\end{split}
\label{eq:hqbm_def_AMP}
\end{equation}
It can be verified that these vectors satisfy  \eqref{eq:hqbm_def} and \eqref{eq:xi_lamb_def} using Lipschitz functions
\be
f_t(a, \beta_{\Lambda_i}) = \eta_{t-1}(\beta_{\Lambda_i} - a) - \beta_i  \text{ and } g_t(b, w_i) = b-w_i,
\label{eq:fg_def}
\ee
where $ a \in\mathbb{R}^{\Lambda}$ and $b\in\mathbb{R}$.  Using the choice of $f_t, g_t$ given in \eqref{eq:fg_def} also yields the expressions for $\sigma_t^2, \tau_t^2$  given in \eqref{eq:taut_sigmat_def}.   In the remaining analysis, the general recursion given in \eqref{eq:hqbm_def} and \eqref{eq:xi_lamb_def} is used.  Note that in AMP, $q^0=-\beta$ and $\sigma_0^2 = \sigma_\beta^2/\delta$, hence, assumption \eqref{eq:qassumption} for AMP requires
\begin{equation}
P\Big(\Big \lvert \frac{1}{|\Gamma|} \norm{\beta}^2 - \sigma_\beta^2 \Big \lvert \geq \delta \e \Big) \leq K e^{-\kappa n \delta^2\e^2}.
\label{eq:beta0_assumption2}
\end{equation}
Under our assumptions for $\beta$ as stated in Section \ref{subsec:def_assumption}, we see that \eqref{eq:beta0_assumption2} is satisfied using Lemma \ref{lem:PL_MRF_conc} (Appendix \ref{app:conc_dependent}), since the function $f(x)=x^2$ is pseudo-Lipschitz.
Finally, note that if we assume $\sigma_\beta^2>0$ and $\delta<\infty$, then the condition of strict positivity of $\sigma_0^2$ and $\tau_0^2$ defined in \eqref{eq:sigmatdef} is satisfied.

%In what follows, the notation matches that of \cite{RushV16} but is repeated here for completeness.  We can write vector equations to represent the recursion as follows: for all $t \geq 0$,
%\begin{equation}
%b^{t} + \lambda_t m^{t-1} = A q^t, \quad \text{ and } \quad h^{t+1} + \xi_t q^t = A^* m^t.
%\label{eq:bmq}
%\end{equation}
%This yields matrix equations $X_t = A^* M_t$ and $Y_t =AQ_t,$ where we define the individual matrices as
%\be
%\label{eq:XYMQt}
%\begin{split}
%X_t  := [h^1 + \xi_0 q^0 \mid h^2+\xi_1 q^1 \mid \ldots \mid h^t + \xi_{t-1} q^{t-1}],   & \qquad  Q_t  :=  [q^0 \mid \ldots \mid q^{t-1} ],\\
%Y_t  := [b^0 \mid b^1 + \lambda_1 m^0 \mid \ldots \mid b^{t-1} + \lambda_{t-1} m^{t-2}], & \qquad M_t  := [m^0 \mid \ldots \mid m^{t-1} ] \\
% \Xi_{t}:= \text{diag}(\xi_0, \ldots, \xi_{t-1}) &\qquad H_t := [h^1 | \ldots | h^{t}], \\
% \Lambda_t := \text{diag}(\lambda_0, \ldots, \lambda_{t-1}) &\qquad  B_t := [b^0 | \ldots | b^{t-1}]. 
%\end{split}
%\ee

Let $[c_1 \mid c_2 \mid \ldots \mid c_k]$ denote a matrix with columns $c_1, \ldots, c_k$. For $t\geq 1$, define matrices
\begin{equation}
\begin{split}
&M_t  := [m^0 \mid \ldots \mid m^{t-1} ],\quad Q_t  :=  [q^0 \mid \ldots \mid q^{t-1} ], \\
&B_t := [b^0 | \ldots | b^{t-1}],\quad H_t := [h^1 | \ldots | h^{t}]. 
\end{split}
\label{eq:XYMQt}
\end{equation}
Moreover, $M_0$, $Q_0$, $B_0$, $H_0$ are defined to be the all-zero vector. 
 
%From the above matrix definitions we have the following matrix equations $Y_t = B_t + \Lambda_t [0 | M_{t-1}]$ and $X_t = H_t + \Xi_t Q_{t}.$

The values $m^t_{\|}$ and $q^t_{\|}$ are projections of $m^t$ and $q^t$ onto the column space of $M_t$ and $Q_t$, with $m^t_{\perp} := m^t - m^t_{\|},$ and $q^t_{\perp} := q^t - q^t_{\|}$ being the projections onto the orthogonal complements of $M_t$ and $Q_t$.  Finally, define the vectors
 \be 
 \alpha^t := (\alpha^t_0, \ldots, \alpha^t_{t-1})^*, \qquad  \gamma^t :=  (\gamma^t_0, \ldots, \gamma^t_{t-1})^*, 
 \label{eq:vec_alph_gam_conc}
 \ee 
 to be the coefficient vectors of the parallel projections, i.e.,
 \be
 m^t_{\| } := \sum_{i=0}^{t-1} \alpha^t_i m^i, \qquad  q^t_{\|} := \sum_{i=0}^{t-1} \gamma^t_i q^i.
 \label{eq:mtqt_par}
 \ee
The technical lemma, Lemma \ref{lem:main_lem},  shows that for large $n$, the entries of the vectors $\alpha^t$ and ${\gamma}^t$ concentrate to constant values, which are defined in the following section.  

 %%%%%%%%%%%%%%%%%%%%%%%%%%
%%%%%%%%%%%%%%%%%%%%%%%%%%

\subsection{Concentrating Constants} \label{subsec:concvals}

Recall that $\beta \in E^{\Gamma}$ is the unknown vector to be recovered and $w \in \mathbb{R}^n$  is the measurement noise.  
In this section we introduce the concentrating values for inner products of pairs of the vectors $\{h^t, m^t, q^t, b^t\}$ that are used in Lemma \ref{lem:main_lem}.

Let $\{\breve{Z}_t \}_{t \geq 0}$ be a sequence of zero-mean jointly Gaussian random variables taking values in $\mathbb{R}$,
and let  $\{ \tilde{\mathbf{Z}}_t \}_{t \geq 0}$ be a sequence of zero-mean jointly Gaussian random arrays taking values in $\mathbb{R}^{\Gamma}$. The covariance of the two random sequences is defined recursively as follows. For $r,t \geq 0$, $i,j\in\Gamma$,
\be 
\expec[\breve{Z}_r \breve{Z}_t] = \frac{\tilde{E}_{r,t}}{\sigma_r \sigma_t}  , \quad \expec \Big[ [\tilde{\mathbf{Z}}_{r}]_i [\tilde{\mathbf{Z}}_{t}]_j \Big] = \begin{cases}\frac{\breve{E}_{r,t}}{\tau_r \tau_t}, &\text{if } i = j\\
0,&\text{if } i\neq j
\end{cases}, 
\label{eq:tildeZcov}
\ee
where
\be
\begin{split}
&\breve{E}_{r,t} := \mathbb{E}\Big[g_{r}(\sigma_{r} \breve{Z}_{r}, W) g_{t}(\sigma_{t} \breve{Z}_{t}, W) \Big],\\
&\tilde{E}_{r,t} := \sum_{i\in\Gamma}\frac{\mathbb{E} [f_{r}(\tau_{r-1} [\tilde{\mathbf{Z}}_{r-1}]_{\Lambda_i}, \beta_{\Lambda_i}) f_{t}(\tau_{t-1} [\tilde{\mathbf{Z}}_{t-1}]_{\Lambda_i}, \beta_{\Lambda_i}) ]}{\delta |\Gamma|} . 
\end{split}
\label{eq:Edef}
\ee
%where $\sigma_{\beta}^2$ was defined in \eqref{eq:sigmabetadef}. 
Note that both terms of the above \eqref{eq:Edef} are scalar values and we take $f_0(\cdot, \beta_{\Lambda_i}) := f_0(\mathbf{0},\beta_{\Lambda_i})$, the initial condition.  Moreover, $\tilde{E}_{t,t} = \sigma_t^2$ and $\breve{E}_{t,t} = \tau_t^2$, as can be seen from \eqref{eq:sigmatdef}, thus for all $i\in\Gamma$, we have $\expec [[\tilde{\mathbf{Z}}_{t}]_i^2] = \expec[\breve{Z}^2_t] =1$. Therefore, $\tilde{\mathbf{Z}}_{t}$ has i.i.d.\ $\mc{N}(0,1)$ entries.

Next, we define matrices $\tilde{C}^t, \breve{C}^t \in \mathbb{R}^{t \times t}$ and vectors $\tilde{E}_t,\breve{E}_t\in\mathbb{R}^t$ whose entries are $\{\tilde{E}_{r,t}\}_{r,t\geq 0}$ and $\{\breve{E}_{r,t}\}_{r,t\geq 0}$ defined in \eqref{eq:Edef}: for $0\leq i,j \leq t-1$,
\begin{align}
\tilde{C}^{t}_{i+1,j+1} := \tilde{E}_{i,j}, &\quad \breve{C}^{t}_{i+1,j+1} := \breve{E}_{i,j},\label{eq:Ct_def} \\
\tilde{E}_t := (\tilde{E}_{0,t} \ldots, \tilde{E}_{t-1,t})^*, &\quad \breve{E}_t:=(\breve{E}_{0,t} \ldots, \breve{E}_{t-1,t})^*. \label{eq:Et_def}
\end{align}
Lemma \ref{lem:Ct_invert} below shows that $\tilde{C}^{t}$ and $\breve{C}^{t}$ are invertible. Therefore, we can define the 
concentrating values for $\gamma^t$ and $\alpha^t$ defined in \eqref{eq:vec_alph_gam_conc} as
\be
\hat{\gamma}^{t} := (\tilde{C}^t)^{-1}\tilde{E}_t  \quad \text{ and } \quad \hat{\alpha}^{t} := (\breve{C}^t)^{-1}\breve{E}_t,
\label{eq:hatalph_hatgam_def}
\ee
as well as
%We prove that $\tilde{C}^t$ and $\breve{C}^t$ are invertible in the next subsection.
%Note that the upper bounds on $\tilde{E}_{r,t}$ and $\breve{E}_{r,t}$ imply upper bounds for the entries of $\hat{\gamma}^{t}$ and $\hat{\alpha}^{t}$.  The see this note that $\hat{\gamma}^{t}_i$ for $1 \leq i \leq t$ is the inner product of the $i^{th}$ row of $(\tilde{C}^t)^{-1}$ with $\tilde{E}_t$, both of which have bounded norm since the elements of $(\tilde{C}^t)^{-1}$.  
the values of $(\sigma_{t}^{\perp})^2$ and  $(\tau^{\perp}_{t})^2$ for $t > 0$:
\be
\begin{split}
& (\sigma_{t}^{\perp})^2 := \sigma_t^2 - (\hat{\gamma}^{t})^* \tilde{E}_{t}= \tilde{E}_{t,t} - \tilde{E}^*_{t} (\tilde{C}^{t})^{-1}  \tilde{E}_{t}, \\
& (\tau^{\perp}_{t})^2 := \tau_{t}^2 - (\hat{\alpha}^{t})^* \breve{E}_t =  \breve{E}_{t,t} -   \breve{E}^*_{t} (\breve{C}^{t})^{-1}  \breve{E}_{t}.
\label{eq:sigperp_defs}
\end{split}
\ee
For $t=0$, we let $(\sigma^{\perp}_0)^2 := \sigma_0^2$ and $(\tau^{\perp}_0)^2 := \tau_0^2$.
Finally, define the concentrating values for $\lambda_{t+1}$ and $\xi_{t}$ defined in \eqref{eq:xi_lamb_def} as
\begin{equation}
\hat{\xi}_{t} := \mathbb{E} [g'_{t}(\sigma_{t} \breve{Z}_{t}, W)],\quad \hat{\lambda}_{t+1} := \frac{1}{\delta |\Gamma|}\sum_{i\in\Gamma} \mathbb{E} [f'_{t}(\tau_{t} [\tilde{\mathbf{Z}}_{t}]_{\Lambda_i}, \beta_{\Lambda_i})].\label{eq:hatlambda_hatxi}
\end{equation}

\begin{lem} \label{lem:Ct_invert}
If $(\sigma_k^{\perp})^2$ and $(\tau_k^{\perp})^2$ are bounded below by some positive constants for $k \leq t$, then the matrices $\tilde{C}^{k+1}$ and $\breve{C}^{k+1}$ defined in \eqref{eq:Ct_def} are invertible for $k \leq t$. \end{lem}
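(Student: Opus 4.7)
The plan is to induct on $k$ using the block structure of $\tilde{C}^{k+1}$ and $\breve{C}^{k+1}$ and the Schur complement formula for determinants. The definitions in \eqref{eq:Edef} make it immediate that $\tilde{E}_{r,t}=\tilde{E}_{t,r}$ and $\breve{E}_{r,t}=\breve{E}_{t,r}$, so both $\tilde{C}^{k+1}$ and $\breve{C}^{k+1}$ are symmetric. Writing $\tilde{C}^{k+1}$ in block form with the leading $(k\times k)$ block equal to $\tilde{C}^{k}$ gives
\[
\tilde{C}^{k+1}=\begin{pmatrix}\tilde{C}^{k} & \tilde{E}_{k}\\[2pt]\tilde{E}_{k}^{*} & \tilde{E}_{k,k}\end{pmatrix},
\]
with the analogous decomposition for $\breve{C}^{k+1}$ in terms of $\breve{C}^{k}$, $\breve{E}_{k}$, and $\breve{E}_{k,k}$, where $\tilde{E}_{k}$, $\breve{E}_{k}$ are as in \eqref{eq:Et_def}.

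The key computation is the Schur complement identity: if $\tilde{C}^{k}$ is invertible, then
\[
\det(\tilde{C}^{k+1})=\det(\tilde{C}^{k})\cdot\bigl(\tilde{E}_{k,k}-\tilde{E}_{k}^{*}(\tilde{C}^{k})^{-1}\tilde{E}_{k}\bigr)=\det(\tilde{C}^{k})\cdot(\sigma_{k}^{\perp})^{2},
\]
where the last equality is exactly the definition of $(\sigma_{k}^{\perp})^{2}$ in \eqref{eq:sigperp_defs}. The same identity holds for $\breve{C}^{k+1}$ with $(\tau_{k}^{\perp})^{2}$ in place of $(\sigma_{k}^{\perp})^{2}$.

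From here the induction is immediate. For the base case $k=0$, the matrix $\tilde{C}^{1}$ is the $1\times 1$ matrix $(\tilde{E}_{0,0})=(\sigma_{0}^{2})$, which equals $(\sigma_{0}^{\perp})^{2}$ by the convention following \eqref{eq:sigperp_defs}; this is strictly positive by hypothesis, so $\tilde{C}^{1}$ is invertible. Likewise $\breve{C}^{1}=(\tau_{0}^{2})=(\tau_{0}^{\perp})^{2}>0$. For the inductive step, if $\tilde{C}^{k}$ is invertible and $(\sigma_{k}^{\perp})^{2}$ is bounded below by a positive constant, then the displayed Schur identity gives $\det(\tilde{C}^{k+1})\neq 0$, so $\tilde{C}^{k+1}$ is invertible; the identical argument with $\breve{C}^{k}$ and $(\tau_{k}^{\perp})^{2}$ handles $\breve{C}^{k+1}$. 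Iterating up to $k\leq t$ yields the claim.

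There is no serious obstacle: the only thing to verify is that the block decomposition truly has $\tilde{C}^{k}$ as its principal submatrix (which follows because $\tilde{C}^{t}_{i+1,j+1}$ in \eqref{eq:Ct_def} depends only on $i,j$ and not on the ambient dimension $t$), and that the Schur complement equals $(\sigma_{k}^{\perp})^{2}$ as already written out in \eqref{eq:sigperp_defs}. Everything else is bookkeeping.
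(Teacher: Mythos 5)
Your proof is correct and is essentially the argument the paper relies on: the paper does not restate it but defers to \cite[Lemma 1]{RushV18}, whose proof is exactly this induction on $k$ using the block decomposition of $\tilde{C}^{k+1}$ (resp.\ $\breve{C}^{k+1}$) and the fact that the Schur complement of $\tilde{C}^{k}$ in $\tilde{C}^{k+1}$ equals $(\sigma_k^{\perp})^2$ (resp.\ $(\tau_k^{\perp})^2$) by \eqref{eq:sigperp_defs}. Writing it out explicitly as you do, including the observation that invertibility of $\tilde{C}^k$ at each stage is what makes $(\sigma_k^\perp)^2$ well defined, is a faithful and complete rendering of that argument.
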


\begin{proof}
The proof follows directly as that of \cite[Lemma 1]{RushV18} and therefore is not restated here.  To see that this is the case, note that the proof of \cite[Lemma 1]{RushV18} relies only on the relationship between $(\sigma_k^{\perp})^2$ (resp.\ $(\tau_k^{\perp})^2$) and $\tilde{C}^{k}$ (resp.\ $\breve{C}^{k}$) as defined in \cite[(4.19)]{RushV18}, which is the same as \eqref{eq:sigperp_defs}, and not the actual values taken by these objects. Therefore, the proof for \cite[Lemma 1]{RushV18} applies here.

%We prove this by induction. Note that  $\tilde{C}^1= \sigma_0^2$ and $\breve{C}^1 = \tau_0^2$ are both strictly positive by assumption and hence invertible.  Assume  that for some $k <t$, $\tilde{C}^k$ and $\breve{C}^k$  are invertible.   The matrix $\tilde{C}^{k+1}$ can be written as
%\ben
%\tilde{C}^{k+1} = \begin{bmatrix}
%    \mathbf{M}_1  &  \textbf{M}_2 \\
%    \textbf{M}_3 &  \textbf{M}_4
%\end{bmatrix},
%\een
%where $\textbf{M}_1 = \tilde{C}^{k} \in \mathbb{R}^{k \times k}$, $\textbf{M}_4 = \tilde{E}_{k,k} = \sigma_{k}^2$, and $\textbf{M}_2 = \textbf{M}^*_3 = \tilde{E}_{k} \in \mathbb{R}^{k \times 1}$ defined in \eqref{eq:Et_def}.  
%
%By the block inversion formula, $\tilde{C}^{k+1}$ is invertible if $\mathbf{M}_1$ and the Schur complement $\mathbf{M}_4 - \mathbf{M}_3\mathbf{M}_1^{-1}\mathbf{M}_2$ are both invertible. By the induction hypothesis, $\mathbf{M}_1 = \tilde{C}^{k}$ is invertible and 
%\be \mathbf{M}_4 - \mathbf{M}_3\mathbf{M}_1^{-1}\mathbf{M}_2 =   \tilde{E}_{k,k} -   \tilde{E}^*_{k} (\tilde{C}^{k})^{-1}  \tilde{E}_{k} = (\sigma_{k}^{\perp})^2  \geq \tilde{c} > 0,
%\label{eq:sigt_perp_pos}
%\ee
%Hence  $\tilde{C}^{t+1}$ is invertible.  Showing that $\breve{C}^{t+1}$ is invertible is very similar.
\end{proof}
%%%%%%%%%%%%%%%%%%%
%%%%%%%%%%%%%%%%%%%

\subsection{Conditional Distribution Lemma} \label{sec:cond_dist_lemma}

As mentioned previously, the proof of Theorem \ref{thm:main_amp_perf} relies on a technical lemma, Lemma \ref{lem:main_lem}, stated in Section \ref{subsec:main_lem_state} and proved in Section \ref{sec:main_lem_proof}.  Lemma \ref{lem:main_lem} uses the conditional distribution of the vectors $h^{t+1}$ and $b^t$ 
given the matrices in \eqref{eq:XYMQt} as well as $\beta, w$.  
Two forms of the conditional distribution of $h^{t+1}$ will be provided in Lemmas \ref{lem:hb_cond} and \ref{lem:h_cond_pure}, which correspond to \cite[Lemma 4]{RushV18} and \cite[Lemma 5]{RushV18}, respectively. Lemma \ref{lem:h_cond_pure} explicitly shows that the conditional distribution of $h^{t+1}$ can be represented as the sum of a standard Gaussian vector and a deviation term, where the explicit expression of the deviation term is provided in Lemma \ref{lem:hb_cond}. 
Then Lemma \ref{lem:main_lem} shows that the deviation term is small, meaning that its normalized Euclidean norm concentrates on zero, and also provides concentration results for various inner products involving the other terms in recursion \eqref{eq:hqbm_def}, namely $\{ h^{t+1}, q^t, b ^t, m^t \}$.

The following notation is used. Considering two random variables $X, Y$ and a sigma-algebra $\mscrs$, we denote the relationship that the conditional distribution of $X$ given $\mscrs$ equals the distribution of $Y$ by $X |_\mscrs \stackrel{d}{=} Y$.   We represent a $t \times t$ identity matrix as $\mathsf{I}_{t}$, dropping the $t$ subscript when it is clear from the context. For a matrix $A$ with full column rank, $\mathsf{P}^{\parallel}_{A} := A(A^*A)^{-1}A^*$ is the orthogonal projection matrix onto the column space of $A$, and $\mathsf{P}^\perp_{A}:=\mathsf{I}- \mathsf{P}^{\parallel}_{A}$.  Define $\mathscr{S}_{t_1, t_2}$ to be the sigma-algebra generated by the terms
\[ \{b^0, ..., b^{t_1 -1}, m^0, ..., m^{t_1 - 1}, h^1, ..., h^{t_2}, q^0, ..., q^{t_2}, \beta, w\}. \]

\begin{lem}
For the vector $h^{t+1}$ and $b^t$ 
defined in \eqref{eq:hqbm_def}, the following conditional distribution holds for $t \geq 1$:
\begin{align}
&h^{1} \lvert_{\mscrs_{1, 0}} \stackrel{d}{=} \tau_0 Z_0 + \Delta_{1,0}, \nonumber\\
&h^{t+1} \lvert_{\mscrs_{t+1, t}} \stackrel{d}{=} \sum_{r=0}^{t-1} \hat{\alpha}^t_r h^{r+1} + \tau^{\perp}_t Z_t + \Delta_{t+1,t}, \label{eq:Ha_dist} \\
&b^{0} \lvert_{\mscrs_{0, 0}} \stackrel{d}{=} \sigma_0 Z'_0 + \Delta_{0,0}, \quad b^{t} \lvert_{\mscrs_{t, t}}\stackrel{d}{=} \sum_{r=0}^{t-1} \hat{\gamma}^{t}_r b^r + \sigma^{\perp}_t Z'_t + \Delta_{t,t}, \label{eq:Ba_dist}
\end{align}
where $Z_0, Z_t \in \mathbb{R}^{|\Gamma|}$ and $Z'_0, Z'_t \in \mathbb{R}^n$ 
are i.i.d.\ standard Gaussian random vectors that are independent of the corresponding conditioning sigma algebras. The term $\hat{\gamma}^{t}_i$ and 
$\hat{\alpha}^t_{i}$ for $i = 0,...,t-1$ is defined in \eqref{eq:hatalph_hatgam_def} and the term $(\tau_{t}^{\perp})^2$ and $(\sigma_{t}^{\perp})^2$ 
in \eqref{eq:sigperp_defs}.  The deviation terms are 
\begin{align}
\Delta_{0,0} &= \Big(\frac{\norm{q^0}}{\sqrt{n}} - \sigma_0\Big)Z'_0, \label{eq:D00} \\
\Delta_{1,0} &=  \Big[ \Big(\frac{\norm{m^0}}{\sqrt{n}}  - \tau_0\Big)\mathsf{I}_N -\frac{\norm{m^0}}{\sqrt{n}} \mathsf{P}^{\parallel}_{q^0}  \Big] Z_0\nonumber\\
& + q^0  \Big(\frac{\norm{q^0}^2}{n}  \Big)^{-1}  \Big(\frac{(b^0)^*m_0}{n} - \xi_0 \frac{\norm{q^0}^2}{n} \Big), \label{eq:D10}
\end{align}
where $\mathsf{I}$ is the identity matrix and for any matrix $A$, $\mathsf{P}^{\parallel}_A$ is the orthogonal projection matrix onto the column space of $A$.  
For $t >0$, defining $\mathbf{M}_{t} := \frac{1}{n}M_{t}^* M_{t}$ and $\mathbf{Q}_{t+1} := \frac{1}{n}Q_{t+1}^* Q_{t+1}$,
\begin{align}
&\Delta_{t,t} =  \sum_{r=0}^{t-1} (\gamma^t_r - \hat{\gamma}^{t}_r) b^r + \Big[  \Big(\frac{\norm{q^t_{\perp}}}{\sqrt{n}} - \sigma_{t}^{\perp} \Big) \mathsf{I}_n  - \frac{\norm{q^t_{\perp}} }{\sqrt{n}} \mathsf{P}^{\parallel}_{M_t}\Big]Z'_t  \nonumber \\
& + M_t \mathbf{M}_{t}^{-1} \Big(\frac{H_t^* q^t_{\perp}}{n} - \frac{M_t^*}{n}\Big[\lambda_t m^{t-1} - \sum_{r=1}^{t-1} \lambda_{r} \gamma^t_{r} m^{r-1}\Big]\Big),\label{eq:Dtt} \\
&\Delta_{t+1,t} =   \sum_{r=0}^{t-1} (\alpha^t_r - \hat{\alpha}^t_r) h^{r+1}\nonumber\\
& +  \Big[ \Big(\frac{\norm{m^t_{\perp}}}{\sqrt{n}} - \tau_{t}^{\perp}  \Big)  \mathsf{I}_N  -\frac{\norm{m^t_{\perp}}}{\sqrt{n}} \mathsf{P}^{\parallel}_{Q_{t+1}} \Big]Z_t \nonumber \\
& + Q_{t+1}  \mathbf{Q}_{t+1}^{-1}  \Big(\frac{B^*_{t+1} m^t_{\perp}}{n}  - \frac{Q_{t+1}^*}{n} \Big[\xi_t q^t - \sum_{i=0}^{t-1} \xi_i \alpha^t_i q^i \Big] \Big).\label{eq:Dt1t}  
\end{align} 
\label{lem:hb_cond}
\end{lem}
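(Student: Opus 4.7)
The plan is to prove Lemma \ref{lem:hb_cond} by induction on $t$, using the conditioning technique that is now standard in the AMP literature (originating with Bolthausen and developed in \cite{BayMont11, RushV18}). The crucial observation is that the derivation of these conditional distributions relies only on the Gaussianity of $A$ and the independence of $A$ from $\beta$ and $w$; the joint distribution of $\beta$ (i.i.d.\ vs.\ MRF) never enters. Consequently, the proof should follow \cite[Lemma 4]{RushV18} essentially verbatim, with the conditioning sigma algebras $\mathscr{S}_{t_1,t_2}$ playing an identical role.

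First I would handle the base case $t=0$. The sigma algebra $\mathscr{S}_{0,0}$ is generated by $\{\beta, w, q^0\}$, and since $q^0$ is a deterministic function of $\beta$ (see \eqref{eq:q0def}) and $A$ is independent of $(\beta, w)$, the vector $b^0 = A q^0$ given $\mathscr{S}_{0,0}$ is Gaussian with mean zero and covariance $(\|q^0\|^2/n)\mathsf{I}_n$. Writing this as $\sigma_0 Z'_0 + (\|q^0\|/\sqrt{n} - \sigma_0) Z'_0$ yields \eqref{eq:Ba_dist} with $\Delta_{0,0}$ as in \eqref{eq:D00}. For $h^1$, I would further condition on the linear constraint $A q^0 = b^0$, apply the standard formula for a Gaussian matrix conditioned on a rank-one linear constraint, and decompose the result into a parallel (in the direction of $q^0$) and a perpendicular Gaussian component; the discrepancy between empirical coefficients (involving $\|m^0\|/\sqrt{n}$ and $(b^0)^* m^0/n$) and their concentrating values (involving $\tau_0$ and $\xi_0 \|q^0\|^2/n$) gets bundled into $\Delta_{1,0}$ as in \eqref{eq:D10}.

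For the inductive step I would assume the representations hold up to iteration $t-1$ and write the conditioning on $\mathscr{S}_{t,t}$ explicitly as the pair of linear constraints
\begin{equation*}
A Q_{t+1} = B_{t+1} + [0 \mid \lambda_1 m^0 \mid \cdots \mid \lambda_t m^{t-1}], \quad A^* M_t = H_t + [\xi_0 q^0 \mid \cdots \mid \xi_{t-1} q^{t-1}],
\end{equation*}
which follow from the recursion \eqref{eq:hqbm_def}. Given these constraints and $(\beta, w)$, the conditional distribution of $A$ is again Gaussian, with an explicit mean-shift formula in terms of projections onto the column spans of $Q_{t+1}$ and $M_t$. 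Applying this to $b^t = A q^t - \lambda_t m^{t-1}$ (for \eqref{eq:Ba_dist}) and $h^{t+1} = A^* m^t - \xi_t q^t$ (for \eqref{eq:Ha_dist}), decomposing $q^t = q^t_\| + q^t_\perp$ and $m^t = m^t_\| + m^t_\perp$, and using the invertibility of $\mathbf{Q}_{t+1}$ and $\mathbf{M}_t$ (which is guaranteed up to iteration $T^*$) produces a parallel component plus an independent Gaussian scaled by the perpendicular norm. Replacing the empirical coefficients $\gamma^t_r, \alpha^t_r$ with $\hat{\gamma}^t_r, \hat{\alpha}^t_r$ from \eqref{eq:hatalph_hatgam_def} and $\|q^t_\perp\|/\sqrt{n}, \|m^t_\perp\|/\sqrt{n}$ with $\sigma_t^\perp, \tau_t^\perp$ produces the main terms in \eqref{eq:Ba_dist} and \eqref{eq:Ha_dist}, with all residuals collected into $\Delta_{t,t}$ and $\Delta_{t+1,t}$ of the form \eqref{eq:Dtt} and \eqref{eq:Dt1t}.

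The main obstacle, and why the proof would be presented as a direct adaptation of \cite[Lemma 4]{RushV18}, is the bookkeeping: correctly identifying each term from the conditional-Gaussian formula with the corresponding piece of \eqref{eq:Dtt}--\eqref{eq:Dt1t}, rewriting projection matrices $\mathsf{P}^{\parallel}_{Q_{t+1}}, \mathsf{P}^{\parallel}_{M_t}$ via $\mathbf{Q}_{t+1}^{-1}, \mathbf{M}_t^{-1}$, and verifying that the inductive hypothesis produces precisely the algebraic form stated. None of these steps interacts with the distribution of $\beta$, so the MRF prior plays no role in this lemma; it enters the analysis only later, when Lemma \ref{lem:main_lem} shows that the deviation terms have vanishing norm via the concentration result Lemma \ref{lem:PL_MRF_conc} for pseudo-Lipschitz functions of MRFs. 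For this reason, I would present only the modifications (if any) to the bookkeeping and refer to \cite[Lemma 4]{RushV18} for the unchanged computations.
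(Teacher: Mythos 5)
Your proposal follows exactly the same route as the paper: condition on the linear constraints $AQ_{t_1}=Y_{t_1}$, $A^*M_t=X_t$ implied by the recursion, use the explicit formula for a Gaussian matrix conditioned on linear constraints, decompose onto the column spaces of $Q_{t_1}$ and $M_t$, and collect the discrepancies between empirical coefficients and their concentrating values into the $\Delta$ terms, deferring the unchanged algebra to \cite[Lemmas 3--4]{RushV18} since the prior on $\beta$ never enters. The only nitpick is that when conditioning on $\mscrs_{t,t}$ (for $b^t$) the constraint involves $Q_t$ and $B_t$, not $Q_{t+1}$ and $B_{t+1}$, since $b^t$ is not yet in that sigma-algebra; otherwise the argument matches the paper's.
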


\begin{proof}
 As in \cite{RushV18}, the key theoretical insight in the proof is to study the distribution of $A$ conditioned on the sigma algebra $\mathscr{S}_{t_1, t}$ where $t_1$ is either $t+1$ or $t$, meaning one treats $A$ as random, and considers the output of the AMP algorithm up until the current iteration as fixed and given in the sigma-algebra.  This is done by observing that conditioning on $\mathscr{S}_{t_1, t}$ is equivalent to conditioning on the linear constraints
\[A Q_{t_1} = Y_{t_1} \qquad \text{ and } \qquad A^* M_t = X_t,\]
(due to the relationships $b^s = Aq^s - \lambda_s m^{s-1}$ and $h^{r+1}=A^*m^{r} - \xi_{r} q^{r}$ for $0\leq s \leq t$ and $0\leq r \leq t-1$ given in \eqref{eq:hqbm_def}).  
Then it is straightforward to characterize the conditional distribution of a Gaussian matrix given linear constraints.

Since the sequences $\{\lambda_s; \xi_s; b^s; q^s; h^s; m^s\}_{0\leq s \leq t}$ are all in the conditioning $\mscrs_{t+1, t}$, this conditional distribution depends only on the relationship between the matrix $A$ and these fixed, given terms.  This relationship, namely that specified via $b^s = Aq^s - \lambda_s m^{s-1}$ and $h^{r+1}=A^*m^{r} - \xi_{r} q^{r}$, is the same here as in \cite[Lemma 3]{RushV18}, and so the proofs are identical.  We therefore do not repeat the details. Although $\{\lambda_s\}_{0\leq s \leq t}$ and  $\{q^s\}_{0\leq s \leq t}$ are obtained from $\{f_s\}_{0\leq s \leq t}$, which is separable in \cite{RushV18}, but non-separable in our case,  $\{\lambda_s\}_{0\leq s \leq t}$ and  $\{q^s\}_{0\leq s \leq t}$ are simply treated as fixed elements in the conditioning sigma-algebra $\mscrs_{t+1, t}$ and the fact that they are calculated via non-separable functions here does not change the proof.

Then one is able to specify the conditional distributions of $b^t$ and $h^{t+1}$ given $\mathscr{S}_{t, t}$ and $\mathscr{S}_{t+1, t}$, respectively, using the conditional distribution of $A$.  Again since the relationship between $b^t$ and $h^{t+1}$ and $A$ is the same here as in \cite{RushV18}, the details are identical to that provided in the proof of \cite[Lemma 4]{RushV18}  and are not repeated here.  
\end{proof}

Note that Lemma \ref{lem:hb_cond} holds only when %$M^*_t M_t$ and 
$Q^*_{t+1} Q_{t+1}$ is invertible.  The following lemma provides an alternative representation of the conditional distribution of $h^{t+1}|_{\mscrs_{t+1,t}}$ for $t\geq 0$, and it explicitly shows that $h^{t+1}|_{\mscrs_{t+1,t}}$ is distributed as an i.i.d.\ Gaussian random vector with $\mc{N}(0,\tau_t^2)$ entries plus a deviation term.

\begin{lem}
For $t\geq 0$, let $Z_t\in\mathbb{R}^{|\Gamma|}$ be i.i.d.\ standard normal random vectors.
Let $h_{\pure}^1:=\tau_0 Z_0$. For $t\geq 1$, recursively define
\begin{equation}
h^{t+1}_\pure = \sum_{r=0}^{t-1} \hat{\alpha}^t_r h^{r+1}_\pure + \tau_t^\perp Z_t
\label{eq:h_pure_def}
\end{equation}
and a set of scalars $\{\mathsf{d}^t_i\}_{0\leq i\leq t}$ with $\mathsf{d}^0_0=1$,
\begin{equation}
\mathsf{d}^t_i = \sum_{r=i}^{t-1} \mathsf{d}^r_{i} \hat{\alpha}^t_r \quad \text{for }\quad 0\leq i \leq (t-1), \quad\text{ and }\quad \mathsf{d}^t_t = 1.
\label{eq:d_def}
\end{equation}

Let $\hat{h}^{t+1}_{\pure}=\mc{V}^{-1}(h^{t+1}_{\pure})\in\mathbb{R}^\Gamma$.  Then for all $t\geq 0$ we have
\begin{equation}
(\hat{h}^1_\pure,\ldots,\hat{h}^{t+1}_\pure) \overset{d}{=} (\tau_0 \tilde{\mbf{Z}}_0,\ldots, \tau_t \tilde{\mbf{Z}}_t),
\label{eq:h_pure_1}
\end{equation}
where $\{\tilde{\mbf{Z}}_t\}_{t\geq 0}$ are jointly Gaussian with correlation structure defined in \eqref{eq:tildeZcov}.
Moreover,
\begin{equation}
h^{t+1}|_{\mscrs_{t+1,t}} \overset{d}{=} h^{t+1}_\pure + \sum_{r=0}^t \mathsf{d}^t_r \Delta_{r+1,r}.
\label{eq:h_pure_2}
\end{equation}
\label{lem:h_cond_pure}
\end{lem}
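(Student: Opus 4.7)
My plan is to prove the two claims \eqref{eq:h_pure_1} and \eqref{eq:h_pure_2} by induction on $t$, treating them separately but using parallel arguments.

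For \eqref{eq:h_pure_1}, I would first note that by construction each $h^{s+1}_\pure$ is a linear combination of the independent Gaussian vectors $Z_0, \ldots, Z_s$, so the joint vector $(\hat{h}^1_\pure,\ldots,\hat{h}^{t+1}_\pure)$ is jointly Gaussian. Hence the claim reduces to matching covariances: it suffices to show $\mathbb{E}\bigl[[\hat{h}^{r+1}_\pure]_i[\hat{h}^{s+1}_\pure]_j\bigr]$ equals $\breve{E}_{r,s}$ when $i=j$ and $0$ otherwise, for all $0\le r\le s \le t$. The base case $t=0$ is immediate since $h^1_\pure=\tau_0 Z_0$ and $\breve{E}_{0,0}=\tau_0^2$. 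For the inductive step, assume the claim holds up to index $t-1$, and consider $h^{t+1}_\pure=\sum_{r=0}^{t-1}\hat{\alpha}^t_r h^{r+1}_\pure+\tau_t^\perp Z_t$. The off-diagonal case ($i\neq j$) follows since $Z_t$ has i.i.d.\ entries and cross-covariances vanish by the inductive hypothesis. For the diagonal entries at the top time, I would compute
\begin{equation*}
\mathbb{E}\bigl[[\hat{h}^{t+1}_\pure]_i^2\bigr]=(\hat{\alpha}^t)^*\breve{C}^t\hat{\alpha}^t+(\tau_t^\perp)^2=(\hat{\alpha}^t)^*\breve{E}_t+(\tau_t^\perp)^2,
\end{equation*}
using the definition $\hat{\alpha}^t=(\breve{C}^t)^{-1}\breve{E}_t$, and then conclude this equals $\tau_t^2=\breve{E}_{t,t}$ by the defining identity for $(\tau_t^\perp)^2$ in \eqref{eq:sigperp_defs}. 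For cross-time covariances with $r<t$, independence of $Z_t$ from the earlier $h^{s+1}_\pure$ reduces the computation to $[\breve{C}^t\hat{\alpha}^t]_{r+1}=[\breve{E}_t]_{r+1}=\breve{E}_{r,t}$, exactly as desired.

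For \eqref{eq:h_pure_2}, I would again induct on $t$. The base case $t=0$ is immediate from \eqref{eq:D00} (or rather the $t=0$ case of Lemma \ref{lem:hb_cond}, giving $h^1|_{\mscrs_{1,0}}\stackrel{d}{=}\tau_0 Z_0+\Delta_{1,0}=h^1_\pure+\mathsf{d}^0_0\Delta_{1,0}$). For the inductive step, apply \eqref{eq:Ha_dist} from Lemma \ref{lem:hb_cond} to write $h^{t+1}|_{\mscrs_{t+1,t}}\stackrel{d}{=}\sum_{r=0}^{t-1}\hat{\alpha}^t_r h^{r+1}+\tau_t^\perp Z_t+\Delta_{t+1,t}$, substitute the inductive representation of each $h^{r+1}$ as $h^{r+1}_\pure+\sum_{s=0}^r\mathsf{d}^r_s\Delta_{s+1,s}$, and collect the coefficient of each deviation term $\Delta_{s+1,s}$. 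The ``pure'' components combine via \eqref{eq:h_pure_def} into $h^{t+1}_\pure$, and after swapping the order of summation the coefficient of $\Delta_{s+1,s}$ for $s<t$ becomes $\sum_{r=s}^{t-1}\hat{\alpha}^t_r\mathsf{d}^r_s$, which is exactly $\mathsf{d}^t_s$ by \eqref{eq:d_def}; the coefficient of $\Delta_{t+1,t}$ is $1=\mathsf{d}^t_t$. This yields \eqref{eq:h_pure_2}.

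I do not anticipate any serious obstacle: both parts are bookkeeping once the right induction hypothesis is set up. The main subtle step is the cross-time covariance calculation in \eqref{eq:h_pure_1}, where one must carefully use the definition of $\hat{\alpha}^t$ and the identity \eqref{eq:sigperp_defs} to reduce a quadratic form to the target value $\breve{E}_{r,t}$. The invertibility of $\breve{C}^t$ (needed to define $\hat{\alpha}^t$) is guaranteed by Lemma \ref{lem:Ct_invert} under the running hypothesis $t<T^*$.
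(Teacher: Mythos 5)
Your proposal is correct and follows essentially the same route as the paper: both parts are proved by induction, with \eqref{eq:h_pure_1} reduced to matching the Gaussian covariance structure via the identities $\breve{C}^t\hat{\alpha}^t=\breve{E}_t$ and $(\tau_t^\perp)^2=\breve{E}_{t,t}-\breve{E}_t^*(\breve{C}^t)^{-1}\breve{E}_t$, and \eqref{eq:h_pure_2} obtained by substituting the inductive representation into \eqref{eq:Ha_dist}, swapping the order of summation, and invoking \eqref{eq:d_def}. No gaps.
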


\begin{proof}
First, we prove \eqref{eq:h_pure_1} by induction. For $t=1$, $\hat{h}^1_\pure = \tau_0\mc{V}^{-1}( Z_0)\overset{d}{=}\tau_0 \tilde{\mbf{Z}}_0$. As the inductive hypothesis, assume $(\hat{h}^1_\pure,\ldots,\hat{h}^{t}_\pure) \overset{d}{=} (\tau_0 \tilde{\mbf{Z}}_0,\ldots, \tau_{t-1} \tilde{\mbf{Z}}_{t-1})$. 
By \eqref{eq:h_pure_def}, term $\hat{h}^{t+1}_\pure$ is equal in distribution to $\sum_{r=0}^{t-1} \hat{\alpha}^t_r \tau_r \tilde{\mbf{Z}}_r + \tau_t^\perp \mathbf{Z}$, where $\mathbf{Z}\in \mathbb{R}^{\Gamma}$ is independent of $\tilde{\mbf{Z}}_r$ for all $r=0,\ldots,t-1$. In what follows, we show
\begin{equation*}
\begin{split}
&(\tau_0 \tilde{\mbf{Z}}_0,\ldots, \tau_{t-1} \tilde{\mbf{Z}}_{t-1}, \sum_{r=0}^{t-1} \hat{\alpha}^t_r \tau_r\tilde{\mbf{Z}}_r + \tau_t^\perp \mathbf{Z})\\
&\hspace*{1.5in} \overset{d}{=} (\tau_0 \tilde{\mbf{Z}}_0,\ldots, \tau_{t-1} \tilde{\mbf{Z}}_{t-1},\tau_t \tilde{\mbf{Z}}_t).
\end{split}
\end{equation*}
%Then by \eqref{eq:h_pure_def}, we have $\hat{h}^{t+1}_\pure = \sum_{r=0}^{t-1} \hat{\alpha}^t_r \Cedit{\tau_r} \tilde{\mbf{Z}}_r + \tau_t^\perp \mbf{Z}$, where $\mbf{Z}\in\mathbb{R}^\Gamma$ is independent of $\tilde{\mbf{Z}}_r$ for all $r=0,\ldots,t-1$. In what follows, we show
%\begin{equation*}
%\left(\tau_0 \tilde{\mbf{Z}}_0,\ldots, \tau_{t-1} \tilde{\mbf{Z}}_{t-1}, \sum_{r=0}^{t-1} \hat{\alpha}^t_r \tilde{\mbf{Z}}_r + \tau_t^\perp \mbf{Z}\right) \overset{d}{=} (\tau_0 \tilde{\mbf{Z}}_0,\ldots, \tau_{t-1} \tilde{\mbf{Z}}_{t-1},\tau_t \tilde{\mbf{Z}}_t).
%\end{equation*}
Note that $\tilde{\mathbf{Z}}_0, \ldots, \tilde{\mathbf{Z}}_{t-1}, \mathbf{Z}$ are all zero-mean Gaussian, and therefore so is the sum.  We now study the variance and covariance of $ \sum_{r=0}^{t-1} \hat{\alpha}_r^t\tau_r \tilde{\mathbf{Z}}_r + \tau_t^{\perp}\mathbf{Z}$ by demonstrating the following two results:
\begin{enumerate}
\item[(\emph{i})]   For all $i, j\in\Gamma$, 
\begin{equation*}
\begin{split}
&\mathbb{E}\Big[\Big(\sum_{r=0}^{t-1} \hat{\alpha}_r^t\tau_r [\tilde{\mathbf{Z}}_{r}]_i + \tau_t^{\perp}\mathbf{Z}_i\Big)\Big(\sum_{r=0}^{t-1} \hat{\alpha}_r^t\tau_r[\tilde{\mathbf{Z}}_{r}]_j + \tau_t^{\perp}\mathbf{Z}_j\Big)\Big]\\
& = \tau_t^2  \mathbb{E}\left[[\tilde{\mathbf{Z}}_t]_i, [\tilde{\mathbf{Z}}_t]_j\right] = \begin{cases}\tau_t^2 &\text{ if } i=j, \\
0 &\text{ otherwise.}
\end{cases}
\end{split}
\end{equation*}
\item[(\emph{ii})]  For $0 \leq s \leq (t-1)$ and all $i, j\in\Gamma$, 
\begin{equation*}
\begin{split}
&\mathbb{E}\Big[\tau_s [\tilde{\mathbf{Z}}_{s}]_i \Big(\sum_{r=0}^{t-1} \hat{\alpha}_r^t\tau_r[\tilde{\mathbf{Z}}_{r}]_j + \tau_t^{\perp}\mathbf{Z}_j\Big)\Big] \\
&= \tau_s  \tau_t  \mathbb{E}\left[[\tilde{\mathbf{Z}}_s]_i, [\tilde{\mathbf{Z}}_t]_j\right] = \begin{cases}\breve{E}_{s,t} &\text{ if } i=j, \\
0 &\text{ otherwise.}
\end{cases}
\end{split}
\end{equation*}
\end{enumerate}
First consider (\emph{i}). We note,
\begin{align*}
&\!\mathbb{E}\Big[\! \Big(\sum_{r=0}^{t-1} \hat{\alpha}_r^t\tau_r [\tilde{\mathbf{Z}}_{r}]_i\! +\! \tau_t^{\perp}\mathbf{Z}_i\Big)\! \! \Big(\sum_{r=0}^{t-1}\! \hat{\alpha}_{r}^t\tau_r [\tilde{\mathbf{Z}}_{r}]_j + \tau_t^{\perp}\mathbf{Z}_j\Big) \!\Big]\\
& \overset{(a)}{=}\!  \sum_{r=0}^{t-1}\sum_{s=0}^{t-1}\hat{\alpha}^t_r\hat{\alpha}^t_s \tau_r \tau_s \mathbb{E}\left[ [\tilde{\mathbf{Z}}_r]_i [\tilde{\mathbf{Z}}_s]_j\right]\! +\! (\tau_t^\perp)^2\mathbb{E}\left[\mathbf{Z}_i \mathbf{Z}_j\right]\\
& \overset{(b)}{=} \begin{cases}
\sum_{r=0}^{t-1}\sum_{s=0}^{t-1}\hat{\alpha}^t_r\hat{\alpha}^t_s\breve{E}_{r,s} + (\tau_t^\perp)^2\overset{(c)}{=}\tau_t^2,&\text{if } i=j, \\
0, &\text{otherwise}.
\end{cases}
\end{align*}
In the above, step $(a)$ follows from the fact that $\mathbf{Z}$ is independent of $\tilde{\mathbf{Z}}_0, \ldots, \tilde{\mathbf{Z}}_{t-1}$, step $(b)$ from the covariance definition \eqref{eq:tildeZcov} and the i.i.d.\ standard normal nature of elements of $\mathbf{Z}$, and step $(c)$ from 
\begin{equation*}
\begin{split}
\sum_{r=0}^{t-1}\sum_{\ell =0}^{t-1}\hat{\alpha}^t_{r}\hat{\alpha}^t_{\ell}\breve{E}_{r,\ell} &=(\hat{\alpha}^t)^*\breve{C}^t\hat{\alpha}^t\\
& = [\breve{E}_t^*(\breve{C}^t)^{-1}](\breve{C}^t)^{-1}[(\breve{C}^t)^{-1}\breve{E}_t]\\
&=\breve{E}_t^*(\breve{C}^t)^{-1}\breve{E}_t = \breve{E}_{t,t}-(\tau_t^{\perp})^2.
\end{split}
%\label{eq:Htbitaut}
\end{equation*}
Next, consider (\emph{ii}). We see that
\begin{align*}
&\mathbb{E} \Big[ \tau_s [\tilde{\mathbf{Z}}_{s}]_i \Big(\sum_{r=0}^{t-1} \hat{\alpha}^t_{r} \tau_r [\tilde{\mathbf{Z}}_r]_j + \tau_t^{\perp} \mathbf{Z}_j \Big)\Big] \\
& \overset{(a)}{=}\sum_{r=0}^{t-1}\hat{\alpha}^t_{r}\tau_s\tau_r \mathbb{E}\big[[\tilde{\mathbf{Z}}_s]_i [\tilde{\mathbf{Z}}_r]_j\big] \overset{(b)}{=} 
\begin{cases}
\sum_{r=0}^{t-1} \breve{E}_{s,r} \hat{\alpha}^t_{r},&\text{if } i=j,\\
0,&\text{otherwise}.
\end{cases}
\end{align*}
In the above, step $(a)$ follows since $\mathbf{Z}$ is independent of $\tilde{\mathbf{Z}}_s$ and step $(b)$ from \eqref{eq:tildeZcov}.
Finally, notice that $\sum_{r=0}^{t-1} \breve{E}_{s,r}\hat{\alpha}^t_{r} = [\breve{C}^t\hat{\alpha}^t]_{s+1} = \breve{E}_{s,t},$ where the first equality holds since the sum equals the inner product of the $(s+1)^{th}$ row of $\breve{C}^t$ with $\hat{\alpha}^t$ and the second equality by definition of $\hat{\alpha}^t$ in \eqref{eq:hatalph_hatgam_def}.

Next, we prove \eqref{eq:h_pure_2}, also by induction. For $t=0$, by \eqref{eq:Ha_dist} we have $h^{t+1}|_{\mscrs_{t+1,t}} \overset{d}{=} \tau_0 Z_0 + \Delta_{1,0} \overset{d}{=} h_{\pure}^1  + \Delta_{1,0}$. Assume that $h^{r+1}|_{\mscrs_{t+1,t}} \overset{d}{=} h^{r+1}_\pure + \sum_{i=0}^r \mathsf{d}^r_i \Delta_{i+1,i}$ holds for $r=0,\ldots,t-1$ as the inductive hypothesis. Then,
\begin{align*}
&h^{t+1}|_{\mscrs_{t+1,t}} \overset{d}{=} \sum_{r=0}^{t-1} \hat{\alpha}^t_r h^{r+1} + \tau^{\perp}_t Z_t + \Delta_{t+1,t}\\
& \overset{d}{=} \sum_{r=0}^{t-1} \hat{\alpha}^t_r \Big( h^{r+1}_\pure + \sum_{i=0}^r \mathsf{d}^r_i \Delta_{i+1,i}\Big) + \tau^{\perp}_t Z_t + \Delta_{t+1,t}\\
& = \sum_{r=0}^{t-1} \hat{\alpha}^t_r h^{r+1}_\pure + \tau^{\perp}_t Z_t  + \sum_{r=0}^{t-1} \sum_{i=0}^r \hat{\alpha}^t_r\mathsf{d}^r_i \Delta_{i+1,i}+ \Delta_{t+1,t}\\
& = h^{t+1}_\pure + \sum_{i=0}^t \mathsf{d}^t_i \Delta_{i+1,i}.
\end{align*}
In the above, the first equality uses \eqref{eq:Ha_dist} and the second the inductive hypothesis. The last equality follows by noticing that $\sum_{r=0}^{t-1} \sum_{i=0}^r v_{r,i} = \sum_{i=0}^{t-1} \sum_{r=i}^{t-1} v_{r,i}$ for $(v_{i,r})_{0\leq i,r \leq t-1}$ and using \eqref{eq:d_def}.
%The last equality follows by interchanging the sum and using definition \eqref{eq:d_def}.
\end{proof}

%%%%%%%%%%%%%%%%%%%%%%%%%%
%%%%%%%%%%%%%%%%%%%%%%%%%%

%The conditional distribution representation in Lemma \ref{lem:hb_cond} implies that for each $t \geq 0$,  $h^{t+1}$ is the sum of an i.i.d.\ $\mc{N}(0, \tau_t^2)$ random vector plus a deviation term. This is straightforward to verify for the  special case of the Bayes-optimal AMP recursion \eqref{eq:hqbm_def_AMP} with the de-noising function $\eta_t(\cdot)$ chosen as the conditional expectation of $\beta$ given the noisy observation $\beta+ \tau_t Z$, as in \eqref{eq:opt_cond_expec}.  Using \eqref{eq:simpl_exps} in Lemma \ref{lem:hb_cond}, we obtain
%\be
%h^{t+1} \lvert_{\mscrs_{t+1, t}} \stackrel{d}{=} (\tau_t^2/\tau_{t-1}^2) h^{t} + \tau^{\perp}_t Z_t + \Delta_{t+1,t}.
%\label{eq:bt_simp}
%\ee
%Assuming $h^{t}$ has representation  $\tau_{t-1} \tilde{Z}_{t-1} +  \Delta_{t}$, substituting in \eqref{eq:bt_simp} gives
%\ben
%\begin{split} 
%h^{t+1} & \stackrel{d}{=}   (\tau_t^2/\tau_{t-1}) \tilde{Z}_{t-1}   +  \tau^{\perp}_t Z_t  + \Delta_{t}  + \Delta_{t+1,t}  \stackrel{d}{=}   \tau_t \tilde{Z}_t + \Delta_{t}  + \Delta_{t+1,t} .  \end{split} \een
%To obtain the last equality above, we combine independent Gaussians $\tilde{Z}_{t-1}$ and $Z_t$ using the expression for $\tau_t^{\perp}$ in \eqref{eq:simpl_exps}.   Similarly $b^t$ is the sum of an i.i.d.\ $\mc{N}(0, \sigma_t^2)$ random vector and a deviation term.  The next lemma  shows that these deviation terms are small with high probability. 

% !TEX root = main.tex

\subsection{Main Concentration Lemma}
\label{subsec:main_lem_state}

%\be
%K_t = what?, \quad \text{ and } \quad \kappa_t = what? %K_t = K_1(K_2)^t(t!)^{10}, \quad \text{ and } \quad \kappa_t = \kappa_1[\kappa_2^t (t!)^{18}]^{-1},
%\label{eq:Kkappa_def}
%\ee 
%where $K_1, K_2, \kappa_1, \kappa_2 > 0$ are universal constants (not depending on $t$, $n$, or $\e$).  As specified in the theorem statement, the lemma holds for all $\e \in (0,1)$.  

\begin{lem}
We use the shorthand $X_n \doteq c$ to denote the concentration inequality $P(\abs{X_n-c} \geq \epsilon) \leq K_{k,t} e^{-\kappa_{k,t} n \e^2}$, where $K_{k,t}, \kappa_{k,t}$ denote constants depending on the iteration index $t$ and the fixed half-window size $k$, but not on $n$ or $\e$. The following statements hold for $0 \leq t < T^*$ and $\e\in(0,1)$.

\begin{enumerate}[(a)]

\item For $\Delta_{t+1,t}$ defined in \eqref{eq:D10} and \eqref{eq:Dt1t}, %For all $0 \leq r \leq t$,
\begin{align}
P\Big(\frac{1}{|\Gamma|}\norm{\Delta_{{t+1,t}}}^2 \geq \epsilon \Big) \leq K_{k,t} e^{-\kappa_{k,t} n \e}. \label{eq:Ha}
%P \left(\frac{1}{n}\norm{\Delta_{{t,t}}}^2 \geq \epsilon \right) \leq K_{k,t} e^{-\kappa_{k,t} n \e}. \label{eq:Ba} 
\end{align}
%\begin{align}
%P\left(\frac{1}{N}\norm{\Delta_{{t+1,t}}}^2 \geq \epsilon \right) \leq t^3 K_{t-1} \exp\left\{-\frac{\kappa_{t-1} n \epsilon}{t^5}\right\}, \label{eq:Ha} \\
%P \left(\frac{1}{n}\norm{\Delta_{{t,t}}}^2 \geq \epsilon \right) \leq t^3 K_{t-1} \exp\left\{-\frac{\kappa_{t-1} n \epsilon}{t^5}\right\}. \label{eq:Ba} 
%\end{align}

\item For (order-2) pseudo-Lipschitz functions $\phi_h: \mathbb{R}^{(t+2)|\Lambda|} \to \mathbb{R}$,
\be
\begin{split}
&\frac{1}{|\Gamma|}\sum_{i\in\Gamma} \phi_h\Big([\hat{h}^1]_{\Lambda_i}, \ldots, [\hat{h}^{t+1}]_{\Lambda_i}, \beta_{\Lambda_i}\Big)\\
&  \doteq 
\frac{1}{|\Gamma|}\sum_{i\in\Gamma}\expec\Big[\phi_h\Big(\tau_0 [\tilde{\mathbf{Z}}_0]_{\Lambda_i}, \ldots, \tau_t [\tilde{\mathbf{Z}}_t]_{\Lambda_i}, \beta_{\Lambda_i} \Big)\Big].  
\end{split}
\label{eq:Hb1}
\ee
The random vectors $\tilde{\mathbf{Z}}_{0}, \ldots, \tilde{\mathbf{Z}}_t \in \mathbb{R}^{\Gamma}$ are jointly Gaussian with zero mean entries, which are independent of the other entries in the same vector with covariance across iterations given by \eqref{eq:tildeZcov}, and are independent of $\beta \sim \mu$.

\item Recall that the operator $\mc{V}$ rearranges the elements of an array into a vector,
\begin{align}
\frac{(h^{t+1})^* q^0}{n} &\doteq 0, \quad \frac{(h^{t+1})^* \mc{V}(\beta)}{n} \doteq 0, \label{eq:Hc} \\
%\frac{\sum_{i\in\Gamma}\hat{h}^{t+1}_i \hat{q}^0_i}{n} &\doteq 0, \quad \frac{\sum_{i\in\Gamma} \hat{h}^{t+1}_i \beta_i}{n} \doteq 0, \label{eq:Hc} \\
\frac{(b^t)^* w}{n} &\doteq 0.  \label{eq:Bc}
\end{align}

%%%%%%%%

\item For all $0 \leq r \leq t$, 
\begin{align}
\frac{(h^{r+1})^* h^{t+1}}{|\Gamma|} &\doteq \breve{E}_{r,t}, \label{eq:Hd} \\
%\frac{\sum_{i\in\Gamma}\hat{h}^{r+1}_i \hat{h}^{t+1}_i}{N} &\doteq \breve{E}_{r,t}, \label{eq:Hd} \\
\frac{(b^r)^*b^t}{n} &\doteq \tilde{E}_{r,t}. \label{eq:Bd}
\end{align}

%%%%%

\item For all $0 \leq r \leq t$,
\begin{align}
\frac{(q^{0})^* q^{t+1}}{n} &\doteq \tilde{E}_{0,t+1}, \quad \frac{(q^{r+1})^*q^{t+1}}{n} \doteq \tilde{E}_{r+1,t+1},  \label{eq:He}  \\
%\frac{\sum_{i\in\Gamma}\hat{q}^{0}_i \hat{q}^{t+1}_i}{n} &\doteq \tilde{E}_{0,t+1}, \quad \frac{\sum_{i\in\Gamma}\hat{q}^{r+1}_i \hat{q}^{t+1}_i}{n} \doteq \tilde{E}_{r+1,t+1},  \label{eq:He}  \\
\frac{(m^r)^* m^t}{n} &\doteq \breve{E}_{r,t}. \label{eq:Be}
\end{align}

%%%%%%%
\item For all $0 \leq r \leq t$,  
\begin{align}
 \frac{(h^{t+1})^*q^{r+1}}{n} &\doteq  \hat{\lambda}_{r+1} \breve{E}_{r,t}, \quad \lambda_t \doteq \hat{\lambda}_{t}, \nonumber\\
\frac{(h^{r+1})^*q^{t+1}}{n} &\doteq  \hat{\lambda}_{t+1} \breve{E}_{r,t}, \label{eq:Hf} \\
%\lambda_t \doteq \hat{\lambda}_{t},  \quad  \frac{\sum_{i\in\Gamma}\hat{h}_i^{t+1} \hat{q}_i^{r+1}}{n} &\doteq  \hat{\lambda}_{r+1} \breve{E}_{r,t},  \quad  \frac{\sum_{i\in\Gamma}\hat{h}_i^{r+1}\hat{q}_i^{t+1}}{n} \doteq  \hat{\lambda}_{t+1} \breve{E}_{r,t}, \label{eq:Hf} \\
\frac{(b^r)^*m^t}{n} &\doteq \hat{\xi}_{t}\tilde{E}_{r,t} ,\quad\quad \xi_t \doteq \hat{\xi}_{t},\nonumber\\
 \frac{(b^t)^*m^r}{n}& \doteq \hat{\xi}_{r} \tilde{E}_{r,t}.\label{eq:Bf}
\end{align}

%%%%%%%%

\item For $\textbf{Q}_{t+1} = \frac{1}{n} Q_{t+1}^* Q_{t+1}$ and $\textbf{M}_{t} = \frac{1}{n} M_{t}^* M_{t}$, when the inverses exist, for all $0 \leq  i,j  \leq t$ and $0 \leq  i',j' \leq t-1$:
% at the following rates:
\begin{align}
&\left[\mathbf{Q}_{t+1}^{-1} \right]_{i+1,j+1} \doteq  [(\tilde{C}^{t+1})^{-1}]_{i+1,j+1}, \quad \gamma^{t+1}_{i} \doteq \hat{\gamma}^{t+1}_{i}, \label{eq:Hg}\\ %\quad  \  0 \leq k \leq t,  
&\left[\mathbf{M}_t^{-1} \right]_{i'+1,j'+1} \doteq  [(\breve{C}^t)^{-1}]_{i'+1,j'+1}, \quad \alpha^{t}_{i'} \doteq \hat{\alpha}^{t}_{i'}, \,   t \geq 1,\label{eq:Bg}  % 0 \leq k' \leq t-1,  
\end{align}
where $\hat{\gamma}^{t+1}_{i}$ and $\hat{\alpha}^{t}_{i'}$ are defined in \eqref{eq:hatalph_hatgam_def},

%%%%
\item With $\sigma_{t+1}^{\perp}, \tau_{t}^{\perp}$ defined in \eqref{eq:sigperp_defs},
\begin{align} 
\frac{1}{n}\norm{q^{t+1}_{\perp}}^2 &\doteq (\sigma_{t+1}^{\perp})^2, \label{eq:Hh} \\
\frac{1}{n}\norm{m^t_{\perp}}^2 &\doteq (\tau_{t}^{\perp})^2. \label{eq:Bh}
\end{align}

\end{enumerate}
\label{lem:main_lem}
\end{lem}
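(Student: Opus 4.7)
The plan is to prove Lemma \ref{lem:main_lem} by strong induction on the iteration index $t$, proving all of (a)--(h) simultaneously at each step, following the scaffold of Rush--Venkataramanan \cite{RushV18} but adapted to account for (i) the non-separable, sliding-window structure of $f_t$, (ii) the MRF prior on $\beta$, and (iii) the edge cases $\Gamma^{\edge}$ where windows protrude beyond $\Gamma$ and missing entries are filled by averaging as in \eqref{eq:def_missing_entry}. The base case ($t=0$) concerns only $\beta, w, q^0, m^0, b^0, h^1$: the concentration $\frac{1}{n}\|q^0\|^2 \doteq \sigma_0^2$ is precisely assumption \eqref{eq:qassumption}, the noise term $\frac{1}{n}\|w\|^2 \doteq \sigma^2$ is the sub-Gaussian assumption, and the remaining identities at $t=0$ follow from Lemma \ref{lem:hb_cond} specialized to $t=0$ combined with the MRF PL concentration (Lemma \ref{lem:PL_MRF_conc}) applied to sums of $\phi_h(f_0(\mathbf{0},\beta_{\Lambda_i}),\beta_{\Lambda_i})$.

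For the inductive step, I assume (a)--(h) hold for all iterations $r < t$ and prove them at $t$. The workhorse is Lemma \ref{lem:h_cond_pure}, which writes $h^{t+1}\vert_{\mscrs_{t+1,t}} \stackrel{d}{=} h^{t+1}_{\pure} + \sum_{r=0}^t \mathsf{d}^t_r \Delta_{r+1,r}$, where $\hat{h}^{t+1}_{\pure}$ together with $\hat{h}^1_{\pure},\ldots,\hat{h}^t_{\pure}$ is jointly Gaussian with the exact covariance prescribed by state evolution via \eqref{eq:tildeZcov}. I would first prove statement (b) by (1) replacing each $\hat{h}^{s+1}$ by $\hat{h}^{s+1}_{\pure}$ and invoking Lemma \ref{lem:PL_MRF_conc} on the resulting sum, whose input is the MRF $\beta$ together with an independent Gaussian field, a valid PL input with the required Dobrushin weak-dependence; and (2) bounding the error from this substitution using the pseudo-Lipschitz property of $\phi_h$, Cauchy--Schwarz, the inductive norm concentrations in (a) for iterations $\le t$, and the previously established concentration of $\|\hat{h}^{s+1}_{\pure}\|^2/|\Gamma|$ and $\|\beta\|^2/|\Gamma|$. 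Statements (c)--(f) then follow by choosing specific pseudo-Lipschitz $\phi_h$, together with the Lipschitz assumptions on $f_t, g_t, f_t', g_t'$ (which make derivatives of these quantities themselves valid PL(2) test functions); the identity $\xi_t \doteq \hat{\xi}_t$ is, at its heart, Stein's identity applied to the Gaussian limiting representation.

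Statement (g) follows from (d)--(e) by writing $\mathbf{Q}_{t+1}$ entry-wise, noting that it concentrates to $\tilde{C}^{t+1}$, which is invertible for $t < T^*$ by Lemma \ref{lem:Ct_invert}, and using that matrix inversion is Lipschitz on a neighborhood of any invertible matrix (hence concentration transfers through $(\cdot)^{-1}$); the projection-coefficient concentrations $\gamma^{t+1}\doteq \hat{\gamma}^{t+1}$ and $\alpha^t \doteq \hat{\alpha}^t$ follow since $\gamma^{t+1} = \mathbf{Q}_{t+1}^{-1} \cdot (Q_{t+1}^* q^{t+1}/n)$ and analogously for $\alpha$. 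Statement (h) follows from $\|q^{t+1}_{\perp}\|^2 = \|q^{t+1}\|^2 - (\gamma^{t+1})^* Q_{t+1}^* q^{t+1}$ together with (e), (g), and the definition of $(\sigma^{\perp}_{t+1})^2$ in \eqref{eq:sigperp_defs}. Finally, (a) for $t+1$ comes by expanding $\|\Delta_{t+2,t+1}\|^2/|\Gamma|$ using \eqref{eq:Dt1t} and controlling each summand: the $(\alpha^{t+1}_r - \hat{\alpha}^{t+1}_r)h^{r+1}$ terms use (g) and (d); the $\|m^{t+1}_\perp\|/\sqrt{n} - \tau^{\perp}_{t+1}$ block uses (h); and the final $Q_{t+2}\mathbf{Q}_{t+2}^{-1}(\cdots)$ term uses (f), (g), and the inductive norm bounds.

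The main obstacle is statement (b), and specifically verifying that Lemma \ref{lem:PL_MRF_conc} applies after the $h^{s+1} \to h^{s+1}_{\pure}$ substitution: the joint object to which one applies PL-concentration is an MRF $\beta$ coupled with correlated Gaussian fields, observed only through local windows $\Lambda_i$, so one must carefully package the windowed map $i \mapsto (\tau_0 [\tilde{\mathbf{Z}}_0]_{\Lambda_i},\ldots,\tau_t[\tilde{\mathbf{Z}}_t]_{\Lambda_i},\beta_{\Lambda_i})$ as a pseudo-Lipschitz function of a field satisfying the Dobrushin condition, and separately account for the $|\Gamma^{\edge}| = O(k|\Gamma|^{(p-1)/p})$ edge terms, whose contribution to $\sigma_t^2$ is tracked in \eqref{eq:SE1D}--\eqref{eq:SE2D} and whose fluctuations are controlled by the boundedness of $E$ and the Lipschitzness of $\eta_t$. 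Once this step is in place, propagating the concentration through the remaining statements is largely bookkeeping of the type in \cite{RushV18}.
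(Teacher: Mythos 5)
Your overall scaffold---induction on $t$, using Lemma \ref{lem:h_cond_pure} to write $h^{t+1}$ as a ``pure'' Gaussian part plus deviation terms, establishing part (b) as the workhorse and deriving (c)--(f) by choosing specific PL(2) test functions, then (g), (h), and (a) by the bookkeeping of \cite{RushV18}---matches the paper's proof. However, there is a genuine gap at exactly the point you flag as ``the main obstacle,'' and the resolution you sketch would not work. You propose to apply Lemma \ref{lem:PL_MRF_conc} to the sum $\frac{1}{|\Gamma|}\sum_i \phi_h(\tau_0[\tilde{\mathbf{Z}}_0]_{\Lambda_i},\ldots,\tau_t[\tilde{\mathbf{Z}}_t]_{\Lambda_i},\beta_{\Lambda_i})$ by ``packaging'' the MRF together with the Gaussian fields as a single field satisfying the Dobrushin condition. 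The obstruction is not the Dobrushin interdependence matrix (the Gaussian sites are independent of everything, so that part is fine); it is that the underlying concentration inequality, Lemma \ref{lem:kulske}, controls deviations by the $\ell^2$-norm of the variation vector $\underline{\delta}(F)$ with $\delta_i(F)=\sup_{\xi,\xi':\xi_{i^c}=\xi'_{i^c}}|F(\xi)-F(\xi')|$. For a PL(2) function of an \emph{unbounded} Gaussian coordinate this supremum is infinite, so the lemma returns an empty statement. Boundedness of the state space $E$ is used critically in the proof of Lemma \ref{lem:PL_MRF_conc} to bound $\delta_i(F)$ by $dL(1+2\sqrt{d}M)2\sqrt{d}M$; no such bound exists once Gaussian coordinates are adjoined.

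The paper's proof of part (b) instead splits the concentration into two stages by conditioning. First, conditional on $\beta=a$, the sum over the Gaussian randomness is handled by a separate result, Lemma \ref{lem:PL_overlap_gauss_conc_ext}, which proves exponential concentration for sums of PL(2) functions of \emph{overlapping} windows of i.i.d.\ Gaussians by partitioning $\Gamma$ into $|\Lambda|$ index classes on which the windows are disjoint (hence independent) and running a Cram\'er--Chernoff argument with the sub-Gaussian moment bound of Lemma \ref{lem:exp_PL_subgauss_vector_conc}; the resulting constants are uniform in $a$ because $E$ is bounded. Second, the conditional expectation $i\mapsto\mathbb{E}_{\tilde{\mathbf{Z}}}[\phi_h(\cdots,\beta_{\Lambda_i})]$ is a PL(2) function of $\beta_{\Lambda_i\cap\Gamma}$ alone (Lemmas \ref{lem:expZ} and \ref{lem:PLwithAvg}), with inputs confined to the bounded set $E$, and only then is the Dobrushin concentration Lemma \ref{lem:PL_MRF_conc} invoked. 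Your proposal is missing this two-stage decomposition and, in particular, the overlapping-window Gaussian concentration lemma, which is the main new technical ingredient of the proof; without it the argument for (b), and hence for everything downstream, does not close. (A secondary, fixable issue: in the decomposition $h^{t+1}|_{\mscrs_{t+1,t}}\overset{d}{=}h^{t+1}_{\pure}+\sum_{r=0}^{t}\mathsf{d}^t_r\Delta_{r+1,r}$ the term $\Delta_{t+1,t}$ appears, so part (a) at the \emph{current} step must be established before part (b), as the paper does; your stated ordering proves (b) first.)
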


%\textbf{Remarks}:

\subsection{Proof of Theorem \ref{thm:main_amp_perf}}

\begin{proof}
Applying part (b) of Lemma \ref{lem:main_lem} to a PL(2) function $\phi_h: \mathbb{R}^{2|\Lambda|} \to \mathbb{R}$,
\ben
\begin{split}
&P\Big(\Big\lvert \frac{1}{|\Gamma|} \sum_{i\in\Gamma} \Big(\phi_h(\hat{h}^{t+1}_{\Lambda_i}, \beta_{\Lambda_i}) - \mathbb{E}[\phi_h(\tau_t Z_{\Lambda_i}, \beta_{\Lambda_i})] \Big)\Big \lvert \geq \e \Big) \\
&\hspace*{2.5in}\leq K_{k,t} e^{-\kappa_{k,t} n \e^2} ,
\end{split}
\een
where the random field $\beta \in E^{\Gamma} \sim \mu$ is independent of $Z \in \mathbb{R}^{\Gamma}$ having i.i.d.\ standard normal entries. Now for $i \in \Gamma$ let
\be
 \phi_h(\hat{h}^{t+1}_{\Lambda_i}, \beta_{\Lambda_i}) := \phi(\eta_{t}(\beta_{\Lambda_i} - \hat{h}^{t+1}_{\Lambda_i}), \beta_i),
 \label{eq:phih_map}
\ee
where $\phi: \mathbb{R}^{2} \to \mathbb{R}$ is the PL(2) function in the statement of the theorem. The function  $\phi_h(\hat{h}^{t+1}_{\Lambda_i}, \beta_{\Lambda_i})$ in \eqref{eq:phih_map} is PL(2) since $\phi$ is PL(2) and $\eta_t$ is Lipschitz. We therefore obtain
\be
\begin{split}
&P\Big( \Big\lvert \frac{1}{|\Gamma|} \sum_{i\in\Gamma} \Big(\phi(\eta_{t}(\beta_{\Lambda_i} - \hat{h}^{t+1}_{\Lambda_i}), \beta_i)\\
&\qquad  - \mathbb{E}[ \phi(\eta_{t}(\beta_{\Lambda_i} - \tau_t Z_{\Lambda_i}), \beta_i)]\Big)\Big \lvert \geq \e\Big) \leq K_{k,t} e^{-\kappa_{k,t} n \e^2}. \nonumber
\end{split}
\ee
The proof is completed by noting from \eqref{eq:amp2} and \eqref{eq:hqbm_def_AMP} that 
$\beta^{t+1}_i = \eta_{t}([\mc{V}^{-1}\left(A^* z^t\right) + \beta^t]_{\Lambda_i}) = \eta_{t}(\beta_{\Lambda_i} - \hat{h}^{t+1}_{\Lambda_i})$.
\end{proof}

% !TEX root = main.tex

\section{Proof of Lemma \ref{lem:main_lem}} \label{sec:main_lem_proof}

We make use of concentration results listed in Appendices \ref{app:conc_lemma}, \ref{app:other}, and \ref{app:conc_dependent}, where  
Appendix \ref{app:conc_dependent} contains concentration results for dependent variables that were needed to provide the new results in this paper. Note that the lemmas that are stated in Appendices are labeled by capital letters with numbers (e.g., Lemma \ref{sums}), whereas the lemmas that are stated in the body are labeled by numbers (e.g., Lemma \ref{lem:Ct_invert}).

The proof of  Lemma \ref{lem:main_lem} proceeds by induction on $t$.  We label as $\mathcal{H}^{t+1}$ the results \eqref{eq:Ha}, \eqref{eq:Hb1}, \eqref{eq:Hc}, \eqref{eq:Hd}, \eqref{eq:He}, \eqref{eq:Hf}, \eqref{eq:Hg}, \eqref{eq:Hh} and similarly as $\mathcal{B}^t$ the results %\eqref{eq:Ba}, \eqref{eq:Bb1}, 
\eqref{eq:Bc}, \eqref{eq:Bd}, \eqref{eq:Be}, \eqref{eq:Bf}, \eqref{eq:Bg}, \eqref{eq:Bh}.  The proof consists of four steps: \textbf{(1)} proving that $\mathcal{B}_0$ holds, \textbf{(2)} proving that $\mathcal{H}_1$ holds, \textbf{(3)} assuming that $\mathcal{B}_r$ and $\mathcal{H}_s$ hold for all $r < t $ and $s \leq t $, then proving that $\mathcal{B}_t$ holds, and \textbf{(4)} assuming that $\mathcal{B}_r$ and $\mathcal{H}_s$ hold for all $r \leq t $ and $s \leq t$, then proving that $\mathcal{H}_{t+1}$ holds.  

The proof of steps \textbf{(1)} and \textbf{(3)} -- the $\mathcal{B}$ steps -- follow as in \cite{RushV18}. To see that this is the case, notice that in the proof for \cite[Lemma 6]{RushV18}, the results given by $\mathcal{B}_t (b)$ - $\mathcal{B}_t(h)$ involve the sequence of functions $\{g_r\}_{0\leq r\leq t}$ and the sequence of vectors $\{b^r\}_{0\leq r\leq t}$. In our case, the definition of the (separable) functions $g_t(\cdot)$ in \eqref{eq:hqbm_def} is the same that in \cite[(4.1)]{RushV18} and the conditional distribution of $\{b^r\}_{0\leq r\leq t}$ given in Lemma \ref{lem:hb_cond} has the same expression as that in \cite[Lemma 6]{RushV18}. 
Therefore, the proof of $\mathcal{B}_t (b)$ - $\mathcal{B}_t(h)$  in \cite{RushV18} is directly applicable here. 

Now consider $\mathcal{B}_t (a)$. When $t=0$, it only involves $\|q^0\|$, which has the same assumption in our case and in \cite{RushV18}. When $t>0$, the proof uses induction hypothesis $\mathcal{B}_{0} (d)$-$\mathcal{B}_{t-1} (d)$, $\mathcal{B}_{t-1}(e)$, and $\mathcal{H}_{t}(g), \mathcal{H}_{t}(h)$. The statement of those hypotheses have the same form as in \cite{RushV18}. Although the concentration constants in those hypotheses have different definitions in this paper due to using non-separable denoisers, it does not change the proof for $\mathcal{B}_t (a)$, since the actual values of the concentration constants are not involved in the proof. Therefore, we do not repeat the proof of steps \textbf{(1)} and \textbf{(3)} here.
In what follows, we only show steps \textbf{(2)} and \textbf{(4)}.

For each step, in parts $(a)$--$(h)$ of the proof, we use $K$ and $\kappa$ to label universal constants,  meaning that they do not depend on $n$ or $\e$, but may depend on $t$ and $k$, in the concentration upper bounds.

\subsection{Step 2: Showing that $\mc{H}_1$ holds}
  
Throughout the proof we will make use of a function $\mc{S}:\mathbb{R}^{\Lambda}\to\mathbb{R}$ that selects the center coordinate of its argument. For example, for $v \in \mathbb{R}^{|\Gamma|}$, 
\be
\mc{S}(v_{\Lambda_i}) = v_i.
\label{eq:operator_S}
\ee
We will only use $\mc{S}$ in cases where such a ``center point" is well-defined. Notice that $\mc{S}$ is Lipschitz, since $|\mc{S}(x)-\mc{S}(x')|=|x_c-x_c'|\leq \|x-x'\|,$ for all $x,x'\in\mathbb{R}^{\Lambda}$, where $x_c$ (respectively, $x'_c$) is the center coordinate of $x$ (respectively, $x'$). Moreover, if a function $f:\mathbb{R}^{\Lambda\times\tilde{\Lambda}}\to\mathbb{R}$ is defined as $f(x,y):=\mc{S}(x)$ with arbitrary but fixed $\tilde{\Lambda}$, then $f$ is Lipschitz, because $|f(x,y)-f(x',y')|=|\mc{S}(x)-\mc{S}(x')| \leq \|x-x'\| \leq \|(x,y)-(x',y')\|$.  We are now ready to prove $\mc{H}_1 (a) - \mc{H}_1 (h)$.
%---------------------------------------------------------------------------------------------------------------------------------------------------------

\textbf{(a)} 
The definition of $\Delta_{1,0}$ is given in \eqref{eq:D10}. First notice that by Lemma \ref{fact:gauss_p0}, we have $\mathsf{P}_{q^0}^{\parallel} Z_0 \overset{d}{=} \frac{q^0}{\|q^0\|}\bar{Z}_0$, where $\bar{Z}_0$ is a standard normal random variable on $\mathbb{R}$. Using this fact and \eqref{eq:D10}, we have
\begin{equation}
\begin{split}
\Delta_{1,0} &\overset{d}{=} \Big(\frac{\|m^0\|}{\sqrt{n}} - \tau_0\Big) Z_0 - \frac{\|m^0\|}{\sqrt{n}}\Big(\frac{q^0}{\|q^0\|}\Big)\bar{Z}_0\\
& + q^0\Big(\frac{n}{\|q^0\|^2}\Big)\Big(\frac{(b^0)^*m^0}{n} - \frac{\xi_0 \|q^0\|^2}{n}\Big).
\end{split}
\label{eq:D10_H1a}
\end{equation}
By applying the triangle inequality to the norm of the RHS of \eqref{eq:D10_H1a} and then applying Lemma \ref{sums}, %it follows
\begin{align}
&P\Big(\frac{\|\Delta_{1,0}\|^2}{|\Gamma|} \geq \e \Big) = P\Big(\frac{\|\Delta_{1,0}\|}{\sqrt{|\Gamma|}} \geq \sqrt{\e} \Big) \nonumber \\
& \leq P\Big(\Big\vert \frac{\|m^0\|}{\sqrt{n}}-\tau_0 \Big\vert \frac{\|Z_0\|}{\sqrt{|\Gamma|}} \geq \frac{\sqrt{\e}}{3} \Big) + P\Big( \frac{\|m^0\| |\bar{Z}_0|}{\sqrt{n|\Gamma|}} \geq \frac{\sqrt{\e}}{3} \Big)\nonumber\\
&+ P\Big(\Big\vert \frac{(b^0)^*m^0}{\sqrt{n}\|q^0\|}  - \frac{\xi_0 \|q^0\|}{\sqrt{n}}\Big\vert \geq \frac{\sqrt{\e}}{3\sqrt{\delta}}\Big).
\label{eq:3terms_H1a}
\end{align}
Label the three terms on the right-hand side (RHS) of \eqref{eq:3terms_H1a} as $T_1 - T_3$. We will show that each term is bounded by $Ke^{-\kappa n \e}$.

First consider $T_1$.
\begin{align}
T_1 &\leq P\Big(\Big\vert \frac{\|m^0\|}{\sqrt{n}}-\tau_0 \Big\vert \Big( \Big\vert \frac{\|Z_0\|}{\sqrt{|\Gamma|}} - 1\Big\vert + 1\Big) \geq \frac{\sqrt{\e}}{3}\Big)\nonumber\\ 
&\overset{(a)}{\leq}  P\Big(\Big\vert \frac{\|m^0\|}{\sqrt{n}}-\tau_0 \Big\vert \geq \frac{\sqrt{\e}}{6}\Big) + P\Big(\Big\vert \frac{\|Z_0\|}{\sqrt{|\Gamma|}} - 1\Big\vert \geq \frac{\sqrt{\e}}{3}\Big)\nonumber\\
&\overset{(b)}{\leq} Ke^{-\kappa n \e} + Ke^{-\kappa n \e},
\label{eq:T1_H1a}
\end{align}
where step $(b)$ follows by Lemma \ref{sqroots}, Lemma \ref{subexp}, and $\mathcal{B}_0(e)$.
To see that step $(a)$ in \eqref{eq:T1_H1a} holds, we notice that
\begin{equation}
\begin{split}
&\Big\{\Big\vert \frac{\|m^0\|}{\sqrt{n}}-\tau_0 \Big\vert < \frac{\sqrt{\e}}{6}\Big\} \cap \Big\{\Big\lvert \frac{\|Z_0\|}{\sqrt{|\Gamma|}} - 1\Big\vert < \frac{\sqrt{\e}}{3}\Big\}\\
& \subset \Big\{ \Big\vert \frac{\|m^0\|}{\sqrt{n}}-\tau_0\Big\vert \Big(\Big\vert \frac{\|Z_0\|}{\sqrt{|\Gamma|}} - 1\Big\vert + 1\Big) < \frac{\sqrt{\e}}{3} \Big\},
\end{split}
\label{eq:intersect_H1a}
\end{equation}
since if the two events on the left-hand side (LHS) of \eqref{eq:intersect_H1a} hold, then using that $\e<1$,
\begin{equation*}
\Big\vert \frac{\|m^0\|}{\sqrt{n}}-\tau_0 \Big\vert \Big( \Big\vert \frac{\|Z_0\|}{\sqrt{|\Gamma|}} - 1\Big\vert + 1\Big) < \frac{\sqrt{\e}}{6}\Big(\frac{\sqrt{\e}}{3}+1\Big) < \frac{\sqrt{\e}}{3}.
\end{equation*}
Taking the complement on both sides of \eqref{eq:intersect_H1a},
\begin{equation*}
\begin{split}
&\Big\{ \Big\vert \frac{\|m^0\|}{\sqrt{n}}-\tau_0 \Big\vert \Big(\Big\vert \frac{\|Z_0\|}{\sqrt{|\Gamma|}} - 1\Big\vert + 1\Big) \geq \frac{\sqrt{\e}}{3} \Big\}\\
& \subset \Big\{\Big\vert \frac{\|m^0\|}{\sqrt{n}}-\tau_0\Big\vert \geq \frac{\sqrt{\e}}{6}\Big\} \cup \Big\{\Big\vert \frac{\|Z_0\|}{\sqrt{|\Gamma|}} - 1\Big\vert \geq \frac{\sqrt{\e}}{3}\Big\}.
\end{split}
\end{equation*}
Then step $(a)$ in \eqref{eq:T1_H1a} follows by the union bound.

Next consider $T_2$.
\begin{equation*}
\begin{split}
&T_2 \leq P\Big( \Big(\Big\vert  \frac{\|m^0\|}{\sqrt{n}} - \tau_0 \Big\vert + \tau_0\Big)\frac{|\bar{Z}_0|}{\sqrt{\Gamma}} \geq \frac{\sqrt{\e}}{3}\Big)\\ 
&\overset{(a)}{\leq} P\Big(\Big\vert  \frac{\|m^0\|}{\sqrt{n}} - \tau_0 \Big\vert\geq \frac{\sqrt{\e}}{3}\Big) 
   + P\Big( \frac{|\bar{Z}_0|}{\sqrt{\Gamma}} \geq \frac{\sqrt{\e}}{6}\min(\tau_0^{-1},1)\Big)\\
&\overset{(b)}{\leq} Ke^{-\kappa n \e} + Ke^{-\kappa n \e},
\end{split}
\end{equation*}
where step $(a)$ follows by similar justification as that for step $(a)$ in \eqref{eq:T1_H1a} and step $(b)$ follows by Lemma \ref{sqroots}, Lemma \ref{lem:normalconc}, and $\mathcal{B}_0(e)$.

Finally consider $T_3$.
\begin{equation*}
\begin{split}
T_3 &\overset{(a)}{\leq} P\Big( \Big\vert \frac{(b^0)^* m^0}{n} \frac{\sqrt{n}}{\|q^0\|} - \hat{\xi}_0\sigma_0 \Big\vert \geq \frac{\sqrt{\e}}{6\sqrt{\delta}} \Big)\\
& + P\Big( \xi_0 \frac{\|q^0\|}{\sqrt{n}} -\hat{\xi}_0\sigma_0\geq \frac{\sqrt{\e}}{6\sqrt{\delta}} \Big) \overset{(b)}{\leq} Ke^{-\kappa n \e } + Ke^{-\kappa n \e},
\end{split}
\end{equation*}
where step $(a)$ follows by Lemma \ref{sums} and step $(b)$ follows by Lemma \ref{products}, $\mathcal{B}_0(f)$, and the assumption on $\|q^0\|$ given in \eqref{eq:qassumption}.

\textbf{(b)} 
Let $\hat{Z}_0:=\mc{V}^{-1}(Z_0) \in \mathbb{R}^{\Gamma}$ and $\hat{\Delta}_{1,0}:=\mc{V}^{-1}(\Delta_{1,0}) \in \mathbb{R}^{\Gamma}$ be the array versions of the vectors $Z_0$ and $\Delta_{1,0}$, respectively. For $t=0$, the left-hand side (LHS) of \eqref{eq:Hb1} can be bounded as
\begin{align}
&P\Big(\Big \lvert \frac{1}{|\Gamma|} \sum_{i\in\Gamma} \Big( \phi_h (\hat{h}^{1}_{\Lambda_i}, \beta_{\Lambda_i}) - \mathbb{E}[\phi_h(\tau_0 [\tilde{\mathbf{Z}}_0]_{\Lambda_i}, \beta_{\Lambda_i})]\Big) \Big \lvert \geq \epsilon \Big) \nonumber\\
&\overset{(a)}{=} P\Big(\Big \lvert \frac{1}{|\Gamma|} \sum_{i\in\Gamma} \Big( \phi_h([\tau_0 \hat{Z}_{0} + \hat{\Delta}_{1,0}]_{\Lambda_i}, \beta_{\Lambda_i})\nonumber\\
& \hspace*{1.3in} - \mathbb{E}[\phi_h(\tau_0 [\tilde{\mathbf{Z}}_0]_{\Lambda_i}, \beta_{\Lambda_i})] \Big) \Big \lvert \geq \epsilon \Big)  \nonumber\\
&\overset{(b)}{\leq} P\Big(\Big \lvert \frac{1}{|\Gamma|} \sum_{i\in\Gamma} \Big(\mathbb{E}_{\tilde{\mathbf{Z}}_0}[\phi_h(\tau_0 [\tilde{\mathbf{Z}}_0]_{\Lambda_i},  \beta_{\Lambda_i})]\nonumber\\
& \hspace*{1.3in}- \mathbb{E}_{\tilde{\mathbf{Z}}_0, \beta}[\phi_h(\tau_0 [\tilde{\mathbf{Z}}_0]_{\Lambda_i}, \beta_{\Lambda_i})]\Big) \Big \lvert \geq \frac{\epsilon}{3} \Big)\nonumber \\
&  + P\Big(\Big \lvert \frac{1}{|\Gamma|} \sum_{i\in\Gamma} \Big(\phi_h(\tau_0 [\hat{Z}_{0}]_{\Lambda_i},  \beta_{\Lambda_i})\nonumber\\
&\hspace*{1.3in} - \mathbb{E}_{\tilde{\mathbf{Z}}_0}[\phi_h(\tau_0 [\tilde{\mathbf{Z}}_0]_{\Lambda_i},  \beta_{\Lambda_i})]\Big) \Big \lvert \geq \frac{\epsilon}{3} \Big) \nonumber\\
& + P\Big(\Big \lvert \frac{1}{|\Gamma|} \sum_{i\in\Gamma} \Big(\phi_h([\tau_0 \hat{Z}_{0} + \hat{\Delta}_{1,0}]_{\Lambda_i},  \beta_{\Lambda_i})\nonumber\\
&\hspace*{1.3in} - \phi_h(\tau_0 [\hat{Z}_{0}]_{\Lambda_i}, \beta_{\Lambda_i})\Big) \Big \lvert \geq \frac{\epsilon}{3} \Big).
\label{eq:phih_0}
\end{align}
Step $(a)$ follows from the conditional distribution of $h^1$ given in Lemma \ref{lem:hb_cond} \eqref{eq:Ha_dist} and since $\hat{h}^1=\mc{V}^{-1}(h^1)$ and $\tau_0\hat{Z}_0+\hat{\Delta}_{1,0}=\mc{V}^{-1}(\tau_0 Z_0 + \Delta_{1,0})$. Step $(b)$ follows from Lemma \ref{sums}. Label the terms on the RHS of \eqref{eq:phih_0} as $T_1 -T_3$. We show that each of these terms  is bounded above by $K e^{-\kappa n \e^2}.$ 

First, consider $T_1$. Recall the definition of the functions $\mc{T}_i$ for $i\in\Gamma$ in \eqref{eq:def_T}, which extends an array in $\mathbb{R}^{\Lambda_i\cap\Gamma}$ to an array in $\mathbb{R}^{\Lambda}$ by defining the extended entries to be the average of the entries in the original array. For arbitrary but fixed $s\in\mathbb{R}^{\Lambda}$, the function $\tilde{\phi}_{h,i}:\mathbb{R}^{\Lambda_i\cap\Gamma\times\Lambda}\to\mathbb{R}$ defined as $\tilde{\phi}_{h,i}(v,s):=\phi_h(\mc{T}_i(v),s)$ is PL(2) by Lemma \ref{lem:PLwithAvg}. Then it follows from Lemma \ref{lem:expZ} that the function $\phi_{1,i}:\mathbb{R}^{\Lambda}\to\mathbb{R}$ 
defined as $\phi_{1,i}(s):=\mathbb{E}_{\tilde{\mbf{Z}}_0}[\tilde{\phi}_{h,i}(\tau_0[\tilde{\mbf{Z}}_0]_{\Lambda_i\cap\Gamma},s)]$ is PL(2), since $[\tilde{\mbf{Z}}_0]_{\Lambda_i\cap\Gamma}$ is an array of i.i.d.\ standard norm random variables for all $i\in\Gamma$. Notice that $\mathbb{E}_{\tilde{\mbf{Z}}_0}[\tilde{\phi}_{h,i}(\tau_0[\tilde{\mbf{Z}}_0]_{\Lambda_i\cap\Gamma},s)]=\mathbb{E}_{\tilde{\mbf{Z}}_0}[\phi_{h}( \tau_0[\tilde{\mbf{Z}}_0]_{\Lambda_i},s)]$ by the definition of $\tilde{\phi}_{h,i}$ and $\mc{T}_i$. Therefore,
\begin{equation*}
\begin{split}
T_1 &\overset{(a)}{=}P\Big(\Big \lvert \frac{1}{|\Gamma|} \sum_{i\in\Gamma} \Big[\phi_{1,i}(\mc{T}_i(\beta_{\Lambda_i\cap\Gamma})) - \mathbb{E}[\phi_{1,i}(\mc{T}_i(\beta_{\Lambda_i\cap\Gamma}))]\Big] \Big \lvert \geq \e \Big)\\
&\overset{(b)}{\leq} Ke^{-\kappa n \e^2},
\end{split}
\end{equation*}
where in step $(a)$ we use the definition of $\mc{T}_i$ in \eqref{eq:def_T} and step $(b)$ follows from Lemma \ref{lem:PL_MRF_conc} by noticing from Lemma \ref{lem:PLwithAvg} that the function $\phi_{1,i}(\mc{T}_i(\cdot))$ is PL(2) for all $i\in\Gamma$.

Next, consider $T_2$. We use iterated expectation to condition on the value of $\beta$.  Then $T_2$ an be expressed as an expectation as follows,
\begin{align*}
T_2 &= \mathbb{E}_{\beta} \Big[P\Big(\Big \lvert \frac{1}{|\Gamma|} \sum_{i\in\Gamma}(\phi_h(\tau_0 [\hat{Z}_{0}]_{\Lambda_i},  \beta_{\Lambda_i})\\
&\qquad\qquad\qquad - \mathbb{E}_{\tilde{\mathbf{Z}}_0}[\phi_h(\tau_0 [\tilde{\mathbf{Z}}_0]_{\Lambda_i},  \beta_{\Lambda_i})]) \Big \lvert \geq \frac{\epsilon}{3} \,\, \Big \vert  \,\, \beta  \Big)\Big].
\end{align*}
Define the function $f:E^{\Gamma}\to [0,1]$ as 
\begin{align*}
f(a)&:=P\Big( \Big \lvert \frac{1}{|\Gamma|} \sum_{i\in\Gamma} \Big(\phi_h(\tau_0 [\hat{Z}_{0}]_{\Lambda_i},  \beta_{\Lambda_i})\\
& \qquad\qquad- \mathbb{E}_{\tilde{\mathbf{Z}}_0}[\phi_h(\tau_0 [\tilde{\mathbf{Z}}_0]_{\Lambda_i},  \beta_{\Lambda_i})]\Big) \Big \lvert \geq \frac{\epsilon}{3} \,\, \Big \vert \,\, \beta=a  \Big)\\
&=P\Big(\Big \lvert \frac{1}{|\Gamma|} \sum_{i\in\Gamma} \Big(\phi_h(\tau_0 [\hat{Z}_{0}]_{\Lambda_i},  a_{\Lambda_i})\\
&\qquad\qquad - \mathbb{E}_{\tilde{\mathbf{Z}}_0}[\phi_h(\tau_0 [\tilde{\mathbf{Z}}_0]_{\Lambda_i},  a_{\Lambda_i})]\Big) \Big \lvert \geq \frac{\epsilon}{3}  \Big).
\end{align*}
For any fixed $a\in E^{\Gamma}$, define a function $\phi_{2,i}:\mathbb{R}^{\Lambda}\to\mathbb{R}$ as $\phi_{2,i}(s):=\phi_h(s,a_{\Lambda_i})$ for each $i\in\Gamma$ and note that it is PL(2) with PL constant upper-bounded by $L(1+2\sqrt{|\Lambda|}M)$, where $L$ is the PL constant for $\phi_h$ and $M$ is such that $|x|\leq M$ for all $x\in E$, since by the pseudo-Lipschitz property of $\phi_h$ and the triangle inequality,
\begin{align*}
&\abs{\phi_{2,i}(x) - \phi_{2,i}(x)} = \abs{\phi_h(x,a_{\Lambda_i})-\phi_h(y,a_{\Lambda_i})} \\ %\leq L\Big(1+\sqrt{\|a_{\Lambda_i}\|^2+\|x\|^2}+\sqrt{\|a_{\Lambda_i}\|^2+\|x\|^2}\Big)\|x-y\|\\
&\qquad\leq L\Big(1+2\|a_{\Lambda_i}\| + \|x\|+\|y\|\|\Big)\|x-y\| \\
&\qquad\leq L(1+2\sqrt{|\Lambda|}M)(1+\|x\|+\|y\|)\|x-y\|.
\end{align*}
Using $L(1+2\sqrt{|\Lambda|}M)$ as the PL constant for $\phi_{2,i}$ for all $i\in\Gamma$, then 
\begin{equation*}
\begin{split}
f(a) &= P\Big( \Big\lvert \frac{1}{|\Gamma|} \sum_{i\in\Gamma} \Big(\phi_h(\tau_0 \mc{T}_i([\hat{Z}_{0}]_{\Lambda_i\cap\Gamma}),  a_{\Lambda_i})\\
& - \mathbb{E}_{\tilde{\mathbf{Z}}_0}[\phi_h(\tau_0 \mc{T}_i([\tilde{\mathbf{Z}}_0]_{\Lambda_i\cap\Gamma}),  a_{\Lambda_i})] \Big) \Big \lvert \geq \frac{\epsilon}{3}  \Big)\leq K e^{-\kappa |\Gamma| \e^2},
\end{split}
\end{equation*} 
where the last inequality follows from Lemma \ref{lem:PL_overlap_gauss_conc_ext} by noticing that $\mathbb{E}_{\tilde{\mathbf{Z}}_0}[\phi_h(\tau_0 \mc{T}_i([\tilde{\mathbf{Z}}_0]_{\Lambda_i\cap\Gamma}),  a_{\Lambda_i})]=\mathbb{E}_{\hat{Z}_0}[\phi_h(\tau_0 \mc{T}_i([\hat{Z}_0]_{\Lambda_i\cap\Gamma}),  a_{\Lambda_i})]$. Therefore, $T_2 = \mathbb{E}[f(\beta)]\leq K e^{-\kappa |\Gamma| \e^2}$, since $K$ and $\kappa$ don't depend on $\beta$ (as it doesn't show up in the pseudo-Lipschitz constant $L(1+2\sqrt{|\Lambda|}M)$). 

Finally, consider $T_3$, the third term on the RHS of \eqref{eq:phih_0}.
\begin{align}
T_3 &\overset{(a)}{\leq} P\Big(\frac{1}{|\Gamma|} \sum_{i\in\Gamma} L (1 + ||[\tau_0 \hat{Z}_{0} + \hat{\Delta}_{1,0}]_{\Lambda_i}||\nonumber\\
&\hspace*{1in} + ||\tau_0 [\hat{Z}_{0}]_{\Lambda_i}||)||[\hat{\Delta}_{1,0}]_{\Lambda_i}|| \geq \frac{\epsilon}{3} \Big) \nonumber \\
&\overset{(b)}{\leq} P\Big( \frac{1}{\sqrt{|\Gamma|}}||\hat{\Delta}_{1,0}||  \Big(1 + \sqrt{\frac{2d}{|\Gamma|}} ||\hat{\Delta}_{1,0}||\nonumber\\
&\hspace*{1in} + 2\tau_0  \sqrt{\frac{2d}{|\Gamma|}} ||\hat{Z}_{0}||\Big)\geq \frac{\epsilon}{3L\sqrt{6d}} \Big).  \label{eq:B1func1eq1}
\end{align}
Step $(a)$ follows from the fact that $\phi_h$ is PL(2). 
Step $(b)$ uses $||[\tau_0 \hat{Z}_{0} + \hat{\Delta}_{1,0}]_{\Lambda_i}|| \leq ||\tau_0 [\hat{Z}_{0}]_{\Lambda_i}|| + ||[\hat{\Delta}_{1,0}]_{\Lambda_i}||$ by the triangle inequality, the Cauchy-Schwarz inequality, the fact that for $a \in \mathbb{R}^\Gamma$, $\sum_{i\in\Gamma} \norm{a_{\Lambda_i}}^2 \leq 2d \norm{a}^2$, where $d=|\Lambda| = (2k+1)^p$, and the following application of Lemma \ref{lem:squaredsums}:
\ben
\begin{split}
&\sum_{i\in\Gamma} (1 + ||[\hat{\Delta}_{1,0}]_{\Lambda_i}|| + 2||\tau_0 [\hat{Z}_{0}]_{\Lambda_i}||)^2 \\
&\leq 3(|\Gamma| + 2d ||\hat{\Delta}_{1,0}||^2 + 4 \tau_0^2 2d ||\hat{Z}_{0}||^2).
\end{split}
\een
From \eqref{eq:B1func1eq1},  we have
\ben
\begin{split}
T_3 &\leq  P\Big( \frac{\lvert \lvert \hat{Z}_{0} \lvert \lvert }{\sqrt{|\Gamma|}}\geq 2 \Big) + P\Big( \frac{ \lvert \lvert\hat{\Delta}_{1,0}\lvert \lvert}{\sqrt{|\Gamma|}} \geq \frac{\frac{\e}{\sqrt{2d}}  \min\{1, \frac{1}{3L \sqrt{3}}\}}{2 + 4 \tau_0 \sqrt{2d}} \Big) \\ &\overset{(a)}{\leq} e^{-\delta n} + K e^{-\kappa n \e^2},
\end{split}
\een
where we use Lemma \ref{subexp} and  $\mc{H}_1 (a)$ to obtain step $(a)$.

\textbf{(c)} We first show concentration for $\frac{1}{n}(h^{1})^* \mc{V}(\beta)=\frac{1}{n}\sum_{i\in\Gamma}\hat{h}_i^1\beta_i$. Let the function $\phi_1:\mathbb{R}^{2|\Lambda|}\to  \mathbb{R}$ be defined as $\phi_1(x,y):=\mc{S}(x)\mc{S}(y)$ for any $(x,y)\in\mathbb{R}^{\Lambda\times\Lambda}$, where the operator $\mc{S}$ is defined in \eqref{eq:operator_S}. Then, using the fact that $\phi_1(\hat{h}^1_{\Lambda_i},\beta_{\Lambda_i})=\hat{h}_i^1\beta_i$ and $\mathbb{E}[\phi_1(\tau_0[\tilde{\mathbf{Z}}_0]_{\Lambda_i},\beta_{\Lambda_i})]=\mathbb{E}[[\tau_0\tilde{\mathbf{Z}}_0]_i]\mathbb{E}[\beta_i]=0$ for all $i\in\Gamma$, since $[\tilde{\mathbf{Z}}_0]_i$ has zero-valued mean and is independent of $\beta_i$, we find
\begin{align*}
&P\Big( \Big \lvert \frac{(h^{1})^* \mc{V}(\beta)}{n} \Big \lvert  \geq \e\Big)\\
&= P\Big( \Big \lvert  \frac{1}{|\Gamma|}\sum_{i\in\Gamma}\Big( \phi_1(\hat{h}^1_{\Lambda_i},\beta_{\Lambda_i}) - \mathbb{E}[\phi_1(\tau_0[\tilde{\mathbf{Z}}_0]_{\Lambda_i},\beta_{\Lambda_i})]\Big) \Big \lvert  \geq \delta\e \Big). %\leq Ke^{-\kappa n \e^2 }.
\end{align*}
Finally, note that $\phi_1$ is PL(2) since $\mc{S}$ is Lipschitz by Lemma \ref{lem:Lprods}, hence, we can apply $\mc{H}_1 (b)$ to give the desired upper bound. 

Next, we show concentration for $\frac{1}{n}(h^{1})^* q^0 = \frac{1}{n}\sum_{i\in\Gamma}\hat{h}_i^1 \hat{q}^0_i$. Recall, $\hat{q}^0_i=f_0(\mathbf{0},\beta_{\Lambda_i})$ for all $i\in\Gamma$. The function $\phi_2:\mathbb{R}^{2|\Lambda|}\to\mathbb{R}$ defined as $\phi_2(x,y):=\mc{S}(x)f_0(\mathbf{0},y)$ is PL(2) by Lemma \ref{lem:Lprods} since $\mc{S}$ and $f_0$ are both Lipschitz. Notice that $\phi_2(\hat{h}^1_{\Lambda_i},\beta_{\Lambda_i})=\hat{h}_i^1\hat{q}_i^0$ and $\mathbb{E}[\phi_2(\tau_0[\tilde{\mathbf{Z}}_0]_{\Lambda_i},\beta_{\Lambda_i})]=\mathbb{E}[\tau_0[\tilde{\mathbf{Z}}_0]_i]\mathbb{E}[f_0(\mathbf{0},\beta_{\Lambda_i})]=0$ for all $i \in \Gamma$ since $[\tilde{\mathbf{Z}}_0]_i$ has zero-valued mean and is independent of $\beta$.  
Therefore, using $\mc{H}_1 (b)$, 
\begin{align*}
&P\Big(\Big \lvert \frac{(h^{1})^* q^0}{n} \Big \lvert  \geq \e\Big)\\
&= P\Big(\Big \lvert  \frac{1}{|\Gamma|}\sum_{i\in\Gamma}\Big( \phi_2(\hat{h}^1_{\Lambda_i},\beta_{\Lambda_i}) - \mathbb{E}[\phi_2(\tau_0[\tilde{\mathbf{Z}}_0]_{\Lambda_i},\beta_{\Lambda_i})]\Big) \Big \lvert  \geq \delta\e\Big)\\
&\leq Ke^{-\kappa n \e^2}.
\end{align*}

\textbf{(d)}  The function $\phi_3:\mathbb{R}^{2|\Lambda|}\to\mathbb{R}$ defined as $\phi_3(x,y):=(\mc{S}(x))^2$ is PL(2) by Lemma \ref{lem:Lprods} since the operator $\mc{S}$ defined in \eqref{eq:operator_S} is Lipschitz. Notice that $\frac{1}{|\Gamma|}(h^1)^*h^1=\frac{1}{|\Gamma|}\sum_{i\in\Gamma}\phi_3(\hat{h}^1_{\Lambda_i},\beta_{\Lambda_i})$ and $\mathbb{E}[\phi_3(\tau_0[\tilde{\mathbf{Z}}_0]_{\Lambda_i},\beta_{\Lambda_i})]=\tau_0^2\mathbb{E}[([\tilde{\mathbf{Z}}_0]_i)^2]=\tau_0^2$ for all $i\in\Gamma$, which follows from the definition of $\tilde{\mathbf{Z}}_0$ in \eqref{eq:tildeZcov}. Therefore, the result follows using $\mc{H}_1 (b)$, since
\begin{align*}
&P\Big(\Big \lvert  \frac{1}{|\Gamma|}\norm{h^1}^2 - \tau_0^2 \Big \lvert  \geq  \e\Big)\\
& = P\Big(\Big \lvert  \frac{1}{|\Gamma|} \sum_{i\in\Gamma} \Big(\phi_3(\hat{h}^1_{\Lambda_i},\beta_{\Lambda_i}) - \mathbb{E}[\phi_3(\tau_0[\tilde{\mathbf{Z}}_0]_{\Lambda_i},\beta_{\Lambda_i})]\Big) \Big \lvert  \geq  \e\Big).% \overset{(b)}{\leq} Ke^{-\kappa n \e^2}.
\end{align*}

\textbf{(e)} We prove concentration for $\frac{1}{n}(q^0)^*q^1$, and the result for $\frac{1}{n}(q^1)^*q^1$ follows similarly. The function $\phi_4:\mathbb{R}^{2|\Lambda|}\to\mathbb{R}$ defined as $\phi_4(x,y):=f_0(\mathbf{0},y)f_1(x,y)$ is PL(2) by Lemma \ref{lem:Lprods}, since $f_0$ and $f_1$ are Lipschitz. Notice that $\frac{1}{|\Gamma|}\sum_{i\in\Gamma}\mathbb{E}[f_0(\mathbf{0}, \beta_{\Lambda_i}) f_1(\tau_0 [\tilde{\mathbf{Z}}_0]_{\Lambda_i}, \beta_{\Lambda_i})] = \delta \tilde{E}_{0,1}$ by \eqref{eq:Edef} and $(q^0)^*q^1 = \sum_{i\in\Gamma}\phi_4(\hat{h}^1_{\Lambda_i},\beta_{\Lambda_i})$. Hence, we have the desired upper bound using $\mc{H}_1(b)$, since 
\begin{align*}
&P\Big (\Big \lvert \frac{1}{n}(q^0)^*q^1 - \tilde{E}_{0,1} \Big \lvert   \geq \e \Big)\\
& = P\Big(\Big \lvert \frac{1}{|\Gamma|}\sum_{i\in\Gamma} \Big(\phi_4(\hat{h}^1_{\Lambda_i}, \beta_{\Lambda_i}) - \mathbb{E}[\phi_4(\tau_0 [\tilde{\mathbf{Z}}_0]_{\Lambda_i}, \beta_{\Lambda_i})] \Big) \Big \lvert   \geq \delta\e \Big). %\overset{(a)}{\leq} K e^{-\kappa n \e^2}
\end{align*}

\textbf{(f)} The concentration of $\lambda_0$ to $\hat{\lambda}_0$ follows from $\mathcal{H}_1 (b)$ applied to the function $\phi_h([h^1]_{\Lambda_i}, \beta_{\Lambda_i}) := f_0'([h^1]_{\Lambda_i}, \beta_{\Lambda_i})$, since $f_0'$ is assumed to be Lipschitz, hence PL(2).  

The only other result to prove is concentration for $\frac{1}{n}(h^1)^*q^1=\frac{1}{n}\sum_{i\in\Gamma}\hat{h}_i^{1}\hat{q}_i^{1}$. The function $\phi_5:\mathbb{R}^{2|\Lambda|}\to\mathbb{R}$ defined as $\phi_5(x,y)=\mc{S}(x)f_1(x,y)$ is PL(2) by Lemma \ref{lem:Lprods}. Notice that $\phi_5(\hat{h}^1_{\Lambda_i},\beta_{\Lambda_i})=\hat{h}_i^{1}\hat{q}_i^{1}$. Moreover, let the function $\tilde{f}_i:\mathbb{R}\to\mathbb{R}$ be defined as
$\tilde{f}_i(x):=\mathbb{E}_{[\tilde{\mathbf{Z}}_0]_{\Lambda_i\setminus \{i\}},\beta_{\Lambda_i}}[f_1( \mc{R}(x,[\tau_0\tilde{\mathbf{Z}}_0]_{\Lambda_i}),\beta_{\Lambda_i})]$, where the function $\mc{R}:\mathbb{R}^{1\times \Lambda}\to\mathbb{R}^\Lambda$ replaces the center coordinate of the second argument, which is in $\mathbb{R}^\Lambda$, with the first argument, which is in $\mathbb{R}$, a scalar. For example, $\tilde{f}_i([\tau_0\tilde{\mathbf{Z}}_0]_i) = \mathbb{E}_{[\tilde{\mathbf{Z}}_0]_{\Lambda_i\setminus \{i\}},\beta_{\Lambda_i}}[f_1( [\tau_0\tilde{\mathbf{Z}}_0]_{\Lambda_i},\beta_{\Lambda_i})]$.  Then we have
\begin{align*}
&\sum_{i\in\Gamma}\mathbb{E}[\phi_5(\tau_0[\tilde{\mathbf{Z}}_0]_{\Lambda_i},\beta_{\Lambda_i})]=\sum_{i\in\Gamma}\mathbb{E}[[\tau_0\tilde{\mathbf{Z}}_{0}]_i f_1(\tau_0 [\tilde{\mathbf{Z}}_0]_{\Lambda_i}, \beta_{\Lambda_i})] \\
&= \sum_{i\in\Gamma} \mathbb{E}_{[\tilde{\mathbf{Z}}_0]_i}[[\tau_0\tilde{\mathbf{Z}}_0]_i\tilde{f}_i([\tau_0\tilde{\mathbf{Z}}_0]_i)]\\
& \overset{(a)}{=} \sum_{i\in\Gamma} \mathbb{E}_{[\tilde{\mathbf{Z}}_0]_i}[([\tau_0\tilde{\mathbf{Z}}_{0}]_i)^2] \mathbb{E}_{[\tilde{\mathbf{Z}}_0]_i}[\tilde{f}_i'([\tau_0\tilde{\mathbf{Z}}_0]_i)]\\
&\overset{(b)}{=} \tau_0^2 \sum_{i\in\Gamma} \mathbb{E}_{[\tilde{\mathbf{Z}}_0]_{\Lambda_i},\beta_{\Lambda_i}}[f_1'([\tau_0\tilde{\mathbf{Z}}_0]_{\Lambda_i},\beta_{\Lambda_i})]\overset{(c)}{=} \delta |\Gamma|  \hat{\lambda}_{1} \breve{E}_{0,0} .
\end{align*}
In the above, step $(a)$ follows from Stein's Method, Lemma \ref{fact:stein}, step $(b)$ follows from the definition of $\tilde{\mathbf{Z}}_0$ in \eqref{eq:tildeZcov} and the definition of $f_1'$, which is the partial derivative w.r.t. the center coordinate of the first arguments, and step $(c)$ follows from the definition of $\hat{\lambda}_1$ in \eqref{eq:hatlambda_hatxi} and the definition of $\breve{E}_{0,0}$ in \eqref{eq:Edef}.  Therefore, using $\mc{H}_1 (b)$, we have the desired upper bound, since
\begin{align*}
&P\Big (\Big \lvert  \frac{1}{n}(h^1)^*q^1 - \hat{\lambda}_{1} \breve{E}_{0,0} \Big \lvert  \geq \e \Big)\\
&= P\Big( \Big \lvert \frac{1}{|\Gamma|}\sum_{i\in\Gamma} \Big ( \phi_5(\hat{h}^1_{\Lambda_i},\beta_{\Lambda_i}) - \mathbb{E}[ \phi_5(\tau_0 [\tilde{\mathbf{Z}}_0]_{\Lambda_i}, \beta_{\Lambda_i})]\Big) \Big \lvert  \geq \delta\e\Big). %\leq K e^{-\kappa n \e^2}.
\end{align*}

\textbf{(g)}
Note that $\mathbf{Q}_{1} = \frac{1}{n}\norm{q^0}^2$ and $\tilde{C}^{1} = \tilde{E}_{0,0} = \sigma_0^2 > 0$.  By Lemma \ref{inverses} and \eqref{eq:qassumption},
\be
P\Big(\Big\lvert n\norm{q^0}^{-2}- \sigma_0^{-2}\Big \lvert \geq \e\Big) \leq 2 K e^{-\kappa n \e^2 \sigma_0^{2} \min(1, \sigma_0^{2})}.  %\leq 2 K e^{-\kappa \varepsilon_1^2 n \e^2}.
\label{eq:H1g1}
\ee
By the definitions in Section \ref{subsec:defs}, $\gamma^1_0 = \frac{1}{n}\mathbf{Q}_{1}^{-1} (q^0)^*q^1$ and $\hat{\gamma}_0^1= (\tilde{C}^{1})^{-1} \tilde{E}_1 = \tilde{E}_{0,1} \sigma_0^{-2}.$  Therefore, 
\begin{align*}
& P\Big( \lvert \gamma^1_0 - \hat{\gamma}^1_0 \lvert \geq \epsilon \Big) = P\Big(\Big \lvert  \frac{1}{n}\mathbf{Q}_{1}^{-1}(q^0)^*q^1 - \tilde{E}_{0,1} \sigma_0^{-2} \Big \lvert \geq \epsilon \Big) \\
&\overset{(a)}{\leq} P\Big( \lvert  \mathbf{Q}_{1}^{-1} - \sigma_0^{-2} \lvert \geq \tilde{\e} \Big) + P\Big(\Big \lvert  \frac{1}{n}(q^0)^*q^1 - \tilde{E}_{0,1}  \Big \lvert \geq \tilde{\e} \Big)\\
& \overset{(b)}{\leq} K e^{-\kappa  n \e^2} + K e^{-\kappa  n \e^2}.
\end{align*}
where $(a)$ follows from Lemma \ref{products} with
$\tilde{\e}:= \min\Big\{ \sqrt{\frac{\e}{3}}, \ \frac{\e}{3 \tilde{E}_{0,1} }, \ \frac{\e \sigma_0^2}{3} \Big\}$ 
and $(b)$ from \eqref{eq:H1g1} and $\mc{H}_1(e)$.

\textbf{(h)}
 From the definitions in Section \ref{subsec:defs},  we have  $\norm{q^1_{\perp}}^2 = \norm{q^1}^2 -\norm{q^1_{\parallel}}^2 =   \norm{q^1}^2- (\gamma_0^1)^2 \norm{q^0}^2$, and $(\sigma_{1}^{\perp})^2 = \tilde{E}_{1,1} - \tilde{E}^*_{1} (\tilde{C}^{1})^{-1}  \tilde{E}_{1} = \sigma_1^2 - (\tilde{E}_{0,1})^2 \tilde{E}_{0,0}^{-1} = \sigma_1^2 - (\hat{\gamma}^1_0)^2 \sigma_0^2$. We therefore have
\begin{align*}
&P\Big(\Big \lvert \frac{\norm{q^1_{\perp}}^2}{n} - (\sigma_{1}^{\perp})^2 \Big \lvert \geq \epsilon\Big) \leq P\Big(\Big \lvert \frac{\norm{q^1}^2}{n} - \sigma_1^2 \Big \lvert \geq \frac{\epsilon}{2} \Big)\\
& + P\Big(\Big \lvert (\gamma_0^1)^2 \frac{\norm{q^0}^2}{n} - (\hat{\gamma}^1_0)^2 \sigma_0^2\Big \lvert \geq \frac{\epsilon}{2}\Big) \leq K e^{-\kappa n \e^2} + K e^{-\kappa n \e^2}.
\end{align*}
where the last inequality is obtained using $\mc{H}_1(e)$ for bounding the first term and by applying Lemma \ref{products} to the second term along with the concentration of $\norm{q^0}$ in \eqref{eq:qassumption}, $\mc{H}_1(g)$, and Lemma \ref{powers} (for concentration of the square).

%---------------------------------------------------------------------------------------------------------------------------------------------------------s

\subsection{Step 4: Showing that $\mc{H}_{t+1}$ holds}
The probability statements in the lemma and the other parts of $\mc{H}_{t+1}$ are conditioned on the event that the matrices $\mathbf{Q}_{1}, \ldots, \mathbf{Q}_{t+1}$ are invertible, but for the sake of brevity, we do not explicitly state the conditioning in the probabilities.  The following lemma will be used to prove $\mc{H}_{t+1}$.

\begin{lem}
\label{lem:Qv_conc}
Let $v := \frac{1}{n}B^*_{t+1} m_{\perp}^t - \frac{1}{n}Q_{t+1}^*(\xi_t q^t - \sum_{i=0}^{t-1} \alpha^t_i \xi_i q^i)$ and $\mathbf{Q}_{t+1} := \frac{1}{n}Q_{t+1}^* Q_{t+1}$.  Then for $j \in [t+1]$,
\ben
P\Big(\Big \lvert [\mathbf{Q}_{t+1}^{-1} v]_{j} \Big \lvert  \geq \e \Big) \leq e^{-\kappa n \e^2}.
\een
\end{lem}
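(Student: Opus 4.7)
The plan is to show first that every coordinate $v_j$ of the residual vector $v$ concentrates on $0$, and then to pass the matrix $\mathbf{Q}_{t+1}^{-1}$ through this concentration, using the fact that $\mathbf{Q}_{t+1}^{-1}$ itself concentrates on the deterministic invertible matrix $(\tilde{C}^{t+1})^{-1}$. All the probabilistic input is already encapsulated in the inductive hypotheses $\mc{B}_t$ and $\mc{H}_t$; the lemma amounts to checking that the chosen $v$ is the ``right'' residual, i.e.\ that its limiting value vanishes identically.

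Using $m^t_\perp = m^t - \sum_{i=0}^{t-1}\alpha^t_i m^i$, I would expand the $j$-th coordinate of $v$ (for $j \in [t+1]$) as
\begin{align*}
v_j &= \frac{(b^{j-1})^* m^t}{n} - \sum_{i=0}^{t-1}\alpha^t_i \frac{(b^{j-1})^* m^i}{n}\\
&\quad - \xi_t \frac{(q^{j-1})^* q^t}{n} + \sum_{i=0}^{t-1}\alpha^t_i \xi_i \frac{(q^{j-1})^* q^i}{n}.
\end{align*}
By $\mc{B}_t(f),(g)$ and $\mc{H}_t(e)$, each scalar above has a known concentrating value: $\frac{(b^r)^* m^s}{n} \doteq \hat{\xi}_s \tilde{E}_{r,s}$ with $\hat{\xi}$ always indexed on the $m$-side, $\frac{(q^r)^* q^s}{n} \doteq \tilde{E}_{r,s}$, and $\alpha^t_i \doteq \hat{\alpha}^t_i$, $\xi_s \doteq \hat{\xi}_s$. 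The nominal limit of $v_j$ is therefore
\[
\hat{\xi}_t \tilde{E}_{j-1,t} - \sum_{i=0}^{t-1}\hat{\alpha}^t_i \hat{\xi}_i \tilde{E}_{j-1,i} - \hat{\xi}_t \tilde{E}_{j-1,t} + \sum_{i=0}^{t-1}\hat{\alpha}^t_i \hat{\xi}_i \tilde{E}_{j-1,i} = 0,
\]
with the first and third terms canceling and the two sums canceling termwise. To turn this algebraic cancellation into concentration, I would apply Lemma \ref{products} to each scalar product (e.g.\ $\alpha^t_i \cdot \frac{(b^{j-1})^* m^i}{n}$, $\xi_t \cdot \frac{(q^{j-1})^* q^t}{n}$, and $\alpha^t_i \xi_i \cdot \frac{(q^{j-1})^* q^i}{n}$), and then Lemma \ref{sums} to combine the $2t+2$ resulting summands, giving $P(|v_j| \geq \e) \leq K e^{-\kappa n \e^2}$ with constants depending on $t$ and $k$ only.

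For the outer factor, by $\mc{H}_t(e)$ and symmetry every entry of $\mathbf{Q}_{t+1} = \frac{1}{n}Q_{t+1}^* Q_{t+1}$ concentrates on the corresponding entry of $\tilde{C}^{t+1}$, which is invertible for $t<T^*$ by Lemma \ref{lem:Ct_invert}. Lemma \ref{inverses} then yields $[\mathbf{Q}_{t+1}^{-1}]_{j,k} \doteq [(\tilde{C}^{t+1})^{-1}]_{j,k}$ for each $j,k \in [t+1]$. Writing $[\mathbf{Q}_{t+1}^{-1} v]_j = \sum_{k=1}^{t+1}[\mathbf{Q}_{t+1}^{-1}]_{j,k}\, v_k$, a final application of Lemma \ref{products} (a bounded concentrating scalar times one concentrating on $0$) followed by Lemma \ref{sums} delivers the stated bound.

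\textbf{Main obstacle.} The delicate point is the cancellation in the middle step: one must read $\mc{B}_t(f)$ with the correct convention that the subscript on $\hat{\xi}$ in $\frac{(b^r)^* m^s}{n} \doteq \hat{\xi}_s \tilde{E}_{r,s}$ always tracks the index of $m$, whether or not $r \leq s$. With the wrong convention the $\hat{\xi}_t \tilde{E}_{j-1,t}$ pair and the $\hat{\alpha}^t_i \hat{\xi}_i \tilde{E}_{j-1,i}$ pair would fail to align, and the cancellation would collapse. Once this bookkeeping is in place, no MRF-specific arguments enter the proof of this lemma, since $v$ is built out of scalar inner products whose concentration has already been handled at earlier stages of the induction.
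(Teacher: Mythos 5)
Your route is the same one the paper takes: the paper's own ``proof'' of this lemma is only a pointer to the corresponding argument in Rush and Venkataramanan \cite{RushV18}, and what you have written is precisely the content of that argument. Your coordinatewise expansion of $v_j$, the identification of the limits $\frac{(b^a)^*m^b}{n}\doteq \hat{\xi}_b\tilde{E}_{a,b}$ (with $\hat{\xi}$ indexed by the $m$-side, exactly as \eqref{eq:Bf} states it), $\frac{(q^a)^*q^b}{n}\doteq\tilde{E}_{a,b}$ from $\mc{H}_t(e)$, and $\alpha^t_i\doteq\hat{\alpha}^t_i$, $\xi_s\doteq\hat{\xi}_s$ from $\mc{B}_t(g)$ and $\mc{B}_s(f)$, and the termwise cancellation to a limit of $0$, are all correct and are the intended use of the hypotheses the paper lists ($\mc{H}_t(e)$, $\mc{B}_t(f)$, $\mc{B}_t(g)$).

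The one step that does not hold as you justify it is the concentration of the entries of $\mathbf{Q}_{t+1}^{-1}$. Entrywise concentration of $\mathbf{Q}_{t+1}$ on $\tilde{C}^{t+1}$ together with Lemma \ref{inverses} does not give entrywise concentration of the inverse: Lemma \ref{inverses} is a statement about scalar reciprocals, and for a $(t+1)\times(t+1)$ matrix with $t\geq 1$ the map from entries of a matrix to entries of its inverse is not controlled by that lemma. You need either (i) the block-inversion recursion that the paper carries out in the first half of $\mc{H}_{t+1}(g)$ -- note that that part of $(g)$ uses only $\mc{H}_t(g)$ and $\mc{H}_t(h)$ (concentration of $\mathbf{Q}_t^{-1}$, of $\gamma^t$, and of $n\norm{q^t_\perp}^{-2}$), so invoking it here creates no circularity with $\mc{H}_{t+1}(a)$ -- or (ii) a resolvent identity $\mathbf{Q}_{t+1}^{-1}-(\tilde{C}^{t+1})^{-1}=\mathbf{Q}_{t+1}^{-1}(\tilde{C}^{t+1}-\mathbf{Q}_{t+1})(\tilde{C}^{t+1})^{-1}$ combined with the lower bound on the smallest eigenvalue of $\tilde{C}^{t+1}$ that Lemma \ref{lem:Ct_invert} and the condition $t<T^*$ (i.e., $(\sigma_s^\perp)^2\geq\hat{\epsilon}$ for $s\leq t$) provide. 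This is exactly why the paper cites $\mc{H}_1(g)$--$\mc{H}_t(g)$ and $\mc{H}_1(h)$--$\mc{H}_t(h)$ among the hypotheses used for this lemma. With that repair, the final combination via Lemma \ref{products} (allowing the degenerate case of one factor concentrating on $0$) and Lemma \ref{sums} goes through as you describe.
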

\begin{proof} 
We can infer from the proof of \cite[Lemma 6]{RushV18} that the proof of Lemma \ref{lem:Qv_conc} involves induction hypotheses $\mathcal{H}_{1}(g)$-$\mathcal{H}_t(g)$, $\mathcal{H}_{1}(h)$-$\mathcal{H}_t(h)$, $\mathcal{H}_t(e)$, and $\mathcal{B}_{t}(c), \mathcal{B}_{t}(f), \mathcal{B}_{t}(g)$. Notice that the statements of these hypotheses have the same form as the corresponding statements in \cite{RushV18}. While the definition of the concentration constants in these hypotheses may be different for separable and non-separable denoisers, the proof uses the result that the quantities concentrate with desired rate rather than what the concentration constants are. Therefore, the proof for \cite[Lemma 12]{RushV18}, which is similar to the proof of \cite[Lemma 6]{RushV18}, is directly applicable here.
\end{proof}

We are ready to prove $\mc{H}_{t+1} (a)- \mc{H}_{t+1} (h)$.

\textbf{(a)} 
Recall the definition of $\Delta_{t+1,t}$ from Lemma \ref{lem:hb_cond} \eqref{eq:Dt1t}.  Using Lemma \ref{fact:gauss_p0}, $\frac{1}{\sqrt{n}} \norm{m^t_{\perp}} \mathsf{P}^{\parallel}_{Q_{t+1}} Z_t \overset{d}{=} \frac{1}{\sqrt{n}}\norm{m^t_{\perp}}  \frac{1}{\sqrt{|\Gamma|}}\hat{Q}_{t+1} \bar{Z}_{t+1},$ where columns of the matrix $\hat{Q}_{t+1} \in \mathbb{R}^{|\Gamma| \times (t+1)}$ form an orthogonal basis for the column space of $Q_{t+1}$, which are normalized such that $\hat{Q}_{t+1}^* \hat{Q}_{t+1} = |\Gamma|\mathsf{I}_{t+1}$, and $\bar{Z}_{t+1} \in \mathbb{R}^{t+1}$ is an independent random vector with i.i.d.\ $\mc{N}(0,1)$ entries.  We can then write
\begin{align*}
\Delta_{t+1,t} &\overset{d}{=}  \sum_{r=0}^{t-1} (\alpha^t_r - \hat{\alpha}^{t}_r) h^{r+1} + Z_t \Big(\frac{\norm{m^t_{\perp}}}{\sqrt{n}} - \tau_{t}^{\perp}\Big)\\
& - \frac{\norm{m^t_{\perp}}}{\sqrt{n}} \frac{\hat{Q}_{t+1} \bar{Z}_{t+1}}{\sqrt{|\Gamma|}} + Q_{t+1} \mathbf{Q}_{t+1}^{-1}v,
\end{align*}
where $ \mathbf{Q}_{t+1} \in \mathbb{R}^{(t+1) \times (t+1)}$ and $v \in \mathbb{R}^{t+1}$ are defined in Lemma \ref{lem:Qv_conc}. By Lemma \ref{lem:squaredsums},
\begin{align*}
&\norm{\Delta_{t+1,t}}^2  \leq (2t+3)\Big[\sum_{r=0}^{t-1} (\alpha^t_r - \hat{\alpha}^t_r)^2 \norm{h^{r+1}}^2\\
& +\norm{Z_t}^2 \Big(\frac{\norm{m^t_{\perp}}}{\sqrt{n}} - \tau_{t}^{\perp} \Big)^2 + \frac{\norm{m^t_{\perp}}^2}{n}\frac{||\hat{Q}_{t+1} \bar{Z}_{t+1}||^2 }{|\Gamma|}\\
& + \sum_{j=0}^{t} \norm{q^j}^2 [\textbf{Q}_{t+1}^{-1} v]_{j+1}^2\Big], 
\end{align*}
where we have used $Q_{t+1}\textbf{Q}_{t+1}^{-1}v = \sum_{j=0}^{t} q^j [\textbf{Q}_{t+1}^{-1} v]_{j+1}$. Applying Lemma, with $\e_t = \frac{\e}{(2t+3)^2}$,
\ref{sums},
\begin{align}
P\Big(\frac{\norm{\Delta_{t+1,t}}^2}{|\Gamma|} \geq \epsilon \Big) & \leq \sum_{r=0}^{t-1} P\Big( \lvert\alpha^t_r - \hat{\alpha}^t_r  \lvert \frac{||h^{r+1}||}{\sqrt{|\Gamma|}} \geq  \sqrt{\e_t} \Big)\nonumber\\
& +  P\Big(\Big \lvert \frac{\norm{m^t_{\perp}}}{\sqrt{n}} - \tau_{t}^{\perp}\Big \lvert \frac{\norm{Z_{t}}}{\sqrt{|\Gamma|}} \geq  \sqrt{\e_t} \Big)\nonumber \\
&  + P\Big(\frac{\norm{m^t_{\perp}}}{\sqrt{n}}  \frac{||\hat{Q}_{t+1} \bar{Z}_{t+1}||}{|\Gamma|} \geq  \sqrt{\e_t}\Big) \nonumber\\
& + \sum_{j=0}^{t} P\Big(\Big \lvert [\textbf{Q}_{t+1}^{-1} v]_{j+1} \Big \lvert \frac{\norm{q^j}}{\sqrt{n}} \geq  \sqrt{\e_t} \Big).
\label{eq:deltt1t_sq_conc}
\end{align}
 We now show each of the terms in \eqref{eq:deltt1t_sq_conc} has the desired upper bound.  For  $0 \leq r \leq t$,
\begin{align*}
& P\Big( \lvert\alpha^t_r - \hat{\alpha}^t_r\lvert \frac{||h^{r+1}||}{\sqrt{|\Gamma|}} \geq  \sqrt{\e_t} \Big) \\ 
&\leq P\Big(\lvert\alpha^t_r - \hat{\alpha}^t_r  \lvert \Big(  \Big\lvert\frac{||h^{r+1}||}{\sqrt{|\Gamma|}}  -\tau_r \Big \lvert+\tau_r\Big) \geq \sqrt{\e_t} \Big)  \\ 
& \overset{(a)}{\leq} P\Big( \lvert \alpha^t_r - \hat{\alpha}^t_r  \lvert \geq \frac{\sqrt{\e_t}}{2}  \min\{1 , \tau_r^{-1} \} \Big)\\
&+ P\Big(\Big \lvert \frac{||h^{r+1}||}{\sqrt{|\Gamma|}} - \tau_r \Big \lvert \geq \sqrt{\e_t}  \Big) \overset{(b)}{\leq} K e^{-\kappa n \e} + K e^{- \kappa n \e} ,
\end{align*}
where step $(a)$ follows from similar justification as that for step $(a)$ in \eqref{eq:T1_H1a} and step $(b)$ follows from induction hypotheses $\mathcal{B}_t (g)$, $\mathcal{H}_{1} (d)-\mathcal{H}_{t} (d)$, and Lemma \ref{sqroots}.  Next, the second term in \eqref{eq:deltt1t_sq_conc} is bounded as 
\ben
\begin{split}
 &P\Big(\Big \lvert \frac{\norm{m^t_{\perp}}}{\sqrt{n}} - \tau_{t}^{\perp} \Big \lvert \frac{\norm{Z_t}}{\sqrt{|\Gamma|}} \geq \sqrt{\tilde{\e}_t} \Big) \leq P\Big(\Big \lvert \frac{\norm{m^t_{\perp}}}{\sqrt{n}} - \tau_{t}^{\perp}\Big \lvert  \geq \e_t^{\frac{1}{4}} \Big) \\
 &+  P\Big( \frac{\norm{Z_t}}{\sqrt{|\Gamma|}} \geq \e_t^{\frac{1}{4}} \Big) \overset{(c)}{\leq}  Ke^{-\kappa  n \sqrt{\e}} + Ke^{- \kappa n \sqrt{\e}},
\end{split}
\een
where step $(c)$ is obtained using induction hypothesis $\mathcal{B}_t (h)$, Lemma \ref{subexp}, and Lemma \ref{sqroots}. Since $\frac{1}{\sqrt{n}} \norm{m^t_{\perp}}$ concentrates on $\tau_{t}^{\perp}$ by $\mc{B}_t (h)$, the third term in \eqref{eq:deltt1t_sq_conc} can be bounded as 
\begin{equation}
\begin{split}
&P\Big(\frac{\norm{m^t_{\perp}}}{\sqrt{n}} \cdot  \frac{||\hat{Q}_{t+1} \bar{Z}_{t+1}|| }{|\Gamma|}\geq  \sqrt{\e_t}  \Big)\\
& \leq P\Big( \Big( \Big \lvert \frac{\|m^t_\perp\|}{\sqrt{n}} - \tau_t^\perp \Big \lvert + \tau_t^\perp\Big)\cdot \frac{\|\hat{Q}_{t+1}\bar{Z}_{t+1}\|}{|\Gamma|}  \geq \sqrt{\e_t}\Big) \\
 &\leq P\Big(  \Big \lvert \frac{\norm{m^t_{\perp}}}{\sqrt{n}} - \tau_{t}^{\perp} \Big \lvert  \geq  \sqrt{\e} \Big)\\
& + P\Big(\frac{1}{|\Gamma|}||\hat{Q}_{t+1}\bar{Z}_{t+1}|| \geq 
 \frac{ \sqrt{\e_t}}{2}  \min\{ 1 , (\tau_t^{\perp})^{-1}\} \Big).
 \label{eq:T3split}
 \end{split}
\end{equation}
For the second term in \eqref{eq:T3split}, first  bound  the norm of $\hat{Q}_{t+1} \bar{Z}_{t+1}$ as follows. Letting $\hat{q}_i$ denote the $i^{th}$ column of $\hat{Q}_{t+1}$, we have
\ben
\begin{split}
\norm{\hat{Q}_{t+1} \bar{Z}_{t+1}}^2 =  \norm{\sum_{i=0}^{t} \hat{q}_i \bar{Z}_{t+1,i} }^2 & \overset{(d)}{\leq} (t+1)\sum_{i=0}^{t} \norm{\hat{q}_i}^2 (\bar{Z}_{t+1,i})^2\\
& \overset{(e)}{=} |\Gamma|(t+1) \sum_{i=0}^{t} (\bar{Z}_{t+1,i})^2,
\end{split}
\een
where step (d) follows from Lemma \ref{lem:squaredsums} and step (e) uses $\norm{\hat{q}_i}^2 = |\Gamma|$ for all $0 \leq i \leq t$.  Therefore,
\begin{align}
P\Big(\frac{1}{|\Gamma|^2} \norm{\hat{Q}_{t+1}\bar{Z}_{t+1}}^2 \geq \e' \Big) &\overset{(f)}{\leq} \sum_{i=0}^{t}P\Big(\abs{\bar{Z}_{t+1,i}}^2 \geq \frac{|\Gamma|\e'}{(t+1)^2} \Big)\nonumber\\
& \overset{(g)}{\leq} 2 e^{-\frac{1}{2(t+1)^2} |\Gamma| \e'}.\label{eq:Qt1Zt1conc}
\end{align}
Step $(f)$ is obtained from Lemma \ref{sums} and step $(g)$ from Lemma \ref{lem:normalconc}. Using \eqref{eq:Qt1Zt1conc}, the RHS of \eqref{eq:T3split} is bounded by $Ke^{-\kappa n\e }$.  Finally, for $0 \leq j \leq t$, the last term in \eqref{eq:deltt1t_sq_conc} can be bounded by 
\ben
\begin{split}
& P\Big(\Big \lvert [\textbf{Q}_{t+1}^{-1} v]_{j+1} \Big \lvert \frac{||q^j||}{\sqrt{n}} \geq \sqrt{\e_t} \Big)\\
&\leq P\Big(\Big \lvert [\textbf{Q}_{t+1}^{-1} v]_{j+1} \Big \lvert \Big( \Big \lvert \frac{||q^j||}{\sqrt{n}} - \sigma_j \Big \lvert + \sigma_j \Big) \geq \sqrt{\e_t} \Big) 
\\  & \leq  P\Big(\Big \lvert \frac{\norm{q^j}}{\sqrt{n}}- \sigma_j \Big \lvert \geq  \sqrt{\e} \Big)+ P\Big(\Big \lvert [\textbf{Q}_{t+1}^{-1} v ]_{j+1} \Big \lvert \geq \frac{\sqrt{\e_t}}{2  \max\{1, \sigma_j\}} \Big)\\
&\overset{(g)}\leq K e^{-\kappa n \e^2} + K e^{-\kappa n \e^2},
\end{split}
\een
where step $(g)$ follows from Lemma \ref{lem:Qv_conc}, the induction hypothesis $\mc{H}_{t}(e)$, and Lemma \ref{sqroots}. Thus we have bounded each term of \eqref{eq:deltt1t_sq_conc} as desired.

\textbf{(b)} For brevity, we use the notation $\mathbb{E}_{\phi_h}:=\frac{1}{|\Gamma|}\sum_{i\in\Gamma}\mathbb{E}[\phi_h(\tau_0[\tilde{\mathbf{Z}}_0]_{\Lambda_i},...,\tau_t[\tilde{\mathbf{Z}}_t]_{\Lambda_i},\beta_{\Lambda_i})]$ and
\begin{align}
\hat{a}_i &:= ( [\hat{h}^1_{\pure}]_i + [\hat{\Delta}_{1,0}]_i,...,[\hat{h}^{t+1}_{\pure}]_i+\sum_{r=0}^{t}\mathsf{d}_r^{t}[\hat{\Delta}_{r+1,r}]_i, \beta_i), \nonumber \\
\hat{c}_i &:= ([\hat{h}_{\pure}^1]_i,...,[\hat{h}_{\pure}^{t+1}]_i,\beta_i),
\label{eq:Htaici}
\end{align}
for $i\in \Gamma$.  Hence $\hat{a}$ and $\hat{c}$ are arrays in $\mathbb{R}^{\Gamma}$ with entries $\hat{a}_i$, $\hat{c}_i\in\mathbb{R}^{(t+2)}$. We note that by $\hat{a}_{\Lambda_i}$ we mean for the $p$-dimensional cube $\Lambda_i$ to be applied to each of the $(t+2)$ elements of $\hat{a}$ and we define $\norm{\hat{a}_{\Lambda_i}}^2 := \sum_{j \in \Lambda_i} \norm{\hat{a}_j}^2$. Moreover, define $\hat{\Delta}_{r+1,r}=\mc{V}^{-1}(\Delta_{r+1,r})$, hence $\hat{\Delta}_{r+1,r}\in\mathbb{R}^\Gamma$, for all $r=0,\ldots, t$.
Then, using the conditional distribution of $h^{t+1}$ from Lemma~\ref{lem:h_cond_pure} and Lemma~\ref{sums}, 
\begin{align}
&P\Big( \Big\vert \frac{1}{|\Gamma|}\sum_{i\in\Gamma}\phi_h(\hat{h}^1_{\Lambda_i},...,\hat{h}^{t+1}_{\Lambda_i},\beta_{\Lambda_i}) -\mathbb{E}_{\phi_h}\Big\vert \geq \epsilon  \Big) \nonumber\\
&\leq P\Big(\Big\vert \frac{1}{|\Gamma|}\sum_{i\in\Gamma} (\phi_h(\hat{a}_{\Lambda_i})-\phi_h(\hat{c}_{\Lambda_i})) \Big\vert \geq \frac{\epsilon}{2}\Big)\nonumber\\
&+ P\Big(\Big\vert \frac{1}{|\Gamma|}\sum_{i\in\Gamma}\phi_h(\hat{c}_{\Lambda_i})-\mathbb{E}_{\phi_h}\Big\vert \geq \frac{\epsilon}{2}\Big).
\label{eq:Htbi}
\end{align}
Label the terms of \eqref{eq:Htbi} as $T_1$ and $T_2$. We next show that both terms are bounded by
$Ke^{-\kappa n \epsilon^2}$. 

First consider term $T_1$. Let $d=|\Lambda|$. Notice that
\begin{align*}
&\|\hat{a} - \hat{c}\|^2= \sum_{i\in\Gamma} \Big[ [\hat{\Delta}_{1,0}]_i^2 + (\sum_{r=0}^1 \mathsf{d}^1_r [\hat{\Delta}_{r+1,r}]_{i})^2 +\ldots\\
&\hspace*{1.5in} +  (\sum_{r=0}^t \mathsf{d}^t_r [\hat{\Delta}_{r+1,r}]_i)^2\Big]\\
&\overset{(a)}{\leq}  \|\hat{\Delta}_{1,0}\|^2 + \sum_{r=0}^1 (\mathsf{d}^1_r)^2  \sum_{r'=0}^1\|\hat{\Delta}_{r'+1,r'} \|^2 +\ldots\\
&\hspace*{1.5in} + \sum_{r=0}^t( \mathsf{d}^t_r)^2 \sum_{r'=0}^t\|\hat{\Delta}_{r'+1,r'}\|^2\\
&\overset{(b)}{=} \sum_{r'=0}^t\|\hat{\Delta}_{r'+1,r'}\|^2 \sum_{k=r'}^t \sum_{r=0}^k (\mathsf{d}^k_r)^2.
\end{align*}
In the above, $(a)$ follows from Cauchy-Schwartz and  $(b)$ by collecting the terms in the sums. Hence,
\begin{align}
&\frac{\|\hat{a} - \hat{c}\|^2}{|\Gamma|} \leq \Big(\frac{ \|\hat{\Delta}_{1,0}\|}{\sqrt{|\Gamma|}} \sum_{k=0}^t \sum_{r=0}^k |\mathsf{d}^k_r| + \frac{\|\hat{\Delta}_{2,1}\|}{\sqrt{|\Gamma|}} \sum_{k=1}^t \sum_{r=0}^k |\mathsf{d}^k_r|\nonumber\\
& \hspace*{1.2in}+ \ldots + \frac{\|\hat{\Delta}_{t+1,t}\|}{\sqrt{|\Gamma||}} \sum_{r=0}^t |\mathsf{d}^t_r|\Big)^2.
\label{eq:Delta_tot_def}
\end{align}
Denote the RHS of \eqref{eq:Delta_tot_def} by $\Delta_\tot^2$, then using Lemma \ref{sums} and $\mc{H}_1(a) - \mc{H}_{t+1} (a)$, we have
\begin{equation}
P\Big(\frac{\|\hat{a} - \hat{c}\|}{\sqrt{|\Gamma|}}  \geq \e \Big) \leq P(\Delta_\tot \geq \e) \leq K e^{-\kappa |\Gamma| \e^2}.
\label{eq:bound_Del_tot}
\end{equation}
Now, using the pseudo-Lipschitz property of $\phi_h$, we have
\be
\begin{split}
T_1&\leq P\Big(\frac{1}{|\Gamma|}\sum_{i\in\Gamma} L (1 + \|\hat{a}_{\Lambda_i}\| + \|\hat{c}_{\Lambda_i}\|) \|[\hat{a}-\hat{c}]_{\Lambda_i}\| \geq \frac{\epsilon}{2}\Big)\\
&\overset{(a)}{\leq}P \Big( \Big( \frac{1}{|\Gamma |}\sum_{i\in\Gamma} ( 1+\|\hat{a}_{\Lambda_i}\|+\|\hat{c}_{\Lambda_i}\|)^2 \Big)^{1/2} \\
&\hspace*{1in}\cdot\Big( \frac{1}{|\Gamma |} \sum_{i\in\Gamma}\|[\hat{a}-\hat{c}]_{\Lambda_i}\|^2\Big)^{1/2}  \geq \frac{\e}{2L}\Big)\\
&\overset{(b)}{\leq} P\Big(\Big( 1 + \sqrt{2d}\frac{\|\hat{a}\|}{\sqrt{|\Gamma |}} + \sqrt{2d}\frac{\|\hat{c}\|}{\sqrt{|\Gamma |}}\Big)\Delta_\tot  \geq \frac{\e}{2\sqrt{6d}L} \Big)\\
&\overset{(c)}{\leq} P\Big(\Big( 1 + 2 \sqrt{2d}\frac{\|\hat{c}\|}{\sqrt{|\Gamma |}} + \sqrt{2d}\Delta_\tot \Big)
\Delta_\tot  \geq \frac{\e}{2\sqrt{6d}L} \Big).
\label{eq:T1_bound_1}
\end{split}
\ee
Above, step $(a)$ follows from Cauchy-Schwartz and step $(b)$ from an application of Lemma \ref{lem:squaredsums}:
\begin{align*}
\sum_{i\in\Gamma} ( 1+\|\hat{a}_{\Lambda_i}\|+\|\hat{c}_{\Lambda_i}\|)^2&\leq 3\sum_{i\in\Gamma}( 1+\|\hat{a}_{\Lambda_i}\|^2+\|\hat{c}_{\Lambda_i}\|^2) \\
&\leq 3(\sqrt{|\Gamma|} + \sqrt{2d} \|\hat{a}\| + \sqrt{2d}\|\hat{c}\|)^2,
\end{align*}
and $\sum_{i\in\Gamma} \|[\hat{a}-\hat{c}]_{\Lambda_i}\|^2 \leq 2d \|\hat{a}-\hat{c}\|^2$ along with \eqref{eq:Delta_tot_def}.
Step $(c)$ follows from $\|\hat{a}\| \leq \|\hat{a}-\hat{c}\|+\|\hat{c}\| \leq \sqrt{|\Gamma|} \Delta_\tot + \|\hat{c}\|$.  Notice that 
\begin{equation*}
\|\hat{c}\|^2 =  \sum_{r=0}^t \|\hat{h}_\pure^{r+1}\|^2 + \|\beta\|^2 \overset{d}{=} \sum_{r=0}^t \tau_r^2\|\tilde{\mbf{Z}}_{r}\|^2 + \|\beta\|^2,
\end{equation*}
where the last step follows from Lemma \ref{lem:h_cond_pure}.
Define $\mathbb{E}_{c}:=\sum_{r=0}^t \tau_r^2 + \sigma_{\beta}^2$. Then
\begin{align}
&P\Big( \Big\lvert\frac{\|\hat{c}\|^2}{|\Gamma|}  -  \mathbb{E}_{c} \Big\lvert \geq \e \Big) \leq \sum_{r=0}^t P\Big(\Big\lvert\frac{\|\tilde{\mbf{Z}}_{r}\|^2}{|\Gamma|}- 1\Big\lvert \geq \frac{\e}{(t+2)\tau_r^2}\Big)\nonumber\\
&\qquad\qquad + P\Big(\Big\lvert\frac{\|\beta\|^2}{|\Gamma|}-\sigma_{\beta}^2\Big\lvert \geq \frac{\e}{t+2}\Big) \leq Ke^{-\kappa |\Gamma| \e^2},\label{eq:bound_hatc}
\end{align}
where the last step follows from Lemma \ref{subexp} and \eqref{eq:beta0_assumption2}.
Therefore, using the bound in \eqref{eq:T1_bound_1},
\begin{align*}
T_1 &\leq  P\Big(\Big( 1 + 2 \sqrt{2d}\Big(\frac{\|\hat{c}\|}{\sqrt{|\Gamma |}}-\mathbb{E}_c^{1/2}\Big)\\
&\qquad\qquad + 2 \sqrt{2d}\mathbb{E}_c^{1/2} + \sqrt{2d}\Delta_\tot \Big)\Delta_\tot  \geq \frac{\e}{2\sqrt{6d}L} \Big)\\
&\leq P\Big(\Big\lvert\frac{\|\hat{c}\|}{\sqrt{|\Gamma |}}-\mathbb{E}_c^{1/2}\Big\lvert \geq \frac{\e}{\sqrt{2d}} \Big) \\
&\qquad\quad+ P\Big( \Delta_\tot \geq \frac{\frac{\e}{\sqrt{2d}}\min\{1,\frac{1}{2\sqrt{3}\, L}\}}{4+2\sqrt{2d} \,\mathbb{E}_c^{1/2}} \Big) \leq K e^{-\kappa |\Gamma| \e^2},
\end{align*}
where the last step follows from \eqref{eq:bound_Del_tot} and \eqref{eq:bound_hatc}.

Next, consider term $T_2$ of \eqref{eq:Htbi}. 
\begin{align*}
&P\Big(\Big\vert \frac{1}{|\Gamma|}\sum_{i\in\Gamma} \Big[\phi_h([h_{\textsf{pure}}^1]_{\Lambda_i},...,[h_{\textsf{pure}}^{t+1}]_{\Lambda_i},\beta_{\Lambda_i})\\
&\qquad\qquad- \mathbb{E}[\phi_h(\tau_0[\tilde{\mathbf{Z}}_0]_{\Lambda_i},...,\tau_t[\tilde{\mathbf{Z}}_t]_{\Lambda_i},\beta_{\Lambda_i})] \Big]\Big\vert \geq \frac{\epsilon}{2}\Big)\\
&\leq P\Big(\Big\vert \frac{1}{|\Gamma|}\sum_{i\in\Gamma} \Big[\phi_h([h_{\textsf{pure}}^1]_{\Lambda_i},...,[h_{\textsf{pure}}^{t+1}]_{\Lambda_i},\beta_{\Lambda_i})\\
&\qquad\qquad-\mathbb{E}_{\tilde{\mbf{Z}}_1,\ldots,\tilde{\mbf{Z}}_t}[\phi_h(\tau_0[\tilde{\mathbf{Z}}_0]_{\Lambda_i},...,\tau_t[\tilde{\mathbf{Z}}_t]_{\Lambda_i},\beta_{\Lambda_i})] \Big]\Big\vert \geq \frac{\epsilon}{4}\Big)\\
&+ P\Big(\Big\vert \frac{1}{|\Gamma|}\sum_{i\in\Gamma} \Big[\mathbb{E}_{\tilde{\mbf{Z}}_1,\ldots,\tilde{\mbf{Z}}_t}[\phi_h(\tau_0[\tilde{\mathbf{Z}}_0]_{\Lambda_i},...,\tau_t[\tilde{\mathbf{Z}}_t]_{\Lambda_i},\beta_{\Lambda_i})]\\
&\qquad\qquad-\mathbb{E}[\phi_h(\tau_0[\tilde{\mathbf{Z}}_0]_{\Lambda_i},...,\tau_t[\tilde{\mathbf{Z}}_t]_{\Lambda_i},\beta_{\Lambda_i})] \Big]\Big\vert \geq \frac{\epsilon}{4}\Big).
\end{align*}
Label the two terms on the RHS as $T_{2a}$ and $T_{2b}$. $T_{2a}$ can be bounded in a similar way as $T_2$ in \eqref{eq:phih_0} and $T_{2b}$ has the desired bound by Lemma \ref{lem:PL_MRF_conc}, since the function $\tilde{\phi}_h:\mathbb{R}^\Lambda\to\mathbb{R}$ defined as 
\begin{equation*}
\begin{split}
\tilde{\phi}(s)&:=\mathbb{E}_{\tilde{\mbf{Z}}_1,\ldots,\tilde{\mbf{Z}}_t}[\phi_h(\tau_0[\tilde{\mathbf{Z}}_0]_{\Lambda_i},...,\tau_t[\tilde{\mathbf{Z}}_t]_{\Lambda_i},s)]\\
&=\mathbb{E}_{\tilde{\mbf{Z}}_1,\ldots,\tilde{\mbf{Z}}_t}[\phi_h(\tau_0\mc{T}_i([\tilde{\mathbf{Z}}_0]_{\Lambda_i\cap\Gamma}),...,\tau_t\mc{T}_i([\tilde{\mathbf{Z}}_t]_{\Lambda_i\cap\Gamma}),s)]
\end{split}
\end{equation*}
is PL(2) by Lemmas \ref{lem:expZ} and \ref{lem:PLwithAvg}.

\textbf{(c)} We first show the concentration of $\frac{1}{n}(h^{t+1})^*\mc{V}(\beta)=\frac{1}{n}\sum_{i\in\Gamma}\hat{h}^{t+1}_i\beta_i$. Using the PL(2) function $\phi_1$ defined in $\mc{H}_1 (c)$, we have that $\phi_1(\hat{h}^{t+1}_{\Lambda_i},\beta_{\Lambda_i})=\hat{h}^{t+1}_i \beta_i$ and $\mathbb{E}[\phi_1(\tau_t[\tilde{\mathbf{Z}}_t]_{\Lambda_i},\beta_{\Lambda_i})]=\mathbb{E}[[\tau_t\tilde{\mathbf{Z}}_t]_i]\mathbb{E}[\beta_i]=0$ for all $i\in\Gamma$, since $[\tau_t\tilde{\mathbf{Z}}_t]_i$ has zero-valued mean and is independent of $\beta_i$. Therefore, $\mc{H}_{t+1} (b)$ gives the desired upper bound, since
\begin{align*}
&P\Big(\Big\vert \frac{(h^{t+1})^*\mc{V}(\beta)}{n}\Big\vert \geq \epsilon \Big)\\
&=P\Big(\Big\vert \frac{1}{|\Gamma|}\sum_{i\in\Gamma} \Big(\phi_1(\hat{h}^{t+1}_{\Lambda_i},\beta_{\Lambda_i}) -\mathbb{E}[\phi_1(\tau_t[\tilde{\mathbf{Z}}_t]_{\Lambda_i},\beta_{\Lambda_i})]\Big) \Big\vert \geq \delta\epsilon \Big).  %\leq K e^{-\kappa n \e^2},
\end{align*}

We now show the concentration of $\frac{1}{n}(h^{t+1})^*q^0=\frac{1}{n}\sum_{i\in\Gamma} \hat{h}^{t+1}_i\hat{q}^0_i$. Using the PL(2) function $\phi_2$ defined in $\mc{H}_1 (c)$, we have that $\phi_2(\hat{h}^{t+1}_{\Lambda_i},\beta_{\Lambda_i})=\hat{h}^{t+1}_i \hat{q}^0_i$ and $\mathbb{E}[\phi_2(\tau_t[\tilde{\mathbf{Z}}_t]_{\Lambda_i},\beta_{\Lambda_i})]=\mathbb{E}[\tau_t[\tilde{\mathbf{Z}}_t]_i]\mathbb{E}[f_0(\mathbf{0},\beta_{\Lambda_i})]=0$, since $[\tilde{\mathbf{Z}}_t]_i$ has zero-valued mean and is independent of $\beta_{\Lambda_i}$ for all $i\in\Gamma$. Therefore, using $\mc{H}_{t+1} (b)$, we have the desired upper bound, since
\begin{align*}
&P\Big(\Big\vert \frac{(h^{t+1})^*q^0}{n}\Big\vert \geq \epsilon \Big)\\
& = P\Big(\Big\vert \frac{1}{|\Gamma|}\sum_{i\in\Gamma} \Big(\phi_2(\hat{h}^{t+1}_{\Lambda_i},\beta_{\Lambda_i})-\mathbb{E}[\phi_2(\tau_t[\tilde{\mathbf{Z}}_t]_{\Lambda_i},\beta_{\Lambda_i})] \Big)\Big\vert \geq \delta\e\Big). % \leq K e^{-\kappa n \e^2}.
\end{align*}

\textbf{(d)} Let a function $\tilde{\phi}_3:\mathbb{R}^{3|\Lambda|}\to\mathbb{R}$ be defined as $\tilde{\phi}_3(x,y,z)=\mc{S}(x)\mc{S}(y)$. Since the operator $\mc{S}$ defined in \eqref{eq:operator_S} is Lipschitz, $\tilde{\phi}_3$ is PL(2) by Lemma \ref{lem:Lprods}. Note that 
$\tilde{\phi}_3([\hat{h}^{r+1}]_{\Lambda_i},\hat{h}^{t+1}_{\Lambda_i},\beta_{\Lambda_i})=\hat{h}^{r+1}_i\hat{h}^{t+1}_i$ and $\mathbb{E}[\tilde{\phi}_3(\tau_r[\tilde{\mathbf{Z}}_r]_{\Lambda_i},\tau_t[\tilde{\mathbf{Z}}_t]_{\Lambda_i},\beta_{\Lambda_i})]=\tau_r\tau_t\mathbb{E}[[\tilde{\mathbf{Z}}_r]_i[\tilde{\mathbf{Z}}_t]_i]=\breve{E}_{r,t}$, where the last equality follows from the definition in \eqref{eq:tildeZcov}. Therefore, the result follows from $\mc{H}_{t+1} (b)$, since
\begin{align*}
&P\Big(\Big\vert \frac{1}{|\Gamma|} (h^{r+1})^* h^{t+1}- \breve{E}_{r,t} \Big\vert \geq \epsilon \Big)\\ 
&=P\Big(\Big\vert \frac{1}{|\Gamma|}\sum_{i\in\Gamma}\tilde{\phi}_3([\hat{h}^{r+1}]_{\Lambda_i},\hat{h}^{t+1}_{\Lambda_i},\beta_{\Lambda_i})\\
&\hspace*{1in} - \mathbb{E}[\tilde{\phi}_3(\tau_r [\tilde{\mathbf{Z}}_r]_{\Lambda_i},\tau_t[\tilde{\mathbf{Z}}_t]_{\Lambda_i},\beta_{\Lambda_i}] \Big\vert \geq \epsilon \Big). %\leq K e^{-\kappa n \e^2}.
\end{align*}

\textbf{(e)} We will show the concentration of $\frac{1}{n}(q^0)^*q^{t+1}=\frac{1}{n}\sum_{i\in\Gamma}\hat{q}_i^0\hat{q}_i^{t+1}$; the concentration of $\frac{1}{n}(q^{r+1})^*q^{t+1}$ follows similarly. The function $\tilde{\phi}_4(x,y): \mathbb{R}^{2|\Lambda|} \rightarrow \mathbb{R}$ defined as  $\tilde{\phi}_4(x,y):=f_0(\mathbf{0},y)f_{t+1}(x,y)$ is PL(2) by Lemma \ref{lem:Lprods} and $\tilde{\phi_4}(\hat{h}^{t+1}_{\Lambda_i},\beta_{\Lambda_i})=\hat{q}_i^0 \hat{q}_i^{t+1}$.  Moreover,
\begin{equation*}
\begin{split}
&\sum_{i\in\Gamma}\mathbb{E}[\tilde{\phi_4}(\tau_t[\tilde{\mathbf{Z}}_t]_{\Lambda_i},\beta_{\Lambda_i})]\\
&=\sum_{i\in\Gamma}\mathbb{E}[f_0(\mathbf{0},\beta_{\Lambda_i})f_{t+1}(\tau_t [\tilde{\mathbf{Z}}_t]_{\Lambda_i},\beta_{\Lambda_i})]=\delta |\Gamma| \tilde{E}_{0,t+1},
\end{split}
\end{equation*}
by definition \eqref{eq:Edef}. Therefore, using $\mc{H}_{t+1} (b)$, we have the desired result, since
\begin{equation*}
\begin{split}
&P\Big(\Big\vert \frac{(q^0)^*q^{t+1}}{n} - \tilde{E}_{0,t+1} \Big\vert \geq \epsilon\Big)\\
&=P\Big(\Big\vert \frac{1}{|\Gamma|}\sum_{i\in\Gamma}\Big[\tilde{\phi}_4(\hat{h}^{t+1}_{\Lambda_i},\beta_{\Lambda_i}) - \mathbb{E}[\tilde{\phi}_4(\tau_t [\tilde{\mathbf{Z}}_t]_{\Lambda_i},\beta_{\Lambda_i})]\Big] \Big\vert \geq \delta\epsilon\Big). %\leq K e^{-\kappa n \e^2}
\end{split}
\end{equation*}

\textbf{(f)} The concentration of $\lambda_t$ around $\hat{\lambda}_t$ follows $\mathcal{H}_{t+1}(b)$ applied to
the function $\phi_h(h^{t+1}_{\Lambda_i},\beta_{\Lambda_i}):=f_{t+1}'(h^{t+1}_{\Lambda_i},\beta_{\Lambda_i})$, since $f_{t+1}'$ is assumed to be Lipschitz, hence PL(2).
Next, we show concentration for $\frac{1}{n}(h^{t+1})^*q^{r+1}=\frac{1}{n}\sum_{i\in\Gamma} \hat{h}^{t+1}_i \hat{q}^{r+1}_i$. Let $\tilde{\phi}_5:\mathbb{R}^{3|\Lambda|}\to\mathbb{R}$ be defined as $\tilde{\phi}_5(x,y,z):=\mc{S}(y)f_{r+1}(x,z)$, which is PL(2) by Lemma \ref{lem:Lprods}. Note, $\tilde{\phi}_5([\hat{h}^{r+1}]_{\Lambda_i},\hat{h}^{t+1}_{\Lambda_i},\beta_{\Lambda_i})=\hat{h}^{t+1}_i \hat{q}^{r+1}_i$ and
\begin{align}
&\sum_{i\in\Gamma}\mathbb{E}[\tilde{\phi}_5(\tau_r[\tilde{\mathbf{Z}}_r]_{\Lambda_i},\tau_t [\tilde{\mathbf{Z}}_t]_{\Lambda_i},\beta_{\Lambda_i})]\nonumber\\
&= \sum_{i\in\Gamma} \mathbb{E}[[\tau_t\tilde{\mathbf{Z}}_t]_i f_{r+1}(\tau_r[\tilde{\mathbf{Z}}_r]_{\Lambda_i},\beta_{\Lambda_i})]= |\Gamma| \hat{\lambda}_{r+1} \breve{E}_{r,t},
\end{align}
where the last equality follows using Stein's Method, Lemma \ref{fact:stein}, as in $\mc{H}_1 (f)$. Therefore, $\mc{H}_{t+1} (b)$ gives the desired result, since
\begin{align*}
&P\Big(\Big\vert \frac{(h^{t+1})^*q^{r+1}}{n} - \hat{\lambda}_{r+1}\breve{E}_{r,t} \Big\vert\geq \epsilon\Big)\\
& =P\Big( \Big\vert \frac{1}{|\Gamma|}\sum_{i\in\Gamma} \Big( \tilde{\phi}_5(\hat{h}^{r+1}_{\Lambda_i},\hat{h}^{t+1}_{\Lambda_i},\beta_{\Lambda_i})\\
&\hspace*{1in} - \mathbb{E}[\tilde{\phi}_5(\tau_r[\tilde{\mathbf{Z}}_r]_{\Lambda_i},\tau_t[\tilde{\mathbf{Z}}_t]_{\Lambda_i},\beta_{\Lambda_i})]\Big)\Big\vert \geq \delta\e\Big). %\leq K e^{-\kappa n \e^2}.
\end{align*}

\textbf{(g)} 
We can represent $\textbf{Q}_{t+1}$ as follows.
\ben
\mathbf{Q}_{t+1} = \frac{1}{n}\Big( \begin{array}{cc}
n\mathbf{Q}_{t} & Q_{t}^* q^{t} \\
 (Q_{t}^* q^{t})^* & \norm{q^{t}}^2 \end{array} \Big),
 \een
 Then, using $\frac{1}{n}\mathbf{Q}_{t}^{-1} Q_{t}^* q^{t} = \gamma^{t}$ and $(Q_{t}^* q^{t})^* \gamma^{t} = (q^{t})^* q^{t}_{\parallel}$, it follows by the block inversion formula that
\be
 \mathbf{Q}_{t+1}^{-1} = \Big( \begin{array}{cc}
\mathbf{Q}_{t}^{-1} + n \norm{q^{t}_{\perp}}^{-2} \gamma^{t}(\gamma^{t})^* & -n \norm{q^{t}_{\perp}}^{-2}  \gamma^{t} \\
-n \norm{q^{t}_{\perp}}^{-2}  (\gamma^{t})^* & n \norm{q^{t}_{\perp}}^{-2}  \end{array} \Big).
\label{eq:Qt1_inverse}
\ee
Using definitions \eqref{eq:hatalph_hatgam_def} and \eqref{eq:sigperp_defs}, block inversion can be similarly used to invert $\tilde{C}^{t+1}$:
\begin{align}
& (\tilde{C}^{t+1})^{-1}= \Big( \begin{array}{cc}
\tilde{C}^t & \tilde{E}_t \nonumber\\
\tilde{E}_t^* & \sigma_{t}^2  \end{array} \Big)^{-1}\\
 & = \Big( \begin{array}{cc}
(\tilde{C}^{t})^{-1} + (\sigma_t^{\perp})^{-2} \hat{\gamma}^{t}(\hat{\gamma}^{t})^* & -(\sigma_t^{\perp})^{-2}  \hat{\gamma}^{t} \\
-(\sigma_t^{\perp})^{-2}  (\hat{\gamma}^{t})^* & (\sigma_{t}^{\perp})^{-2}  \end{array} \Big).
\label{eq:Qt1_inverse_conc}
\end{align}
 In what follows, we show concentration for each of the elements in \eqref{eq:Qt1_inverse} to the corresponding elements in \eqref{eq:Qt1_inverse_conc}.

First, $n\norm{q^{t}_{\perp}}^{-2}$ concentrates to $(\sigma_{t}^{\perp})^{-2}$ at rate $K e^{-\kappa n \e^2}$ by $\mc{H}_{t}(h)$ and Lemma \ref{inverses}.  Next, consider the $i^{th}$ element of $-n \norm{q^{t}_{\perp}}^{-2} \gamma^{t}$. For $i \in [t]$, using  Lemma \ref{products} and $\mc{H}_{t} (g), (h)$ as discussed in the previous paragraph,
\begin{align}
P &\Big(\Big \lvert n \norm{q^{t}_{\perp}}^{-2} \gamma^{t}_{i-1} - (\sigma_{t}^{\perp})^{-2} \hat{\gamma}^{t}_{i-1} \Big \lvert \geq \e\Big) \leq K e^{-\kappa n \e^2}. \label{eq:matlem_bound1}
\end{align}
Consider element $(i,j)$ of $\mathbf{Q}_{t}^{-1} + n \norm{q^{t}_{\perp}}^{-2} \gamma^{t}(\gamma^{t})^*$ for $i,j \in [t]$.
\begin{align*}
P &\Big(\Big \lvert [\mathbf{Q}_{t}^{-1}]_{i,j} + n \norm{q^{t}_{\perp}}^{-2} \gamma^{t}_{i-1} \gamma^{t}_{j-1} - [(\tilde{C}^{t})^{-1}]_{i,j}\\
&\hspace*{2in}-(\sigma_{t}^{\perp})^{-2} \hat{\gamma}^{t}_{i-1} \hat{\gamma}^{t}_{j-1} \Big \lvert \geq \e\Big) \\
&\overset{(a)}{\leq} P \Big(\Big \lvert [\mathbf{Q}_{t}^{-1}]_{i,j}  - [(\tilde{C}^{t})^{-1}]_{i,j} \Big \lvert \geq \frac{\e}{2}\Big)  + P \Big( \lvert \gamma^{t}_{j-1} -  \hat{\gamma}^{t}_{j-1}  \lvert \geq \frac{\e'}{2}\Big) \\
&\quad  + P \Big(\Big \lvert n \norm{q^{t}_{\perp}}^{-2} \gamma^{t}_{i-1} - (\sigma_{t}^{\perp})^{-2} \hat{\gamma}^{t}_{i-1}  \Big \lvert \geq \frac{\e'}{2}\Big)  \\
&\overset{(b)}{\leq} K e^{- \kappa n \e^2} + K e^{-\kappa n \e^2} + K e^{-\kappa n \e^2}.
\end{align*}
Step $(a)$ follows from Lemma \ref{sums} and Lemma \ref{products} with
$
\e' = \min \{\sqrt{\frac{\e}{3}}, \frac{\e (\sigma_{t}^{\perp})^{2}}{3 \hat{\gamma}^{t}_{i-1}}, \frac{\e}{3  \hat{\gamma}^{t}_{j-1}}\}.
$
Step $(b)$ follows from the inductive hypothesis $\mc{H}_t(g)$, together with \eqref{eq:matlem_bound1}.

We now prove $\gamma^{t+1} \doteq \hat{\gamma}^{t+1}$. Recall, $\gamma^{t+1}= \frac{1}{n}\mathbf{Q}_{t+1}^{-1}Q_{t+1}^*q^{t+1}$ where $\mathbf{Q}_{t+1}:=\frac{1}{n}Q_{t+1}^*Q_{t+1}$. 
Thus, $\gamma_{r-1}^{t+1}=\frac{1}{n} \sum_{i=1}^{t+1} [\mathbf{Q}_{t+1}^{-1}]_{r,i}(q^{i-1})^*q^{t+1}$, for $1 \leq r \leq t+1$.
Then by the definition of $\hat{\gamma}^{t+1}$, for $1 \leq r \leq t+1$,
\begin{align*}
&P\Big(\vert \gamma^{t+1}_{r-1} - \hat{\gamma}^{t+1}_{r-1}\vert\geq \epsilon\Big)= P\Big(\Big\vert \sum_{i=1}^{t+1}\Big(\frac{1}{n} [\mathbf{Q}_{t+1}^{-1}]_{r,i} (q^{i-1})^*q^{t+1} \\
&\qquad\qquad-[(\tilde{C}^{t+1})^{-1}]_{r,i}\tilde{E}_{i-1,t+1} \Big)\Big\vert \geq \e \Big)\\
&\overset{(a)}{\leq} \sum_{i=1}^{t+1} P\Big(\Big\vert \frac{1}{n} [\mathbf{Q}_{t+1}^{-1}]_{r,i}(q^{i-1})^*q^{t+1} \\
&\qquad\qquad-[(\tilde{C}^{t+1})^{-1}]_{r,i}\tilde{E}_{i-1,t+1} \Big\vert \geq \frac{\epsilon}{t+1}\Big)\\
&\overset{(b)}{\leq} \sum_{i=1}^{t+1} P \Big(\Big\vert  \frac{1}{n} (q^{i-1})^*q^{t+1} - \tilde{E}_{i-1,t+1} \Big\vert \geq \tilde{\epsilon}_i \Big)\\
& + P\Big(\Big\vert [\mathbf{Q}_{t+1}^{-1}]_{r,i} -   [(\tilde{C}^{t+1})^{-1}]_{r,i} \Big\vert \geq \tilde{\epsilon}_i \Big)\\
&\overset{(c)}{\leq} K e^{-\kappa n \epsilon^2} + K e^{-\kappa n \epsilon^2}.
\end{align*}
Step (a) follows from Lemma \ref{sums} and step (b) from Lemma~\ref{products}, with $\tilde{\epsilon}_i:=\min\{\sqrt{\frac{\epsilon}{3(t+1)}},\frac{\epsilon}{3(t+1)\tilde{E}_{i-1,t+1}},\frac{\epsilon}{3(t+1)[(\tilde{C}^{t+1})^{-1}]_{r,i}}\}$.  Step (c) uses $\mathcal{H}_{t+1}(e)$ and what we have just demonstrated in the previous paragraphs.

\textbf{(h)}  
First, note that 
$||q^{t+1}_{\perp}||^2 = ||q^{t+1}||^2 - ||q^{t+1}_{\parallel}||^2 = ||q^{t+1}||^2 - ||Q_{t+1} \gamma^{t+1}||^2$. 
Using the definition of $\sigma_{t+1}^{\perp}$ in \eqref{eq:sigperp_defs}, we then have
\begin{align}
& P\Big(\Big \lvert \frac{\norm{q^{t+1}_{\perp}}^2}{n} - (\sigma_{t+1}^{\perp})^2 \Big \lvert \geq \epsilon\Big) \nonumber\\
&= P\Big(\Big \lvert \frac{\norm{q^{t+1}}^2}{n} - \frac{\norm{Q_{t+1} \gamma^{t+1}}^2}{n}- \sigma_{t+1}^2 + (\hat{\gamma}^{t+1})^* \tilde{E}_{t+1}\Big \lvert \geq \epsilon\Big) \nonumber \\
&\leq P\Big(\Big \lvert \frac{1}{n}\norm{q^{t+1}}^2 - \sigma_{t+1}^2 \Big \lvert \geq \frac{\epsilon}{2} \Big)\nonumber\\
& \qquad+ P\Big(\Big \lvert \frac{1}{n}\norm{Q_{t+1} \gamma^{t+1}}^2 - (\hat{\gamma}^{t+1})^* \tilde{E}_{t+1} \Big \lvert \geq \frac{\epsilon}{2}\Big).
\label{eq:Hgt1}
\end{align}
By $\mathcal{H}_{t+1} (e)$, the first term on the LHS of  \eqref{eq:Hgt1} is  bounded by $Ke^{-\kappa n\epsilon^2}$. For the second term, using $\gamma^{t+1} =  \frac{1}{n} \textbf{Q}_{t+1}^{-1}Q_{t+1}^* q^{t+1}$,
\ben
\begin{split}
&\norm{Q_{t+1} \gamma^{t+1}}^2 = n (\gamma^{t+1})^* \textbf{Q}_{t+1} \gamma^{t+1}\\
& = (\gamma^{t+1})^*\textbf{Q}_{t+1} \textbf{Q}_{t+1}^{-1} {Q_{t+1}^* q^{t+1}} = (\gamma^{t+1})^* {Q_{t+1}^* q^{t+1}}\\
& = \sum_{i=0}^{t} \gamma^{t+1}_i (q^i)^* q^{t+1}.
\end{split}
\een 
Hence 
\begin{align*}
& P\Big(\Big \lvert \frac{1}{n}\norm{Q_{t+1} \gamma^{t+1}}^2 - (\hat{\gamma}^{t+1})^* \tilde{E}_{t+1} \Big \lvert \geq \frac{1}{2} \epsilon \Big)\\
& = P\Big(\Big \lvert \sum_{i=0}^{t} \Big(\frac{1}{n}\gamma^{t+1}_i (q^i)^* q^{t+1} - \hat{\gamma}^{t+1}_i \tilde{E}_{i,t+1} \Big) \Big \lvert \geq \frac{1}{2}\epsilon\Big) \\
& \leq \sum_{i=0}^{t} P\Big( \lvert \frac{1}{n} \gamma^{t+1}_i (q^i)^* q^{t+1} - \hat{\gamma}^{t+1}_i \tilde{E}_{i,t+1} \lvert \geq \frac{1}{2} \epsilon (t+1)^{-1}\Big) \\
&\overset{(a)}{\leq} \sum_{i=0}^{t} P( \lvert \gamma^{t+1}_i - \hat{\gamma}^{t+1}_i \lvert \geq \tilde{\epsilon}_i )+ P\Big( \lvert \frac{(q^i)^* q^{t+1}}{n} - \tilde{E}_{i,t+1}  \lvert \geq \tilde{\epsilon}_i \Big) \\
&\overset{(b)}{\leq} K e^{-\kappa n \epsilon^2} + K e^{-\kappa n \epsilon^2}.
\end{align*}
Step (a) follows from the concentration of products, Lemma \ref{products}, using
$\tilde{\epsilon}_i := \min \Big \{\sqrt{\frac{\epsilon}{6(t+1)}} , \frac{\epsilon}{6(t+1) \tilde{E}_{i,t+1}} , \frac{\epsilon}{6(t+1) \hat{\gamma}^{t+1}_i}  \Big\}$,
and step (b) using $\mathcal{H}_{t+1} (e)$ and $\mathcal{H}_{t+1} (g)$.

\appendices
% !TEX root = main.tex

\section{Concentration Lemmas}
\label{app:conc_lemma}
In the following, $\e >0$ is assumed to be a generic constant, with additional conditions specified whenever needed.  The proof of the Lemmas in this section can be found in \cite{RushV18}.

%%%%%%%%%
\begin{applem}
\label{sums}
(Concentration of Sums.)
If random variables $X_1, \ldots, X_M$ satisfy $P(\abs{X_i} \geq \e) \leq e^{-n\kappa_i \e^2}$ for $1 \leq i \leq M$, then 
\ben
P\Big( \lvert \sum_{i=1}^M X_i  \lvert \geq \e\Big) \leq \sum_{i=1}^M P\Big(|X_i| \geq \frac{\e}{M}\Big) \leq M e^{-n (\min_i \kappa_i) \e^2/M^2}.
\een
\end{applem}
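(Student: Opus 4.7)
The plan is to prove the chain of two inequalities separately. For the first inequality, I would argue by contrapositive on the sample-path level: if each $|X_i| < \epsilon/M$, then the triangle inequality gives $|\sum_{i=1}^M X_i| \leq \sum_{i=1}^M |X_i| < M \cdot (\epsilon/M) = \epsilon$. Taking the contrapositive, the event $\{|\sum_{i=1}^M X_i| \geq \epsilon\}$ is contained in the union $\bigcup_{i=1}^M \{|X_i| \geq \epsilon/M\}$. A union bound then yields the first inequality
\[
P\Bigl(\Bigl|\sum_{i=1}^M X_i\Bigr| \geq \epsilon\Bigr) \leq \sum_{i=1}^M P\Bigl(|X_i| \geq \frac{\epsilon}{M}\Bigr).
\]

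For the second inequality, I would apply the given hypothesis $P(|X_i| \geq \e') \leq e^{-n \kappa_i (\e')^2}$ with $\e' = \epsilon/M$, giving
\[
P\Bigl(|X_i| \geq \frac{\epsilon}{M}\Bigr) \leq e^{-n \kappa_i \epsilon^2/M^2} \leq e^{-n (\min_j \kappa_j) \epsilon^2/M^2}
\]
for each $i$, where the last step uses $\kappa_i \geq \min_j \kappa_j$. Summing these $M$ bounds gives the factor $M$ on the right-hand side and completes the chain.

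There is no real obstacle here; the argument is a textbook union bound plus monotonicity of $x \mapsto e^{-nx\epsilon^2/M^2}$ in $x$. The only thing worth being careful about is the direction of the containment when passing to the complement, and the fact that the bound $e^{-n \kappa_i (\e')^2}$ is assumed to hold for the specific value $\e' = \epsilon/M$ (which is allowed since $\epsilon$ was generic in the hypothesis). Since the lemma is stated cleanly with a uniform form for the tail bound across all $i$, no additional auxiliary results are needed, and the proof should fit in a few lines.
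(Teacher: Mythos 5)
Your proof is correct and is the standard union-bound argument; the paper itself states this lemma without proof, deferring to \cite{RushV18}, where essentially the same reasoning (containment of the event $\{|\sum_i X_i|\geq\e\}$ in $\bigcup_i\{|X_i|\geq\e/M\}$, followed by the hypothesis applied at $\e/M$ and monotonicity in $\kappa_i$) is used. No gaps.
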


\begin{applem}[Concentration of Products]
\label{products} 
For random  variables $X,Y$ and non-zero constants $c_X, c_Y $, if
\ben
\begin{split}
&P\Big( | X- c_X |  \geq \e \Big) \leq K e^{-\kappa n \e^2},\\
&P\Big( | Y- c_Y |  \geq \e \Big) \leq K e^{-\kappa n \e^2},
\end{split}
\een
then the probability  $P\Big( | XY - c_Xc_Y|  \geq \e \Big)$ is bounded by 
\begin{align*}  
&  P\Big( | X- c_X |  \geq \min\Big( \sqrt{\frac{\e}{3}}, \frac{\e}{3 c_Y} \Big) \Big)\\
&  +  P\Big(  | Y- c_Y |  \geq \min\Big( \sqrt{\frac{\e}{3}}, \frac{\e}{3 c_X} \Big) \Big) \\
&\leq 2K \exp\Big\{\frac{-\kappa n \e^2}{9\max(1, c_X^2, c_Y^2)}\Big\}.
\end{align*}
\end{applem}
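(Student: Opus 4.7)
The plan is to reduce the deviation event $\{|XY - c_X c_Y| \geq \e\}$ to deviation events for $X - c_X$ and $Y - c_Y$ alone, using the algebraic identity
\begin{equation*}
XY - c_X c_Y = (X - c_X)(Y - c_Y) + c_X (Y - c_Y) + c_Y (X - c_X).
\end{equation*}
By the triangle inequality and a union bound, if $|XY - c_X c_Y| \geq \e$, then at least one of the three summands on the right has absolute value at least $\e/3$. This breaks the problem into three clean subevents.

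Next I would handle each subevent. The second gives $|Y - c_Y| \geq \e/(3|c_Y|)$ (or, if $c_Y = 0$, is treated degenerately, but the hypothesis excludes that), and symmetrically the third gives $|X - c_X| \geq \e/(3|c_X|)$. For the cross term $(X - c_X)(Y - c_Y)$, I would bound it by observing that $|(X-c_X)(Y-c_Y)| \geq \e/3$ implies either $|X - c_X| \geq \sqrt{\e/3}$ or $|Y - c_Y| \geq \sqrt{\e/3}$, since otherwise the product would be strictly below $\e/3$. Applying a union bound across these three subevents and regrouping by which of $X$ or $Y$ the deviation constrains, one arrives exactly at the claimed bound
\begin{equation*}
P\Big(|X - c_X| \geq \min\Big(\sqrt{\tfrac{\e}{3}},\tfrac{\e}{3c_Y}\Big)\Big) + P\Big(|Y - c_Y| \geq \min\Big(\sqrt{\tfrac{\e}{3}},\tfrac{\e}{3c_X}\Big)\Big).
\end{equation*}

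To obtain the final exponential bound, I would plug the hypothesized concentration rates into each term. Each probability is at most $K \exp\{-\kappa n m^2\}$ where $m$ is the corresponding minimum. Since $\min(\sqrt{\e/3}, \e/(3c))^2 = \min(\e/3, \e^2/(9c^2))$, and for $\e \leq 1$ the quadratic term $\e^2/(9c^2)$ dominates (is the smaller of the two, in the worst case), a single clean upper bound of the form $\e^2/(9 \max(1, c_X^2, c_Y^2))$ captures both terms simultaneously. Summing the two gives the stated $2K \exp\{-\kappa n \e^2 / (9 \max(1, c_X^2, c_Y^2))\}$.

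There is no real obstacle here: the lemma is a completely elementary concentration-of-products bound and the only mild subtlety is tracking the $\min$ carefully so that the exponent in the final bound is uniform in both variables. The proof is essentially a three-line calculation, and I would write it out in that compact form.
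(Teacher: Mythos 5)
Your proposal is correct and follows the same argument as the proof the paper points to (in Rush and Venkataramanan \cite{RushV18}): decompose $XY-c_Xc_Y=(X-c_X)(Y-c_Y)+c_X(Y-c_Y)+c_Y(X-c_X)$, union-bound over the three terms with the cross term forcing one factor to exceed $\sqrt{\e/3}$, and absorb both cases of the minimum into the exponent $\e^2/(9\max(1,c_X^2,c_Y^2))$ using $\e\leq 1$. No gaps.
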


%%%%%%%%%%

\begin{applem}
\label{sqroots}
(Concentration of Square Roots.)
Let $c \neq 0$. If
\ben
P\Big( \lvert X_n^2 - c^2 \lvert \geq \epsilon \Big) \leq e^{-\kappa n \epsilon^2},
\een
then
\ben
P \Big( \Big \lvert \abs{X_n} - \abs{c}  \Big\lvert \geq \epsilon \Big) \leq e^{-\kappa n \abs{c}^2 \epsilon^2}.
\een
\end{applem}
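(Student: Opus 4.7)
The plan is to exploit the algebraic identity $a^2-b^2=(a-b)(a+b)$ applied with $a=|X_n|$ and $b=|c|$, which turns a square-root-type deviation into a squared-deviation bound that matches the hypothesis. First I would observe that $||X_n|-|c||\cdot(|X_n|+|c|)=||X_n|^2-|c|^2|=|X_n^2-c^2|$, so the difference of absolute values can be recovered from the difference of squares by dividing by the sum of absolute values.

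The key inclusion of events is then the following: on the event $\{\,||X_n|-|c||\ge\varepsilon\,\}$, using $|X_n|+|c|\ge|c|$ (which is where we need $c\neq 0$), we get
\begin{equation*}
|X_n^2-c^2|=||X_n|-|c||\cdot(|X_n|+|c|)\ge\varepsilon\,|c|,
\end{equation*}
hence $\{\,||X_n|-|c||\ge\varepsilon\,\}\subseteq\{\,|X_n^2-c^2|\ge\varepsilon|c|\,\}$. Taking probabilities and applying the hypothesis with deviation $\varepsilon|c|$ in place of $\varepsilon$ yields
\begin{equation*}
P\bigl(||X_n|-|c||\ge\varepsilon\bigr)\le P\bigl(|X_n^2-c^2|\ge\varepsilon|c|\bigr)\le e^{-\kappa n(\varepsilon|c|)^2}=e^{-\kappa n|c|^2\varepsilon^2},
\end{equation*}
which is exactly the claimed bound.

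There is really no obstacle here; the proof is essentially a one-line set-inclusion argument combined with the hypothesized tail bound. The only subtlety worth flagging is the role of the assumption $c\neq 0$: without it, the multiplier $|c|$ collapses to zero and the argument is vacuous (as is the conclusion, which degenerates to $e^0=1$). I would therefore state the inclusion cleanly, note that it uses $|X_n|+|c|\ge|c|>0$, and then invoke the hypothesis directly—no use of Lemma~\ref{sums}, Lemma~\ref{products}, or any iteration-specific machinery is needed.
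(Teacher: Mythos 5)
Your argument is correct and is essentially the standard one: the paper does not reprint a proof (it defers to Rush and Venkataramanan), and the proof there is exactly this event inclusion via $|X_n^2-c^2|=\bigl||X_n|-|c|\bigr|\,(|X_n|+|c|)\ge \bigl||X_n|-|c|\bigr|\,|c|$ followed by applying the hypothesis at deviation $\e\abs{c}$. Nothing further is needed.
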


%%%%%%%%%%

\begin{applem}[Concentration of Powers]
\label{powers}
Assume $c \neq 0$ and $\e \in(0, 1]$.  Then for any integer $k \geq 2$, if
\ben
P\Big(\lvert X_n - c  \lvert \geq \epsilon \Big) \leq e^{-\kappa n \epsilon^2},
\een
then
\ben
P\Big( \lvert X_n^k - c^k \lvert \geq \epsilon \Big) \leq e^{ {-\kappa n \e^2}/[{(1+\abs{c})^k -\abs{c}^k}]^2}.
\een
\end{applem}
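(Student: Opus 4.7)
\medskip

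\noindent\textbf{Proof proposal.} The plan is to reduce the concentration of $X_n^k$ around $c^k$ to the hypothesized concentration of $X_n$ around $c$ via the elementary polynomial factorization
\begin{equation*}
X_n^k - c^k = (X_n - c)\sum_{j=0}^{k-1} X_n^{k-1-j}\, c^j,
\end{equation*}
combined with a deterministic bound on the sum obtained from the restriction $\epsilon \in (0,1]$.

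First, I would work on the event $\{|X_n - c| < \tilde{\epsilon}\}$ for some $\tilde{\epsilon} \leq 1$ to be chosen. On this event, the triangle inequality gives $|X_n| \leq |c| + \tilde{\epsilon} \leq |c| + 1$. Applying this bound termwise to the sum above,
\begin{equation*}
\Big|\sum_{j=0}^{k-1} X_n^{k-1-j} c^j\Big| \;\leq\; \sum_{j=0}^{k-1} (|c|+1)^{k-1-j}|c|^j \;=\; (1+|c|)^k - |c|^k,
\end{equation*}
where the last equality is the standard geometric-type identity $\frac{a^k-b^k}{a-b} = \sum_{j=0}^{k-1} a^{k-1-j}b^j$ applied to $a = 1+|c|$, $b = |c|$ (so $a-b = 1$). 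Consequently $|X_n^k - c^k| \leq \tilde{\epsilon}\big[(1+|c|)^k - |c|^k\big]$ on this event.

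Next, I would choose $\tilde{\epsilon} := \epsilon/[(1+|c|)^k - |c|^k]$, and verify the constraint $\tilde{\epsilon} \leq 1$: since the binomial expansion of $(1+|c|)^k - |c|^k = \sum_{j=0}^{k-1}\binom{k}{j}|c|^j \geq 1$ (the $j=0$ term is $1$) and $\epsilon \leq 1$, this holds. The contrapositive of the implication just established then yields the set inclusion
\begin{equation*}
\{|X_n^k - c^k| \geq \epsilon\} \;\subseteq\; \{|X_n - c| \geq \tilde{\epsilon}\},
\end{equation*}
and applying the hypothesis to the right-hand event with $\tilde{\epsilon}$ gives the stated bound $\exp\{-\kappa n \epsilon^2/[(1+|c|)^k - |c|^k]^2\}$.

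There is no real obstacle here; the result is purely deterministic reduction followed by one invocation of the hypothesis. The only bookkeeping points worth highlighting in the write-up are (i) confirming $\tilde{\epsilon} \leq 1$ so that the bound $|X_n| \leq |c|+1$ is legitimate (which is where the assumption $\epsilon \in (0,1]$ enters), and (ii) the clean telescoping identity that collapses the sum $\sum_{j=0}^{k-1}(1+|c|)^{k-1-j}|c|^j$ to exactly $(1+|c|)^k - |c|^k$, which is what produces the explicit constant appearing in the statement.
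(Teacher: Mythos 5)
Your proof is correct: the telescoping factorization, the bound $|X_n|\leq |c|+1$ on the good event, the check that $\tilde{\epsilon}\leq 1$, and the final substitution all go through exactly as written. The paper itself only cites \cite{RushV18} for this lemma, and your argument is essentially the same standard reduction used there, so there is nothing further to compare.
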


%%%%%%%%%%%%%%%%%
\begin{applem}[Concentration of Scalar Inverses]
\label{inverses} Assume $c \neq 0$ and $\e \in(0, 1)$. If
\ben
P\Big( \lvert X_n - c  \lvert \geq \epsilon \Big) \leq e^{-\kappa n \epsilon^2},
\een
then
\ben
P\Big( \lvert X_n^{-1} - c^{-1} \lvert \geq \epsilon \Big) \leq 2 e^{-n \kappa \e^2 c^2 \min\{c^2, 1\}/4}.
\een
\end{applem}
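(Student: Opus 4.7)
The plan is to reduce the concentration of $X_n^{-1}$ around $c^{-1}$ to the given concentration hypothesis on $X_n$ via the elementary identity
\[ \lvert X_n^{-1} - c^{-1} \rvert = \frac{\lvert X_n - c \rvert}{\lvert c \rvert \, \lvert X_n \rvert}. \]
The difficulty is that the right-hand side is not uniformly controlled by $\lvert X_n - c \rvert$: if $X_n$ happens to be close to $0$, then $\lvert X_n^{-1} - c^{-1} \rvert$ can be arbitrarily large. So the first step is to truncate, exploiting the hypothesis to rule out the event that $X_n$ is far from $c$ (and hence possibly close to $0$).

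Concretely, I would split on whether $\lvert X_n - c \rvert \leq \lvert c \rvert/2$. On this event, the reverse triangle inequality gives $\lvert X_n \rvert \geq \lvert c \rvert/2$, so
\[ \lvert X_n^{-1} - c^{-1} \rvert \leq \frac{2 \lvert X_n - c \rvert}{c^2}. \]
Consequently
\[ \{\lvert X_n^{-1} - c^{-1} \rvert \geq \e\} \subseteq \{\lvert X_n - c \rvert > \lvert c \rvert/2\} \cup \{\lvert X_n - c \rvert \geq \e c^2/2\}, \]
and by the union bound together with the hypothesis,
\[ P(\lvert X_n^{-1} - c^{-1} \rvert \geq \e) \leq e^{-n\kappa c^2/4} + e^{-n\kappa \e^2 c^4/4}. \]

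The last step is to verify that each of these two terms is bounded by $e^{-n\kappa \e^2 c^2 \min\{c^2,1\}/4}$, which sums to the claimed $2 e^{-n\kappa \e^2 c^2 \min\{c^2,1\}/4}$. For the second term this is immediate since $c^4 \geq c^2 \min\{c^2,1\}$. For the first term, the assumption $\e \in (0,1)$ enters: we need $c^2 \geq \e^2 c^2 \min\{c^2,1\}$, i.e. $\e^2 \min\{c^2,1\} \leq 1$, which holds because $\e < 1$ and $\min\{c^2,1\} \leq 1$. There is no genuine obstacle here — the proof is a short deterministic inclusion followed by a union bound — the only point that requires a moment of care is tracking the degradation of the exponent through the factor $c^2 \min\{c^2,1\}$ and checking that the hypothesis $\e \in (0,1)$ is exactly what makes the $\lvert c \rvert/2$ truncation term absorb into the claimed rate.
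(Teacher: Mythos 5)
Your proof is correct and is essentially the argument the paper relies on: the paper defers this lemma to Rush and Venkataramanan \cite{RushV18}, whose proof uses exactly this truncation on the event $\{|X_n - c| > |c|/2\}$ (to keep $|X_n|$ bounded away from zero), the identity $|X_n^{-1}-c^{-1}| = |X_n-c|/(|X_n||c|)$, and a union bound yielding the two terms $e^{-\kappa n c^2/4}$ and $e^{-\kappa n \e^2 c^4/4}$, each absorbed into $e^{-\kappa n \e^2 c^2 \min\{c^2,1\}/4}$ via $\e<1$. No gaps.
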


\begin{applem}
\label{lem:normalconc}
For a standard Gaussian random variable $Z$ and  $\e > 0$,
$P( \abs{Z} \geq \e) \leq 2e^{-\frac{1}{2}\e^2}$.
\end{applem}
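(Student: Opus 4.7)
The plan is to prove this by the standard Chernoff (moment generating function) argument, followed by a symmetry/union bound step. First I would recall the moment generating function of a standard Gaussian: $\mathbb{E}[e^{tZ}] = e^{t^2/2}$ for every $t \in \mathbb{R}$, which is a direct Gaussian integral computation.

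Next I would apply Markov's inequality to the random variable $e^{tZ}$ for $t>0$: for any $\e>0$,
\begin{equation*}
P(Z \geq \e) \;=\; P(e^{tZ} \geq e^{t\e}) \;\leq\; e^{-t\e}\,\mathbb{E}[e^{tZ}] \;=\; e^{-t\e + t^2/2}.
\end{equation*}
The right-hand side is minimized in $t$ by differentiating the exponent: the derivative $-\e + t$ vanishes at $t = \e$, giving the one-sided bound $P(Z \geq \e) \leq e^{-\e^2/2}$.

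Finally I would invoke the symmetry $Z \stackrel{d}{=} -Z$ of the standard Gaussian to obtain $P(Z \leq -\e) = P(-Z \geq \e) = P(Z \geq \e) \leq e^{-\e^2/2}$, and then combine the two tails via the union bound:
\begin{equation*}
P(|Z| \geq \e) \;\leq\; P(Z \geq \e) + P(Z \leq -\e) \;\leq\; 2e^{-\e^2/2},
\end{equation*}
which is exactly the claimed inequality. There is no real obstacle here: the only nontrivial step is computing or quoting the Gaussian MGF, and the optimization over $t$ is a one-line calculus exercise. The proof is essentially a citation of a textbook Chernoff bound and could just as well be dispatched by a single reference to a standard concentration reference such as \cite{BLMConc}, which is already cited earlier in the paper for a closely related sub-Gaussian concentration statement.
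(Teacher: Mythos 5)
Your proof is correct and is the standard Chernoff-bound argument; the paper itself does not prove this lemma but simply defers to the reference \cite{RushV18}, where essentially the same computation appears. Nothing further is needed.
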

%%------------------------

\begin{applem}
($\chi^2$-concentration.)
For  $Z_i$, $i \in [n]$ that are i.i.d. $\sim \mc{N}(0,1)$, and  $0 \leq \e \leq 1$,
\[P\Big(\Big \lvert \frac{1}{n}\sum_{i=1}^n Z_i^2 - 1\Big \lvert \geq \e \Big) \leq 2e^{-n \e^2/8}.\]
\label{subexp}
\end{applem}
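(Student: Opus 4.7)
The plan is to establish the bound via a standard Chernoff/sub-exponential argument applied to the i.i.d.\ random variables $X_i := Z_i^2 - 1$, which have mean zero. Since the sum $\frac{1}{n}\sum_i Z_i^2 - 1 = \frac{1}{n}\sum_i X_i$, and by union-bounding the two tails, it suffices to show that each of the one-sided probabilities $P\bigl(\frac{1}{n}\sum_i X_i \geq \e\bigr)$ and $P\bigl(\frac{1}{n}\sum_i X_i \leq -\e\bigr)$ is at most $e^{-n\e^2/8}$.

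The first step is to compute the moment generating function of $X_1 = Z_1^2 - 1$. Using the standard Gaussian integral, $\mathbb{E}[e^{\lambda Z_1^2}] = (1-2\lambda)^{-1/2}$ for $\lambda < 1/2$, so
\begin{equation*}
\psi(\lambda) := \log \mathbb{E}[e^{\lambda X_1}] = -\lambda - \tfrac{1}{2}\log(1-2\lambda), \qquad \lambda < 1/2.
\end{equation*}
The second step is to establish the quadratic bound $\psi(\lambda) \leq 2\lambda^2$ valid on $\lambda \in [0, 1/4]$ (and analogously a bound for $\psi(-\lambda)$ valid for all $\lambda \geq 0$). This follows from the Taylor expansion $-\tfrac{1}{2}\log(1-2\lambda) = \lambda + \lambda^2 + \tfrac{4\lambda^3}{3} + \cdots$, from which $\psi(\lambda) = \lambda^2 + \tfrac{4\lambda^3}{3} + \cdots \leq 2\lambda^2$ once $|\lambda| \leq 1/4$. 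The corresponding bound $\psi(-\lambda) \leq \lambda^2$ holds even more easily for all $\lambda \geq 0$ because $\log(1+2\lambda) \geq 2\lambda - 2\lambda^2$.

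The third step is the Chernoff bound: for any $\lambda \in [0, 1/4]$,
\begin{equation*}
P\Bigl(\tfrac{1}{n}\textstyle\sum_i X_i \geq \e\Bigr) \leq e^{-n\lambda \e}\, \mathbb{E}[e^{\lambda X_1}]^n \leq e^{-n(\lambda\e - 2\lambda^2)}.
\end{equation*}
Optimizing over $\lambda$ suggests $\lambda = \e/4$; since $\e \in [0,1]$, this choice satisfies $\lambda \leq 1/4$ so the MGF bound applies, yielding $P\bigl(\tfrac{1}{n}\sum_i X_i \geq \e\bigr) \leq e^{-n\e^2/8}$. The lower-tail bound is handled identically using $\psi(-\lambda) \leq \lambda^2$ and the same choice $\lambda = \e/4$, which in fact gives a stronger exponent $e^{-n\e^2/8}$ (or better). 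A final union bound over the two tails yields the factor of $2$ and completes the proof.

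There is no real obstacle here; the argument is entirely routine. The only care required is verifying the range of $\lambda$ in the quadratic MGF bound, which is why the hypothesis $\e \leq 1$ appears in the lemma: it guarantees $\lambda = \e/4 \leq 1/4$ stays inside the region where the Taylor-type bound $\psi(\lambda) \leq 2\lambda^2$ is valid.
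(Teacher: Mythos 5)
Your proof is correct and is the standard Chernoff/sub-exponential argument for $\chi^2$ concentration; the paper itself does not reprove this lemma but defers to \cite{RushV18}, where the same moment-generating-function route (bounding $\log\mathbb{E}[e^{\lambda(Z^2-1)}]$ by a quadratic on a restricted range of $\lambda$ and optimizing) is used. The range check $\lambda=\e/4\le 1/4$ and the union bound over the two tails are handled exactly as needed, so there is nothing to add.
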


%%-----------------------------------------------------------------------------------------

\begin{applem}
\cite{BLMConc} Let $X$ be a centered sub-Gaussian random variable with variance factor $\nu$, i.e., $\ln \expec[e^{tX}] \leq \frac{t^2 \nu}{2}$, $\forall t \in \mathbb{R}$. Then $X$ satisfies:
\begin{enumerate}
\item For all $x> 0$, $P(X > x)  \vee P(X <-x) \leq e^{-\frac{x^2}{2\nu}}$, for all $x >0$.
\item For every integer $k \geq 1$,
$\expec[X^{2k}] \leq 2 (k !)(2 \nu)^k \leq (k !)(4 \nu)^k.$
\end{enumerate}
\label{lem:subgauss}
\end{applem}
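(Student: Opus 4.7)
The plan is to derive part (1) by the standard Chernoff argument and part (2) by integrating the resulting tail bound against the layer-cake representation of even moments.

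For part (1), I would start from Markov's inequality applied to $e^{tX}$ with $t > 0$: for any $x > 0$,
\begin{equation*}
P(X > x) = P(e^{tX} > e^{tx}) \le e^{-tx}\,\mathbb{E}[e^{tX}] \le \exp\!\left(-tx + \tfrac{t^2\nu}{2}\right),
\end{equation*}
using the sub-Gaussian hypothesis. Optimizing the right-hand side over $t > 0$ by setting $t = x/\nu$ yields $P(X > x) \le e^{-x^2/(2\nu)}$. For the lower tail, note that $-X$ is also centered sub-Gaussian with the same variance factor, since $\ln\mathbb{E}[e^{t(-X)}] = \ln\mathbb{E}[e^{(-t)X}] \le (-t)^2\nu/2 = t^2\nu/2$, so the same Chernoff bound applied to $-X$ gives $P(X < -x) = P(-X > x) \le e^{-x^2/(2\nu)}$.

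For part (2), I would use the layer-cake identity for nonnegative random variables:
\begin{equation*}
\mathbb{E}[X^{2k}] = \mathbb{E}[|X|^{2k}] = \int_0^{\infty} 2k\, x^{2k-1} P(|X| \ge x)\, dx.
\end{equation*}
Applying the two-sided tail bound from part (1), $P(|X| \ge x) \le 2 e^{-x^2/(2\nu)}$, and changing variables $u = x^2/(2\nu)$ (so $x = \sqrt{2\nu u}$ and $dx = \sqrt{\nu/(2u)}\, du$) reduces the integral to a Gamma function:
\begin{equation*}
\mathbb{E}[X^{2k}] \le 4k \int_0^{\infty} (2\nu u)^{k-1}\sqrt{2\nu u}\,\sqrt{\tfrac{\nu}{2u}}\, e^{-u}\, du
= 4k\,(2\nu)^{k-1}\nu\,\Gamma(k) = 2\,(k!)\,(2\nu)^k,
\end{equation*}
after simplifying $4k \cdot 2^{k-1}\nu^k \cdot (k-1)! = 2 \cdot 2^k \nu^k \cdot k!$. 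The second inequality in the moment bound follows immediately from $2\cdot 2^k \le 2^{2k} = 4^k$, which holds for every integer $k \ge 1$.

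There is no real obstacle: both statements are classical textbook facts (cf.\ \cite{BLMConc}). The only mild bookkeeping step is the change of variables in the moment integral, where one must keep track of the factors of $\nu$ and $2$ carefully to land on the constant $2(k!)(2\nu)^k$ rather than something looser. Since the lemma is invoked only through its qualitative content in the body of the paper, a short self-contained derivation of this form suffices.
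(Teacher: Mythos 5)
Your proof is correct. The paper does not prove this lemma at all --- it is stated as a citation to \cite{BLMConc} --- so there is no in-paper argument to compare against; your Chernoff bound for part (1) and the layer-cake-plus-Gamma-function computation for part (2) are the standard textbook derivation, and your bookkeeping is right: $4k\,(2\nu)^{k-1}\nu\,(k-1)! = 2\,(k!)\,(2\nu)^k$, and $2\cdot 2^k \le 4^k$ for $k\ge 1$ gives the second inequality.
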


\section{Other Useful Lemmas}
\label{app:other}

In this section, when the results are standard, they are presented without proof.

\begin{applem}\cite[Fact 7]{RushV18}
Let $u \in \mathbb{R}^N$ be a deterministic vector and let $\tilde{A} \in \mathbb{R}^{n \times N}$ be a matrix with independent  $\mc{N}(0, 1/n)$ entries. Moreover, 
let $\mc{W}$  be a $d$-dimensional subspace of $\mathbb{R}^n$ for $d \leq n$. Let $(w_1, ..., w_d)$ be an orthogonal basis of $\mc{W}$ with $\norm{w_\ell}^2 = n$ for $\ell \in [d]$, and let  $\mathsf{P}^{\parallel}_\mc{W}$ denote  the orthogonal projection operator onto $\mc{W}$.  Then for $D = [w_1\mid \ldots \mid w_d]$, we have 
$\mathsf{P}^{\parallel}_{\mc{W}} \tilde{A} u \overset{d}{=}   \frac{\norm{u}}{\sqrt{n}} \mathsf{P}^{\parallel}_{\mc{W}} Z_u\overset{d}{=}  \frac{\norm{u}}{\sqrt{n}} Dx$ where $x \in \mathbb{R}^d$ is a random vector with i.i.d.\ $\mc{N}(0, 1/n)$ entries. 
\label{fact:gauss_p0}
\end{applem}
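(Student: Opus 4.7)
The statement is a routine Gaussian computation; the plan is to verify each of the two distributional equalities in turn.

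For the first equality, I would observe that since the rows of $\tilde{A}$ are independent with i.i.d.\ $\mc{N}(0,1/n)$ entries, the vector $\tilde{A}u \in \mathbb{R}^n$ is Gaussian with independent entries, each of variance $\sum_{j=1}^N u_j^2 / n = \|u\|^2/n$. Hence $\tilde{A}u \overset{d}{=} \frac{\|u\|}{\sqrt{n}} Z_u$, where $Z_u \in \mathbb{R}^n$ has i.i.d.\ standard normal entries. Since $\mathsf{P}^{\parallel}_{\mc{W}}$ is a deterministic linear operator, applying it to both sides and using the fact that equality in distribution is preserved under deterministic linear maps yields
\[
\mathsf{P}^{\parallel}_{\mc{W}} \tilde{A} u \;\overset{d}{=}\; \frac{\|u\|}{\sqrt{n}}\,\mathsf{P}^{\parallel}_{\mc{W}} Z_u.
\]

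For the second equality, I would use the orthogonality assumption on the basis to obtain an explicit formula for the projection. Because $\{w_1,\ldots,w_d\}$ are orthogonal and each $\|w_\ell\|^2 = n$, we have $D^* D = n\,\mathsf{I}_d$, so
\[
\mathsf{P}^{\parallel}_{\mc{W}} \;=\; D(D^*D)^{-1}D^* \;=\; \tfrac{1}{n} D D^*.
\]
Defining $x := \tfrac{1}{n} D^* Z_u \in \mathbb{R}^d$, we then get $\mathsf{P}^{\parallel}_{\mc{W}} Z_u = Dx$ pointwise. It remains to identify the distribution of $x$. Since $Z_u \sim \mc{N}(0, \mathsf{I}_n)$ and $x$ is a linear map of $Z_u$, $x$ is centered Gaussian with covariance
\[
\mathbb{E}[x x^*] \;=\; \tfrac{1}{n^2} D^* \mathbb{E}[Z_u Z_u^*] D \;=\; \tfrac{1}{n^2} D^* D \;=\; \tfrac{1}{n}\,\mathsf{I}_d,
\]
using $D^*D = n \mathsf{I}_d$ once more. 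Thus $x$ has i.i.d.\ $\mc{N}(0,1/n)$ entries, completing the chain $\mathsf{P}^{\parallel}_{\mc{W}}\tilde{A} u \overset{d}{=} \frac{\|u\|}{\sqrt{n}} \mathsf{P}^{\parallel}_{\mc{W}} Z_u \overset{d}{=} \frac{\|u\|}{\sqrt{n}} D x$.

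There is no real obstacle here: the only points requiring a moment of care are (i) confirming that $\tilde{A}u$ genuinely has covariance $(\|u\|^2/n)\mathsf{I}_n$ rather than something scaled differently, and (ii) tracking the $1/n$ factors carefully when using the normalization $\|w_\ell\|^2 = n$ in the projection formula and in the covariance of $x$. Both are bookkeeping rather than substantive difficulty.
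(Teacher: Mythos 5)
Your proof is correct and is exactly the standard argument; the paper itself states this lemma without proof (citing it as Fact 7 of \cite{RushV18} and noting that standard results in that appendix are presented without proof). Both distributional equalities and the $1/n$ bookkeeping check out, so there is nothing to add.
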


\begin{applem}
(Stein's lemma.)
For zero-mean jointly Gaussian random variables $Z_1, Z_2$, and any function $f:\mathbb{R} \to \mathbb{R}$ for which $\expec[Z_1 f(Z_2)]$ and $\expec[f'(Z_2)]$  both exist, we have $\expec[Z_1 f(Z_2)] = \expec[Z_1Z_2] \expec[f'(Z_2)]$.
\label{fact:stein}
\end{applem}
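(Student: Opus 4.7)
The plan is to prove Stein's lemma by reducing the joint Gaussian statement to a one-dimensional Gaussian integration by parts, using the classical orthogonal decomposition of $Z_1$ in terms of $Z_2$.

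First, I would dispose of the degenerate case $\mathbb{E}[Z_2^2] = 0$: then $Z_2 = 0$ almost surely, so $\mathbb{E}[Z_1 f(Z_2)] = f(0)\,\mathbb{E}[Z_1] = 0$ and $\mathbb{E}[Z_1 Z_2]\,\mathbb{E}[f'(Z_2)] = 0$, and the identity holds trivially. Assume from now on that $\sigma_2^2 := \mathbb{E}[Z_2^2] > 0$, and set $\alpha := \mathbb{E}[Z_1 Z_2]/\sigma_2^2$ and $W := Z_1 - \alpha Z_2$. A direct computation gives $\mathbb{E}[W Z_2] = \mathbb{E}[Z_1 Z_2] - \alpha \sigma_2^2 = 0$, so $W$ and $Z_2$ are uncorrelated. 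Since $(W, Z_2)$ is jointly Gaussian (as a linear transformation of $(Z_1, Z_2)$) and zero-mean, uncorrelatedness upgrades to independence. This is the one place where joint Gaussianity of $(Z_1, Z_2)$ is essential.

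Next I would use this decomposition to split the target expectation:
\begin{equation*}
\mathbb{E}[Z_1 f(Z_2)] = \alpha\, \mathbb{E}[Z_2 f(Z_2)] + \mathbb{E}[W\, f(Z_2)] = \alpha\, \mathbb{E}[Z_2 f(Z_2)],
\end{equation*}
where the second equality uses independence of $W$ and $Z_2$ together with $\mathbb{E}[W] = 0$. It therefore suffices to show $\mathbb{E}[Z_2 f(Z_2)] = \sigma_2^2\, \mathbb{E}[f'(Z_2)]$. Let $\varphi(z) = (2\pi\sigma_2^2)^{-1/2}\exp(-z^2/(2\sigma_2^2))$ denote the density of $Z_2$, so that $\varphi'(z) = -z\,\varphi(z)/\sigma_2^2$, equivalently $z\,\varphi(z) = -\sigma_2^2\,\varphi'(z)$. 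Integration by parts then yields
\begin{equation*}
\mathbb{E}[Z_2 f(Z_2)] = \int_{\mathbb{R}} z\, f(z)\, \varphi(z)\, dz = -\sigma_2^2 \int_{\mathbb{R}} f(z)\, \varphi'(z)\, dz = \sigma_2^2 \int_{\mathbb{R}} f'(z)\, \varphi(z)\, dz = \sigma_2^2\, \mathbb{E}[f'(Z_2)],
\end{equation*}
with the boundary terms $[f(z)\varphi(z)]_{-\infty}^{\infty}$ vanishing under the existence hypotheses placed on $\mathbb{E}[Z_1 f(Z_2)]$ and $\mathbb{E}[f'(Z_2)]$. Combining the two displays gives $\mathbb{E}[Z_1 f(Z_2)] = \alpha\, \sigma_2^2\, \mathbb{E}[f'(Z_2)] = \mathbb{E}[Z_1 Z_2]\, \mathbb{E}[f'(Z_2)]$, which is the claim.

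The only genuinely delicate step is the integration by parts: strictly speaking, one wants $f$ to be absolutely continuous with $f'$ understood in the weak sense, and one needs to justify the vanishing of the boundary terms together with the interchange of integral and derivative. Both are standard consequences of the integrability assumptions $\mathbb{E}|Z_1 f(Z_2)| < \infty$ and $\mathbb{E}|f'(Z_2)| < \infty$ combined with the Gaussian tail of $\varphi$; in the AMP applications throughout this paper the functions to which Stein's lemma is applied (derivatives of Lipschitz denoisers) are bounded, so these regularity conditions are automatically satisfied and no further care is needed.
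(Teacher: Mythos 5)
Your proof is correct. Note that the paper itself offers no proof of this lemma: it appears in Appendix B under the blanket remark that standard results are presented without proof, so there is nothing to compare against. Your argument is the classical one --- orthogonally decompose $Z_1 = \alpha Z_2 + W$ with $W \perp Z_2$ (hence independent, by joint Gaussianity), reduce to the one-dimensional identity $\expec[Z_2 f(Z_2)] = \sigma_2^2 \expec[f'(Z_2)]$, and finish by Gaussian integration by parts using $z\varphi(z) = -\sigma_2^2 \varphi'(z)$. The only quibble is that splitting $\expec[Z_1 f(Z_2)]$ into $\alpha\,\expec[Z_2 f(Z_2)] + \expec[W f(Z_2)]$ tacitly requires $\expec[Z_2 f(Z_2)]$ and $\expec[f(Z_2)]$ to be finite, which does not follow formally from the two stated hypotheses alone; but as you correctly observe, in every application in this paper $f$ is a bounded derivative of a Lipschitz denoiser, so all of these integrability issues are vacuous.
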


\begin{applem}
(Products of Lipschitz Functions are PL(2).) 
Let $f,g:\mathbb{R}^p \to \mathbb{R}$ be Lipschitz continuous.  Then the product function $h:\mathbb{R}^p \to \mathbb{R}$ defined as $h(x) := f(x) g(x)$ is PL(2).
\label{lem:Lprods}
\end{applem}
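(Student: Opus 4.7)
The plan is to use the standard add-and-subtract trick for products, and then control the resulting factors $|f(x)|$ and $|g(y)|$ using the elementary fact that a Lipschitz function on $\mathbb{R}^p$ grows at most linearly in $\|x\|$. Let $L_f$ and $L_g$ be the Lipschitz constants of $f$ and $g$ respectively. First I would write
\begin{equation*}
|h(x)-h(y)|=|f(x)g(x)-f(y)g(y)|\leq |f(x)|\cdot|g(x)-g(y)|+|g(y)|\cdot|f(x)-f(y)|,
\end{equation*}
and then apply Lipschitz continuity of $g$ and $f$ to obtain
\begin{equation*}
|h(x)-h(y)|\leq \bigl(L_g|f(x)|+L_f|g(y)|\bigr)\,\|x-y\|.
\end{equation*}

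Next I would bound $|f(x)|$ via $|f(x)|\leq|f(0)|+L_f\|x\|\leq (|f(0)|+L_f)(1+\|x\|)$ and similarly $|g(y)|\leq (|g(0)|+L_g)(1+\|y\|)$. Since $1+\|x\|\leq 1+\|x\|+\|y\|$ and $1+\|y\|\leq 1+\|x\|+\|y\|$, substituting back yields
\begin{equation*}
|h(x)-h(y)|\leq L\,(1+\|x\|+\|y\|)\,\|x-y\|,
\end{equation*}
with $L:=L_g(|f(0)|+L_f)+L_f(|g(0)|+L_g)$, which is exactly the PL(2) condition.

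There is no real obstacle here; the only thing to be careful about is that $L_f,L_g$ and $|f(0)|,|g(0)|$ are all finite scalars, so $L$ is a legitimate finite constant, and the argument makes no use of differentiability or any regularity beyond Lipschitzness. The same scheme extends immediately to products of finitely many Lipschitz functions by induction, though this is not needed for the stated lemma.
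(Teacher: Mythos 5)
Your proof is correct and complete: the add-and-subtract decomposition, the Lipschitz bounds on the increments, and the linear-growth bound $|f(x)|\leq(|f(0)|+L_f)(1+\|x\|)$ together yield exactly the PL(2) inequality $|h(x)-h(y)|\leq L(1+\|x\|+\|y\|)\|x-y\|$ as defined in the paper. The paper states this lemma without proof (it is listed among the standard results in Appendix B), and your argument is precisely the canonical one that would be supplied.
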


\begin{applem} 
Let $\Lambda$ be defined in \eqref{eq:def_Lambda}. For each $r=1,\ldots,t$, let $\tau_r>0$ be a constant and let $Z_r\in\mathbb{R}^\Lambda$ have i.i.d.\ standard normal entries. Suppose  $f:\mathbb{R}^{|\Lambda|(t+1)}\to\mathbb{R}$ is PL(2) with PL constant $L$, then the function $\tilde{f}:\mathbb{R}^{\Lambda}\to\mathbb{R}$ defined as $\tilde{f}(s):=\mathbb{E}_{Z_1,\ldots, Z_t}[f( \tau_1 Z_1,\ldots, \tau_t Z_t,s)]$ is PL(2).  
\label{lem:expZ}
\end{applem}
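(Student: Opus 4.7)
The plan is to bound $|\tilde f(s)-\tilde f(s')|$ by passing the pseudo-Lipschitz bound for $f$ inside the expectation. Specifically, for any $s,s'\in\mathbb{R}^{\Lambda}$,
\begin{equation*}
|\tilde f(s)-\tilde f(s')| \le \mathbb{E}\bigl[|f(\tau_1 Z_1,\ldots,\tau_t Z_t,s)-f(\tau_1 Z_1,\ldots,\tau_t Z_t,s')|\bigr],
\end{equation*}
and then invoke the PL(2) property of $f$ with constant $L$. The two arguments differ only in the last block, so the $\|x-y\|$ factor collapses to $\|s-s'\|$, which comes out of the expectation cleanly.

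Next I would bound the ``$1+\|x\|+\|y\|$'' factor. Using $\|(\tau_1 Z_1,\ldots,\tau_t Z_t,s)\|\le \sum_{r=1}^t \tau_r\|Z_r\|+\|s\|$ (and the analogous bound with $s'$), the remaining work is to control $\mathbb{E}[\|Z_r\|]$. Since $Z_r$ has i.i.d.\ standard normal entries on $\Lambda$, Jensen gives $\mathbb{E}[\|Z_r\|]\le\sqrt{\mathbb{E}[\|Z_r\|^2]}=\sqrt{|\Lambda|}$. Setting $C:=1+2\sum_{r=1}^t\tau_r\sqrt{|\Lambda|}$, we therefore obtain
\begin{equation*}
|\tilde f(s)-\tilde f(s')|\le L\bigl(C+\|s\|+\|s'\|\bigr)\|s-s'\|.
\end{equation*}

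Finally, I would absorb the constant by noting $C+\|s\|+\|s'\|\le \max(1,C)\bigl(1+\|s\|+\|s'\|\bigr)$, which yields the PL(2) bound with constant $L\max(1,C)$, finishing the proof. The main (and only) subtle step is the interchange of expectation and PL bound, which is justified because $f$ is PL(2) and the Gaussian moments $\mathbb{E}[\|Z_r\|]$ are finite, ensuring all quantities involved are integrable; everything else is triangle inequality and Jensen. There is no real obstacle beyond bookkeeping of the constants that depend on $t$, $|\Lambda|$, and $\{\tau_r\}$, but not on $s,s'$.
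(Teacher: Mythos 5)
Your proposal is correct and follows essentially the same route as the paper's proof: pass the absolute value inside the expectation via Jensen, apply the PL(2) bound of $f$ (noting the first $t$ blocks coincide so the difference factor is $\|s-s'\|$), bound $\mathbb{E}\|Z_r\|$ by a finite constant, and absorb the resulting constant into the PL constant. The only cosmetic difference is that you bound $\mathbb{E}\|Z_r\|\le\sqrt{|\Lambda|}$ by Cauchy--Schwarz while the paper uses the cruder entrywise bound $\mathbb{E}\|Z_r\|\le|\Lambda|\sqrt{2/\pi}$; either suffices.
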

\begin{proof}
Take arbitrary $x,y\in\mathbb{R}^\Lambda$, 
\begin{align*}
&|\tilde{f}(x) - \tilde{f}(y)|\\
& = \Big \lvert\mathbb{E}\Big[f(\tau_1 Z_1,\ldots, \tau_t Z_t,x) - f(\tau_1 Z_1,\ldots, \tau_t Z_t,y)\Big] \Big \lvert \\
&\overset{(a)}{\leq}\mathbb{E}\Big[ \Big \lvert f(\tau_1 Z_1,\ldots, \tau_t Z_t,x) - f(\tau_1 Z_1,\ldots,\tau_t Z_t,y) \Big \lvert \Big]\\
&\overset{(b)}{\leq}\mathbb{E}[L(1+ 2\sum_{r=1}^t \tau_r\|Z_r\|  + \|x\|+ \|y\|)\|x-y\|] \\
&\overset{(c)}{\leq} L(1+ 2|\Lambda|\sqrt{\frac{2}{\pi}}\sum_{r=1}^t \tau_r + \|x\|+ \|y\|)\|x-y\|\\
& \leq L\Big(1+2|\Lambda|\sqrt{\frac{2}{\pi}}\sum_{r=1}^t \tau_r\Big)(1+ \|x\| + \|y\|)\|x-y\|.
\end{align*}
In the above, step $(a)$ follows from Jensen's inequality, step $(b)$ holds since $f$ is PL(2) and using the triangle inequality, and step $(c)$ follows from $\mathbb{E}\|Z_r\|\leq \sum_{i\in\Lambda} \mathbb{E} \abs{[Z_r]_i} = |\Lambda|\sqrt{\frac{2}{\pi}}$.
\end{proof}

\begin{applem}
Let $\Gamma$ and $\Lambda$ be  as defined in \eqref{eq:def_Gamma} and \eqref{eq:def_Lambda}, and let $\Lambda_i$ be $\Lambda$ translated to be centered at $i$ for each $i\in\Gamma$. Let $f:\mathbb{R}^{\Lambda}\to \mathbb{R}$ be a PL(2) function with constant $L$ and define $\tilde{f}_i:\mathbb{R}^{\Lambda_i\cap\Gamma}\to\mathbb{R}$ as $\tilde{f}_i(v):=f(\mc{T}_i(v))$, where $\mc{T}_i:\mathbb{R}^{\Lambda_i\cap\Gamma}\to\mathbb{R}^{\Lambda}$ is defined in \eqref{eq:def_T}. Then $\tilde{f}_i$ is PL(2) for all $i\in\Gamma$.
\label{lem:PLwithAvg}
\end{applem}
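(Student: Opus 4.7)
\medskip
\noindent\textbf{Proof proposal.} The essential observation is that $\mc{T}_i$ is a \emph{linear} map from $\mathbb{R}^{\Lambda_i\cap\Gamma}$ to $\mathbb{R}^{\Lambda}$: for each index $j\in\Lambda$, either $[\mc{T}_i(v)]_j=v_{[\Lambda_i]_j}$ (an existing entry) or $[\mc{T}_i(v)]_j=\frac{1}{|\Lambda_i\cap\Gamma|}\sum_{\ell\in\Lambda_i\cap\Gamma}v_\ell$ (a linear combination of the entries of $v$). So the plan is to control the operator norm of $\mc{T}_i$ and then push this bound through the PL(2) inequality for $f$. For $i\in\Gamma^{\midd}$, $\mc{T}_i$ is the identity and there is nothing to show, so I would only need to handle $i\in\Gamma^{\edge}$.

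First I would bound $\|\mc{T}_i(v)\|$. Writing $\bar v$ for the average of the entries of $v$ and noting that $i\in\Lambda_i\cap\Gamma$ (so $|\Lambda_i\cap\Gamma|\geq 1$), Jensen's inequality gives $\bar v^{\,2}\leq \frac{1}{|\Lambda_i\cap\Gamma|}\|v\|^2$, and therefore
\[
\|\mc{T}_i(v)\|^2=\sum_{\ell\in\Lambda_i\cap\Gamma}v_\ell^2+(|\Lambda|-|\Lambda_i\cap\Gamma|)\,\bar v^{\,2}\leq\frac{|\Lambda|}{|\Lambda_i\cap\Gamma|}\|v\|^2\leq |\Lambda|\,\|v\|^2.
\]
Hence $\mc{T}_i$ is $C$-Lipschitz with $C:=\sqrt{|\Lambda|}$, a constant depending only on $k$ and $p$.

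Finally I would combine this with the PL(2) property of $f$. For $v,w\in\mathbb{R}^{\Lambda_i\cap\Gamma}$, using linearity of $\mc{T}_i$,
\[
\begin{aligned}
|\tilde f_i(v)-\tilde f_i(w)|
&=|f(\mc{T}_i(v))-f(\mc{T}_i(w))|\\
&\leq L\bigl(1+\|\mc{T}_i(v)\|+\|\mc{T}_i(w)\|\bigr)\,\|\mc{T}_i(v-w)\|\\
&\leq L\bigl(1+C\|v\|+C\|w\|\bigr)\cdot C\|v-w\|\\
&\leq LC(1+C)\bigl(1+\|v\|+\|w\|\bigr)\|v-w\|,
\end{aligned}
\]
which shows $\tilde f_i$ is PL(2) with constant $LC(1+C)$, uniformly in $i$. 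There is no real obstacle here since linearity of the averaging operator does all the work; the only care needed is to verify that $|\Lambda_i\cap\Gamma|\geq 1$ so that $\bar v$ is defined and the operator norm bound is finite, which is immediate because $i\in\Gamma$ sits at the center of $\Lambda_i$.
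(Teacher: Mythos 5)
Your proof is correct and follows essentially the same route as the paper's: both arguments exploit the linearity of $\mc{T}_i$ to bound $\|\mc{T}_i(v)\|^2$ by $\tfrac{|\Lambda|}{|\Lambda_i\cap\Gamma|}\|v\|^2$ (the paper via its Lemma on squared sums, you via Jensen) and then compose this with the PL(2) inequality for $f$. The only cosmetic difference is that the paper carries out the estimate on squared quantities, yielding the constant $L\sqrt{3|\Lambda|/|\Lambda_i\cap\Gamma|}$, whereas you work with the unsquared form and obtain $LC(1+C)$ with $C=\sqrt{|\Lambda|}$; both are valid uniform PL(2) constants.
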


\begin{proof}
Let $i$ be an arbitrary but fixed index in $\Gamma$. Let $d=|\Lambda|$, $a_i:=|\Lambda_i\cap\Gamma|$, and $b_i=d-a_i$, so that $b_i$ counts the number of ``missing" entries in $\Lambda_i$. For any $x,y\in\mathbb{R}^{\Lambda_i\cap\Gamma}$, we have that 
\begin{align*}
&|\tilde{f}(x) - \tilde{f}(y)|^2= |f(\mc{T}_{i}(x)) - f(\mc{T}_{i}(y))|^2\\
& \overset{(a)}{\leq} 3 L^2 (1 + \|\mc{T}_{i}(x)\|^2 + \|\mc{T}_{i}(y)\|^2)\|\mc{T}_{i}(x) - \mc{T}_{i}(y)\|^2\\
&\overset{(b)}{=} 3 L^2  \Big[b_i\Big(\frac{1}{a_i}\sum_{j\in\Lambda_i\cap\Gamma} x_j-y_j\Big)^2 + \norm{x-y}^2\Big] \times\\
&\,\, \Big[1 + b_i \Big[\frac{1}{a_i}\sum_{j\in\Lambda_i\cap\Gamma} x_j \Big]^2 + \norm{x}^2 +  b_i \Big[\frac{1}{a_i}\sum_{j\in\Lambda_i\cap\Gamma} y_j\Big]^2 + \norm{y}^2\Big] \\
&\overset{(c)}{\leq} 3 L^2  \Big[\frac{b_i + a_i}{a_i}\norm{x-y}^2 \Big]  \Big[ 1+ \frac{b_i + a_i}{a_i}\norm{x}^2  + \frac{b_i + a_i}{a_i}\norm{y}^2 \Big]\\
&= \frac{3 d L^2 }{a_i}\Big(\frac{a_i}{d}+\|x\|^2 + \|y\|^2\Big)\|x-y\|^2\\
& \leq \frac{3dL^2}{a_i}(1+\|x\| + \|y\|)^2\|x-y\|^2,
\end{align*}
where step $(a)$ follows from the pseudo-Lipschitz property of $f$ and Lemma \ref{lem:squaredsums}, step $(b)$ from our definition of $\mc{T}_{i}$ in \eqref{eq:def_T}, and step $(c)$ from Lemma \ref{lem:squaredsums}. 
\end{proof}

\begin{applem}
For any scalars $a_1, ..., a_t$ and positive integer $m$, we have  $(\abs{a _1} + \ldots + \abs{a_t})^m \leq t^{m-1} \sum_{i=1}^t \abs{a_i}^m$.
Consequently, for any vectors $\un{u}_1, \ldots, \un{u}_t \in \mathbb{R}^N$, $\norm{\sum_{k=1}^t \un{u}_k}^2 \leq t \sum_{k=1}^t \norm{\un{u}_k}^2$.
\label{lem:squaredsums}
\end{applem}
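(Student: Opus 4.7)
The plan is to prove the scalar inequality by a direct convexity (Jensen) argument and then derive the vector corollary by combining the scalar inequality at $m=2$ with the triangle inequality for the Euclidean norm.

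For the first assertion, I would start by noting that the function $\phi:[0,\infty)\to[0,\infty)$ defined by $\phi(x)=x^m$ is convex for any positive integer $m$, since its second derivative $m(m-1)x^{m-2}$ is non-negative on $[0,\infty)$. Applying Jensen's inequality to $\phi$ with the uniform weights $1/t$ on the non-negative numbers $|a_1|,\dots,|a_t|$ yields
\[
\Bigl(\tfrac{1}{t}\sum_{i=1}^t |a_i|\Bigr)^{m} \;\le\; \tfrac{1}{t}\sum_{i=1}^t |a_i|^m.
\]
Multiplying both sides by $t^m$ gives $(\sum_{i=1}^t|a_i|)^m \le t^{m-1}\sum_{i=1}^t|a_i|^m$, which is exactly the claim. (An equivalent route, if one prefers to avoid invoking Jensen, is to use H\"older's inequality with conjugate exponents $m$ and $m/(m-1)$ applied to the vectors $(|a_1|,\dots,|a_t|)$ and $(1,\dots,1)$; this gives $\sum_i |a_i| \le t^{(m-1)/m}(\sum_i |a_i|^m)^{1/m}$, and raising to the $m$th power recovers the inequality.)

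For the vector corollary, I would first apply the triangle inequality in $\mathbb{R}^N$ to obtain $\|\sum_{k=1}^t u_k\| \le \sum_{k=1}^t \|u_k\|$. Squaring both sides preserves the inequality since both quantities are non-negative, so $\|\sum_{k=1}^t u_k\|^2 \le (\sum_{k=1}^t \|u_k\|)^2$. Now applying the first part of the lemma with $m=2$ to the non-negative scalars $a_k := \|u_k\|$ gives $(\sum_{k=1}^t \|u_k\|)^2 \le t\sum_{k=1}^t \|u_k\|^2$, and chaining the two inequalities completes the proof.

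Since both steps invoke only textbook inequalities, there is no substantive obstacle; the only care needed is to ensure that $\phi(x)=x^m$ is applied on the non-negative reals (hence the absolute values in the statement) so that convexity and the monotonicity used in squaring the triangle inequality are both valid. No induction on $t$ or $m$ is required.
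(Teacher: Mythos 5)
Your proof is correct. The paper states Lemma~\ref{lem:squaredsums} without proof (Appendix~B explicitly notes that standard results there are presented unproved), so there is no argument in the paper to compare against; your Jensen-based derivation of the scalar inequality, followed by the triangle inequality and the $m=2$ case for the vector corollary, is the standard and complete way to establish it.
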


\section{Concentration with Dependencies} 
\label{app:conc_dependent}

We first state a concentration result, existing in the literature, for functions acting on random fields that satisfy the Dobrushin uniqueness condition in Lemma \ref{lem:kulske}. Then we use Lemma \ref{lem:kulske} to obtain Lemma \ref{lem:PL_MRF_conc}, which is needed to prove $\mc{H}_t (b)$.

\begin{applem}\cite[Theorem 1]{kulske2003}
\label{lem:kulske}
Suppose that the random field $X=(X_i)_{i\in\Gamma}$ taking values in $E^{\Gamma}$ is distributed according to a Gibbs measure $\mu$ that obeys the Dobrushin uniqueness condition with Dobrishin constant $c$, and the transposed Dobrishin uniqueness condition with constant $c^*$. Suppose that $F$ is a real function on $E^{\Gamma}$ with $\mathbb{E}[\exp(tF(X))]<\infty$ for all real $t$. Then we have for all $r\geq 0$,
\begin{equation}
P\Big(F(X) - \mathbb{E}[F(X)] > r\Big) \leq \exp\Big(-\frac{r^2}{2}\frac{(1-c)(1-c^*)}{\|\underline{\delta}(F)\|_{\ell^2}^2}\Big).
\end{equation}
Here $\underline{\delta}(F):=(\delta_i(F))_{i\in\Gamma}$ is the variation vector of $F$, where $\delta_i(F):=\sup_{\xi,\xi';\xi_{i^c}=\xi'_{i^c}} \abs{F(\xi)-F(\xi')}$ denotes the variation of $F$ at the site $i$. Its $\ell^2$-norm is defined as $\|\underline{\delta}(F)\|_{\ell^2}^2:=\sum_{i\in\Gamma}(\delta_i(F))^2$. If this norm is infinite, then the statement is empty (and thus correct).
\end{applem}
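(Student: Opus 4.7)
The plan is to follow a martingale/exponential‐moment strategy that goes back to Külske's original argument, combining the Azuma–Hoeffding style estimate with the contraction property supplied by the Dobrushin matrix. I would start from the standard exponential Markov bound
\begin{equation*}
P\bigl(F(X)-\mathbb{E}[F(X)]>r\bigr)\le e^{-tr}\,\mathbb{E}\!\left[e^{t(F(X)-\mathbb{E}[F(X)])}\right],\qquad t>0,
\end{equation*}
and then work on the moment generating function. After fixing an arbitrary enumeration $i_1,\dots,i_n$ of $\Gamma$ with $n=|\Gamma|$ and setting $\mathcal{F}_k=\sigma(X_{i_1},\dots,X_{i_k})$, I would introduce the Doob martingale $M_k:=\mathbb{E}[F(X)\mid\mathcal{F}_k]$, so that $M_0=\mathbb{E}[F(X)]$ and $M_n=F(X)$, and decompose $F(X)-\mathbb{E}[F(X)]=\sum_{k=1}^{n}D_k$ with $D_k:=M_k-M_{k-1}$. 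The objective is then to produce deterministic bounds $b_k$ such that $|D_k|\le b_k$ (or at least a conditional sub-Gaussian bound with factor $b_k$), so that Azuma–Hoeffding yields
\begin{equation*}
P\bigl(F(X)-\mathbb{E}[F(X)]>r\bigr)\le\exp\!\left(-\tfrac{r^2}{2\sum_{k=1}^{n}b_k^2}\right),
\end{equation*}
and then to show $\sum_{k}b_k^2\le\|\underline{\delta}(F)\|_{\ell^2}^2/((1-c)(1-c^*))$.

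The heart of the proof lies in bounding the martingale increments using the Dobrushin interdependence matrix $C=(C_{i,j})_{i,j\in\Gamma}$. Writing $D_k$ as the difference between two conditional expectations that differ only in whether $X_{i_k}$ is integrated out, the elementary contribution to $|D_k|$ coming from the direct dependence of $F$ on coordinate $i_k$ is at most $\delta_{i_k}(F)$. However, because the remaining unrevealed coordinates $X_{i_{k+1}},\dots,X_{i_n}$ have conditional laws that themselves depend on $X_{i_k}$, one must also account for an indirect contribution. I would invoke the Dobrushin comparison inequality, which shows that the total variation distance between the two conditional joint distributions of the remaining spins is controlled by a weighted sum $\sum_{j}[(I-C)^{-1}]_{i_k,j}\,\delta_j(F)$; the invertibility of $I-C$ in the relevant matrix norms is guaranteed precisely by $c<1$ (for row sums) and by $c^*<1$ (for column sums, needed when one takes transposes in the comparison step). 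This gives a bound of the schematic form $|D_k|\le\sum_{j\succeq i_k}\Gamma_{i_k,j}\delta_j(F)$, where $\Gamma=(I-C)^{-1}$ and the sum runs over not-yet-revealed sites.

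Squaring, summing over $k$, and using the classical operator-norm estimate $\|(I-C)^{-1}\|_{\ell^2\to\ell^2}^2\le\bigl((1-c)(1-c^*)\bigr)^{-1}$ (which follows from the Riesz–Thorin-style interpolation between the $\ell^\infty$ bound $\|(I-C)^{-1}\|_{\infty}\le(1-c)^{-1}$ and the $\ell^1$ bound $\|(I-C)^{-1}\|_{1}\le(1-c^*)^{-1}$) yields
\begin{equation*}
\sum_{k=1}^{n}b_k^2\le\frac{\|\underline{\delta}(F)\|_{\ell^2}^2}{(1-c)(1-c^*)},
\end{equation*}
which is exactly the denominator appearing in the claimed Gaussian tail. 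Plugging this into the Azuma–Hoeffding bound, optimizing in $t$ implicitly through the Hoeffding lemma for bounded martingale differences, and combining with the initial exponential Markov inequality then completes the argument.

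The main obstacle, as in Külske's original treatment, is establishing the Dobrushin comparison step cleanly: one must show that perturbing a single conditioned coordinate propagates through the Gibbs specification in a way exactly described by the geometric series $\sum_{n\ge 0}C^n=(I-C)^{-1}$, and then that the relevant matrix norms combine to give the product $(1-c)(1-c^*)$ rather than, say, only $(1-c)^2$. This requires careful bookkeeping of which conditional distributions are compared at each stage of the martingale and the use of both the Dobrushin and the transposed Dobrushin conditions — the former to control how changes in $X_{i_k}$ propagate forward to later coordinates, and the latter to symmetrize when summing squared contributions indexed by $k$. All other ingredients (exponential Markov, Doob martingale, Hoeffding's lemma, finiteness of the MGF by hypothesis) are standard and should go through without difficulty.
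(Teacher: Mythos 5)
This lemma is not proved in the paper at all: it is imported verbatim from the cited reference (K\"ulske, Theorem~1) and used as a black box in the proof of Lemma~\ref{lem:PL_MRF_conc}, so there is no in-paper argument to compare yours against. That said, your outline is essentially a faithful reconstruction of the proof in the cited source: the Doob martingale along an enumeration of $\Gamma$, the bound on the increments $|D_k|$ via the Dobrushin comparison theorem in the form $|D_k|\le\sum_j[(\mathsf{I}-C)^{-1}]_{i_k,j}\,\delta_j(F)$, the Schur-test/interpolation estimate $\|(\mathsf{I}-C)^{-1}\|_{\ell^2\to\ell^2}\le\big((1-c)(1-c^*)\big)^{-1/2}$ obtained from the row-sum bound $(1-c)^{-1}$ and the column-sum bound $(1-c^*)^{-1}$, and finally Azuma--Hoeffding, which together give exactly the claimed constant $\tfrac{(1-c)(1-c^*)}{2\|\underline{\delta}(F)\|_{\ell^2}^2}$. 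The only step you flag as schematic is also the only genuinely delicate one: the comparison-theorem bookkeeping must compare the two conditional laws of the \emph{not-yet-revealed} coordinates given two values of $X_{i_k}$, and one must check that summing the squared increments over $k$ is dominated by $\|(\mathsf{I}-C)^{-1}\delta\|_{\ell^2}^2$ even though each $b_k$ only involves unrevealed sites (this is where dropping the restriction to $j\succeq i_k$ only helps, since all entries of $(\mathsf{I}-C)^{-1}$ are nonnegative). With that caveat handled, your sketch is a correct account of the external proof; for the purposes of this paper you could equally well leave the lemma as a citation, as the authors do.
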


\begin{applem}
\label{lem:PL_MRF_conc}
Let $\Gamma$ and $\Lambda$ be defined in \eqref{eq:def_Gamma} and \eqref{eq:def_Lambda}, respectively,
and let $X=(X_i)_{i\in\Gamma}$ be a stationary Markov random field with a unique Gibbs distribution measure $\mu$ on $E^\Gamma\subset\mathbb{R}^{\Gamma}$. Assume that $\mu$ satisfies the Dobrushin uniqueness condition and the transposed Dobrushin uniqueness condition with constants $c$ and $c^*$, respectively. 
Suppose that the state space $E$ is bounded, meaning that there exists an $M$ such that $|x|\leq M$, for all $x \in E$.  
Let $f_i:\mathbb{R}^{\Lambda_i\cap\Gamma}\rightarrow\mathbb{R}$, where $\Lambda_i$ is $\Lambda$ being translated to be centered at location $i\in\Gamma$, be a PL(2) function with pseudo-Lipschitz constant $L_i$, for all $i\in\Gamma$. Then for all $\e\in(0,1)$ there exist $K,\kappa>0$, such that
\begin{equation}
P\Big(\Big \lvert \frac{1}{|\Gamma|}\sum_{i\in\Gamma} \Big( f_i(X_{\Lambda_i\cap\Gamma}) - \mathbb{E}[f_i(X_{\Lambda_i\cap\Gamma})]\Big) \Big \lvert \geq \e\Big) \leq K e^{-\kappa |\Gamma| \e^2}.
\end{equation}
\end{applem}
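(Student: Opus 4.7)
The plan is to apply the Dobrushin--Gibbs concentration inequality of Lemma \ref{lem:kulske} to the function $F:E^\Gamma\to\mathbb{R}$ defined by
\[
F(x) := \frac{1}{|\Gamma|}\sum_{i\in\Gamma} f_i(x_{\Lambda_i\cap\Gamma}).
\]
Since $E$ is bounded by $M$ and each $f_i$ is PL(2), $F$ is a bounded measurable function on $E^\Gamma$, so $\mathbb{E}[e^{tF(X)}]<\infty$ for every real $t$ and Lemma \ref{lem:kulske} applies. Everything then reduces to controlling the variation vector $\underline{\delta}(F)=(\delta_j(F))_{j\in\Gamma}$.

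The first step is to bound $\delta_j(F)$ for each site $j\in\Gamma$. Only those summands $f_i$ with $j\in\Lambda_i\cap\Gamma$ can change when $x_j$ is perturbed. The set $N_j:=\{i\in\Gamma:j\in\Lambda_i\}$ has cardinality at most $|\Lambda|=(2k+1)^p$, since $i\in N_j$ forces $i$ to lie in the translated cube centered at $j$. For any two configurations $\xi,\xi'$ agreeing off $\{j\}$, the PL(2) property of $f_i$ together with $\|\xi_{\Lambda_i\cap\Gamma}\|\le\sqrt{|\Lambda|}\,M$ (and similarly for $\xi'$) and $\|\xi_{\Lambda_i\cap\Gamma}-\xi'_{\Lambda_i\cap\Gamma}\|=|\xi_j-\xi'_j|\le 2M$ gives
\[
|f_i(\xi_{\Lambda_i\cap\Gamma})-f_i(\xi'_{\Lambda_i\cap\Gamma})|\le L_i\bigl(1+2\sqrt{|\Lambda|}\,M\bigr)(2M)\le C,
\]
where $C:=2M(1+2\sqrt{|\Lambda|}\,M)\max_{i}L_i$ is a constant depending only on $k,p,M$ and the $L_i$'s. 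Summing the $|N_j|\le|\Lambda|$ contributing terms and dividing by $|\Gamma|$ yields $\delta_j(F)\le C|\Lambda|/|\Gamma|$.

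Squaring and summing over $j\in\Gamma$ produces
\[
\|\underline{\delta}(F)\|_{\ell^2}^2\le|\Gamma|\cdot\frac{C^2|\Lambda|^2}{|\Gamma|^2}=\frac{C^2|\Lambda|^2}{|\Gamma|}.
\]
Inserting this into Lemma \ref{lem:kulske} and using $\mathbb{E}[F(X)]=\frac{1}{|\Gamma|}\sum_{i\in\Gamma}\mathbb{E}[f_i(X_{\Lambda_i\cap\Gamma})]$, together with the same argument applied to $-F$ (which has identical variation vector), gives
\[
P\!\left(\bigl|F(X)-\mathbb{E}[F(X)]\bigr|\ge\epsilon\right)\le 2\exp\!\left(-\frac{(1-c)(1-c^*)}{2\,C^2|\Lambda|^2}\,|\Gamma|\,\epsilon^2\right),
\]
which is the claimed bound with $K=2$ and $\kappa=(1-c)(1-c^*)/(2C^2|\Lambda|^2)>0$.

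The only mildly delicate step is the variation bound: one must notice that the pseudo-Lipschitz constant $1+\|x\|+\|y\|$, although a priori unbounded, is controlled uniformly by $1+2\sqrt{|\Lambda|}M$ thanks to the bounded state space assumption, and that the translated-window geometry limits how many $f_i$'s a single coordinate can affect. Both facts are straightforward here, so no substantial obstacle is expected; the argument is essentially a direct application of K\"ulske's inequality once these counting and boundedness observations are in place.
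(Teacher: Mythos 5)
Your proposal is correct and follows essentially the same route as the paper: both apply K\"ulske's Dobrushin-uniqueness concentration inequality (Lemma \ref{lem:kulske}) after bounding the variation vector $\underline{\delta}(F)$ via the bounded state space and the fact that each coordinate $j$ affects at most $|\Lambda|$ of the windowed summands. The only cosmetic differences are that you normalize $F$ by $1/|\Gamma|$ up front and use the slightly sharper single-site difference bound $|\xi_j-\xi_j'|\le 2M$ in place of the paper's $2\sqrt{|\Lambda|}M$, neither of which changes the argument.
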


\begin{proof}
Let the function $F$ in Lemma \ref{lem:kulske} be defined as $F(X):=\sum_{i\in\Gamma}f_i(X_{\Lambda_i\cap\Gamma})$. 
In order to apply Lemma \ref{lem:kulske}, we need to calculate $\|\underline{\delta}(F)\|_{\ell^2}^2$. Let $d:=|\Lambda|$ and $L:=\max_{i\in\Gamma} L_i$, then we have that 
\begin{align*}
&\delta_i(F) = \sup_{\substack{\xi,\xi'\in E^\Gamma\\ \xi_{i^c}=\xi_{i^c}'}} \abs{F(\xi) - F(\xi')}\\
& = \sup_{\substack{\xi,\xi'\in E^\Gamma\\ \xi_{i^c}=\xi_{i^c}'}} \Big \lvert\sum_{j:i\in\Lambda_j\cap\Gamma} f_j(\xi_{\Lambda_j\cap\Gamma}) - f_j(\xi'_{\Lambda_j\cap\Gamma}) \Big \lvert\\
&\overset{(a)}{\leq}\sup_{\substack{\xi,\xi'\in E^\Gamma\\ \xi_{i^c}=\xi_{i^c}'}} \sum_{j:i\in\Lambda_j\cap\Gamma}\Big[L_j(1+\|\xi_{\Lambda_j\cap\Gamma}\|+\|\xi'_{\Lambda_j\cap\Gamma}\|) \times \\
&\hspace{5.3cm}\|\xi_{\Lambda_j\cap\Gamma} - \xi'_{\Lambda_j\cap\Gamma}\|\Big]\\
&\overset{(b)}{\leq}dL(1+2\sqrt{d}M)2\sqrt{d}M.
\end{align*}
In the above, step $(a)$ uses the triangle inequality and the pseudo-Lipschitz property of $f$. Step $(b)$ follows from the fact that $|x|\leq M,$ for all $ x\in E$ and that $L_j\leq L$ for all $j\in\Gamma$. 
Therefore,
\begin{align*}
\|\underline{\delta}(F)\|_{\ell^2}^2 \leq |\Gamma|(dL(1+2\sqrt{d}M)2\sqrt{d}M)^2.
\end{align*}
Now applying Lemma \ref{lem:kulske}, we have 
\begin{align*}
&P\Big( \Big \lvert \frac{1}{|\Gamma|}\sum_{i\in\Gamma} \Big( f_i(X_{\Lambda_i\cap\Gamma}) - \mathbb{E}\Big[f_i(X_{\Lambda_i\cap\Gamma}) \Big]\Big) \Big \lvert \geq \e\Big)\\
& \leq 2 \exp\Big(-\frac{|\Gamma|\e^2(1-c)(1-c^*)}{(dL(1+2\sqrt{d}M)2\sqrt{d}M)^2}\Big).
\end{align*}
\end{proof}

Lemma \ref{lem:exp_PL_subgauss_vector_conc} provides a technical result about pseudo-Lipschitz functions with sub-Gaussian inputs, which will be used to prove Lemma \ref{lem:PL_overlap_gauss_conc_ext}. 

\begin{applem}\cite[Lemma D.2]{MaRushBaron17}
\label{lem:exp_PL_subgauss_vector_conc}
Let $X \in\mathbb{R}^d$ be a random vector whose entries have a sub-Gaussian marginal distribution with variance factor $\nu$ as in Lemma \ref{lem:subgauss}.  Let $\tilde{X}$ be an independent copy of $X$.  If $f:\mathbb{R}^d\rightarrow \mathbb{R}$ is a PL(2) function with pseudo-Lipschitz constant $L$, then the expectation $\mathbb{E}[\exp(rf(X))]$ satisfies the following for $0<r<[5L(2d\nu+24d^2\nu^2)^{1/2}]^{-1}$,
\begin{align}
&\mathbb{E}[e^{rf(X)}] \leq \mathbb{E}[e^{r(f(X)-f(\tilde{X})}]\nonumber\\
& \leq [1-25 r^2 L^2(d\nu+12d^2\nu^2)]^{-1}\leq e^{50r^2 L^2(d \nu+12d^2\nu^2)}.
\label{eq:threebounds}
\end{align}
\end{applem}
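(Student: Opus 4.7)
The plan is to verify the three inequalities in the lemma in turn. For the leftmost inequality $\mathbb{E}[e^{rf(X)}]\leq \mathbb{E}[e^{r(f(X)-f(\tilde X))}]$, I would first argue that we may reduce to the centered case where $\mathbb{E}[f(X)]=0$ (since replacing $f$ by $f-\mathbb{E}f$ does not change $f(X)-f(\tilde X)$ and only affects the left-hand side by a multiplicative factor $e^{r\mathbb{E}f(X)}$ that can be reabsorbed in applications). Under centering, Jensen's inequality applied to the convex function $u\mapsto e^{-ru}$ gives $\mathbb{E}_{\tilde X}[e^{-rf(\tilde X)}]\geq e^{-r\mathbb{E}[f(\tilde X)]}=1$, and then independence of $X$ and $\tilde X$ yields $\mathbb{E}[e^{r(f(X)-f(\tilde X))}]=\mathbb{E}[e^{rf(X)}]\cdot \mathbb{E}[e^{-rf(\tilde X)}]\geq \mathbb{E}[e^{rf(X)}]$.

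For the middle (main) inequality, the key observation is that $f(X)-f(\tilde X)$ is a symmetric random variable because $X$ and $\tilde X$ are i.i.d. Hence all odd moments vanish and I can expand
\begin{equation*}
\mathbb{E}[e^{r(f(X)-f(\tilde X))}]=\sum_{k=0}^{\infty}\frac{r^{2k}}{(2k)!}\mathbb{E}[(f(X)-f(\tilde X))^{2k}].
\end{equation*}
Now apply the PL(2) bound $|f(X)-f(\tilde X)|\leq L(1+\|X\|+\|\tilde X\|)\|X-\tilde X\|$ together with Cauchy--Schwarz to obtain
\begin{equation*}
\mathbb{E}[(f(X)-f(\tilde X))^{2k}]\leq L^{2k}\bigl(\mathbb{E}[(1+\|X\|+\|\tilde X\|)^{4k}]\bigr)^{1/2}\bigl(\mathbb{E}[\|X-\tilde X\|^{4k}]\bigr)^{1/2}.
\end{equation*}
Then bound the two moment factors using the per-coordinate sub-Gaussian moment control from Lemma \ref{lem:subgauss}, namely $\mathbb{E}[X_i^{2k}]\leq k!(4\nu)^k$, combined with Lemma \ref{lem:squaredsums} to expand powers of sums of squares. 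The vector $X-\tilde X$ has coordinates that are sub-Gaussian with variance factor $2\nu$, which is where the extra factor 2 in the assumption $0<r<[5L(2d\nu+24d^2\nu^2)^{1/2}]^{-1}$ originates. After the bookkeeping this yields $\mathbb{E}[(f(X)-f(\tilde X))^{2k}]\leq (2k)!\cdot\bigl(25L^2(d\nu+12d^2\nu^2)\bigr)^k$, and summing the resulting geometric series produces the claimed $[1-25r^2L^2(d\nu+12d^2\nu^2)]^{-1}$ provided the common ratio is strictly less than 1 (which is exactly what the upper bound on $r$ enforces).

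The rightmost inequality $[1-x]^{-1}\leq e^{2x}$ holds for $0\leq x\leq 1/2$, and the stated range of $r$ guarantees $x:=25r^2L^2(d\nu+12d^2\nu^2)<1/2$, so this step is immediate. The main obstacle is the moment-counting in the middle step: getting the mixed polynomial-times-squared-norm moment $\mathbb{E}[(1+\|X\|+\|\tilde X\|)^{4k}\|X-\tilde X\|^{4k}]$ (or its Cauchy--Schwarz split) down to the precise form $C\cdot(2k)!\cdot(d\nu+12d^2\nu^2)^k$ with the constant $25$ absorbed correctly, which requires carefully tracking how the sub-Gaussian moment bound $k!(4\nu)^k$ interacts with the $(2d)^k$-type combinatorial factors arising when expanding $(\sum_{i=1}^d X_i^2)^k$ and the constant ``$1$'' contribution in the PL(2) envelope.
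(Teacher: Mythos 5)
First, a point of reference: the paper does not prove this lemma at all---it is imported verbatim from \cite[Lemma D.2]{MaRushBaron17}---so there is no in-paper proof to compare against, and your proposal has to stand on its own. Your overall strategy (centering, symmetrization so that $f(X)-f(\tilde X)$ has vanishing odd moments, termwise bounds on the even moments via the PL(2) envelope and the sub-Gaussian moment bound of Lemma \ref{lem:subgauss}, then summing a geometric series) is the standard route and is surely the intended one. Your treatment of the two outer inequalities is correct, and you rightly spot that the first inequality genuinely requires $\mathbb{E}[f(X)]\leq 0$ (it fails for $f\equiv 1$); the Jensen-plus-independence argument is the right one once $f$ is centered, which is how the lemma is used in Lemma \ref{lem:PL_overlap_gauss_conc_ext}. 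One misdiagnosis: the factor $2$ in $2d\nu+24d^2\nu^2=2(d\nu+12d^2\nu^2)$ is there to force $x:=25r^2L^2(d\nu+12d^2\nu^2)<1/2$ so that $[1-x]^{-1}\leq e^{2x}$ applies in the last step (the geometric series itself only needs $x<1$); it is not an artifact of $X-\tilde X$ having variance factor $2\nu$.

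The genuine gap is the quantitative core you defer: the claim $\mathbb{E}[(f(X)-f(\tilde X))^{2k}]\leq (2k)!\,(25L^2(d\nu+12d^2\nu^2))^k$ does not fall out of the Cauchy--Schwarz split as routinely as you suggest. Running your split with the natural ingredients---power-mean inequalities for $(1+\|X\|+\|\tilde X\|)^{4k}$, the bounds $\mathbb{E}\|X\|^{4k}\leq d^{2k}(2k)!(4\nu)^{2k}$ and $\mathbb{E}\|X-\tilde X\|^{4k}\leq d^{2k}(2k)!(8\nu)^{2k}$, and the identities $\sqrt{(2k)!}\leq 2^k k!$ and $(k!)^2\leq (2k)!$ needed to land on a clean $(2k)!\,c^k$ form---does yield a bound of the shape $(2k)!\,(CL^2(d\nu+C'd^2\nu^2))^k$, but with constants visibly larger than $25$ and $12$, and with stray multiplicative prefactors (from the ``$1$'' in the PL(2) envelope and from the factor $2$ in $\mathbb{E}[X_i^{2k}]\leq 2(k!)(2\nu)^k$) that can only be absorbed into the base of the $k$-th power for $k\geq 1$, changing the constant again. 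So either you must reproduce the specific splitting of \cite{MaRushBaron17} to hit exactly $25(d\nu+12d^2\nu^2)$, or you should carry your own constants through and adjust the admissible range of $r$ accordingly; the downstream application only needs some bound of this form, so the latter is legitimate, but as written your argument does not establish the lemma with the constants it asserts.
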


Lemma \ref{lem:PL_overlap_gauss_conc_ext} provides a concentration inequality for sums of pseudo-Lipschitz functions acting on overlapping subsets of jointly Gaussian random variables.

\begin{applem}
\label{lem:PL_overlap_gauss_conc_ext}
Let $\Gamma$ and $\Lambda$ be defined as in \eqref{eq:def_Gamma} and \eqref{eq:def_Lambda}.
For each $r=1,\ldots,t$, let $(Z^r_i)_{i\in\Gamma}$ have i.i.d.\ $\mc{N}(0,1)$ entries, and for all $r,s=1,\ldots,t$ and $i\neq j$, $Z^r_i$ is independent of $Z^s_j$. Moreover, for each $i\in\Gamma$, let $(Z_i^1,\ldots, Z_i^t)$ be jointly Gaussian with covariance matrix $K\in\mathbb{R}^{t\times t}$.

For each $i\in\Gamma$, define $Y_i:=(Z^1_{\Lambda_i\cap\Gamma},\ldots,Z^t_{\Lambda_i\cap\Gamma})$, where $\Lambda_i$ is $\Lambda$ translated to be centered at location $i\in\Gamma$. Let $f_i:\mathbb{R}^{|\Lambda_i\cap\Gamma|t}\rightarrow \mathbb{R}$ be a PL(2) function for all $i\in\Gamma$. Then for all $\e\in(0,1)$, there exist $K,\kappa>0$ such that
\begin{equation}
P\Big(\Big \lvert \frac{1}{|\Gamma|}\sum_{i\in\Gamma} \Big(f_i(Y_i) - \mathbb{E}[f_i(Y_i)]\Big) \Big \lvert \geq \e \Big)\leq K e^{-\kappa |\Gamma|\e^2}.
\end{equation}
\end{applem}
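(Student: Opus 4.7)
The main obstacle is that the windows $\Lambda_i \cap \Gamma$ for different $i \in \Gamma$ overlap, so the random vectors $\{Y_i\}_{i\in\Gamma}$ are not mutually independent even though the underlying Gaussian field has independence across distinct sites. The plan is to overcome this with a graph-coloring argument: partition $\Gamma$ into $M = (2k+1)^p$ disjoint subsets $\Gamma_1, \ldots, \Gamma_M$ such that whenever $i, j \in \Gamma_m$ with $i \neq j$, one has $\Lambda_i \cap \Lambda_j = \emptyset$. Such a partition exists because two half-window-$k$ boxes centered at $i$ and $j$ overlap only when $\|i-j\|_\infty \leq 2k$, so we may color by the residue classes modulo $(2k+1)$ in each of the $p$ coordinate directions. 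Within each color class $\Gamma_m$, the vectors $\{Y_i\}_{i\in\Gamma_m}$ are then mutually independent, since they are measurable with respect to disjoint subsets of the underlying field.

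Apply a union bound over color classes:
\begin{equation*}
P\Big(\Big\lvert\frac{1}{|\Gamma|}\sum_{i\in\Gamma} (f_i(Y_i) - \mathbb{E}[f_i(Y_i)])\Big\rvert \geq \e\Big) \leq \sum_{m=1}^{M} P\Big(\Big\lvert\sum_{i\in\Gamma_m}(f_i(Y_i) - \mathbb{E}[f_i(Y_i)])\Big\rvert \geq \frac{|\Gamma|\e}{M}\Big),
\end{equation*}
and handle each color class by a Chernoff bound. By independence within $\Gamma_m$, the MGF of the centered sum factorizes into a product of individual MGFs, and each individual factor is bounded using Lemma \ref{lem:exp_PL_subgauss_vector_conc}: since $Y_i$ is jointly Gaussian with entries that are marginally $\mc{N}(0,1)$, each $Y_i$ is sub-Gaussian, so $\mathbb{E}[e^{r(f_i(Y_i) - \mathbb{E}[f_i(Y_i)])}] \leq e^{C r^2}$ for $r$ in a suitable range, where $C$ depends only on $t$, $|\Lambda|$, and the uniform PL(2) constant $L:=\sup_i L_i$. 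Taking the product over $i \in \Gamma_m$ and optimizing $r$ in the Chernoff bound yields an $\exp(-\kappa |\Gamma_m| \e^2)$ bound, and since $|\Gamma_m| \asymp |\Gamma|/M$ with $M$ depending only on $k$ and $p$, summing the $M$ contributions produces the claimed $K e^{-\kappa |\Gamma| \e^2}$ rate.

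The step I expect to require the most care is the application of Lemma \ref{lem:exp_PL_subgauss_vector_conc} to $f_i(Y_i)$ when the entries of $Y_i$ are correlated across the index $r \in [t]$ (via the covariance $K$) while being independent across the spatial index $j \in \Lambda_i \cap \Gamma$. The symmetrization argument underlying that lemma, $|f(Y_i) - f(\tilde{Y}_i)| \leq L(1 + \|Y_i\| + \|\tilde{Y}_i\|)\|Y_i - \tilde{Y}_i\|$ with $\tilde{Y}_i$ an independent copy, continues to work because $Y_i - \tilde{Y}_i$ is still Gaussian and $\|Y_i\|^2$ is a non-negative quadratic form in a Gaussian vector, hence sub-exponential with constants controlled by $\|K\|_{\text{op}}$; so the MGF estimate in \eqref{eq:threebounds} carries over essentially unchanged once the constants $\nu$ are replaced by bounds depending on $\|K\|_{\text{op}}$. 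A uniform bound $L = \sup_i L_i < \infty$ on the PL(2) constants is required; this is satisfied in the applications of this lemma in $\mc{H}_t(b)$, where the $f_i$ arise by composing a fixed PL(2) function with the averaging extension operators $\mc{T}_i$ whose Lipschitz constants are uniformly controlled via Lemma \ref{lem:PLwithAvg}.
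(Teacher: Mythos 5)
Your proposal is correct and takes essentially the same approach as the paper: both partition $\Gamma$ into $(2k+1)^p$ residue classes so that the windows within each class are disjoint and the $Y_i$ there are mutually independent, and both bound the per-term moment generating functions via Lemma \ref{lem:exp_PL_subgauss_vector_conc} (applied with $\nu=1$; only marginal sub-Gaussianity of the entries is needed, so no dependence on $\|K\|$ actually arises). The only difference is that you recombine the classes by a union bound followed by a per-class Chernoff bound, whereas the paper applies a single global Chernoff bound and handles the sum over classes with Jensen's inequality on a weighted convex combination --- a cosmetic distinction that yields the same $Ke^{-\kappa|\Gamma|\e^2}$ rate since the number of classes depends only on $k$ and $p$.
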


\begin{proof} 
In the following, we prove the case for $p=2$ and the proof for other dimensions follows similarly. Without loss of generality, let $i=(i_1,i_2)$ and $\Gamma:=\{(i_1,i_2)\}_{1\leq i_1,i_2\leq n}$, hence $|\Gamma|=n^2$. Further, assume without loss of generality that $\mathbb{E}[f_i(Y_i)]=0,$ for all $i \in \Gamma$. In what follows, we demonstrate
the upper-tail bound:
\begin{equation}
P\Big(\frac{1}{|\Gamma|}\sum_{i\in\Gamma}f_i(Y_i)\geq \e \Big)\leq Ke^{-\kappa |\Gamma|\e^2},
\label{eq:upper.tail}
\end{equation}
and the lower-tail bound follows similarly.  Together they provide the desired result.

Using the Cram\'{e}r-Chernoff method, for all $r>0$,
\begin{align}
P\Big(\frac{1}{|\Gamma|}\sum_{i\in\Gamma}f_i(Y_i)\geq \e \Big) &=P\Big(e^{r\sum_{i\in\Gamma}f_i(Y_i)}\geq e^{r|\Gamma| \e}\Big)\nonumber \\
&\leq e^{-r |\Gamma| \e}\mathbb{E}[e^{r\sum_{i\in \Gamma} f_i(Y_i)}].
\label{eq:CC1}
\end{align}
Let $d^2=|\Lambda|$ and $L_i$ be the pseudo-Lipschitz parameters associated with functions $f_i$  for $i\in \Gamma$ and define $L := \max_{i\in\Gamma} L_i$.  In the following, we will show that for $0<r<(10Ld^2\sqrt{2td^2+24t^2d^4})^{-1}$,
\begin{equation}
\mathbb{E}[e^{r\sum_{i\in\Gamma}f_i(Y_i)}] \leq \exp(\kappa' |\Gamma| r^2), 
\label{eq:lem1_0}
\end{equation}
where $\kappa'$ is any constant that satisfies $\kappa'\geq 450L^2d^2(d^2+12d^4)$. Then plugging \eqref{eq:lem1_0} into \eqref{eq:CC1}, we can obtain the desired result in \eqref{eq:upper.tail}: 
\begin{equation*}
P\Big(\frac{1}{|\Gamma|}\sum_{i\in\Gamma}f_i(Y_i)\geq \e \Big) \leq \exp(-|\Gamma|(r\e - \kappa'  r^2)).
\end{equation*}
Set $r = \e/(2\kappa')$, which is the choice that maximizes the term in the exponent in the above, i.e.\ it maximizes $(r\e - \kappa'  r^2)$  over $r$.  We can ensure that $\forall \e \in (0,1)$, $r$ falls within the region required in \eqref{eq:lem1_0} by choosing $\kappa'$ large enough.

We now show \eqref{eq:lem1_0}.  Define index sets 
\begin{align*}
I_{j_1,j_2}:=\Big\{(j_1+k_1 d,j_2+k_2 d) \, \Big\vert \, &k_1=0,...,\lfloor \frac{n-j_1}{d} \rfloor,\\
& k_2=0,...,\lfloor \frac{n-j_2}{d} \rfloor \Big.\Big\}
\end{align*}
for $j_1,j_2=1,...,d$, and let $C_{j_1,j_2}$ denote the cardinality of $I_{j_1,j_2}$. 
We notice that for any fixed $(j_1,j_2)$, the $Y_{i_1,i_2}$'s are i.i.d.\ for all $(i_1,i_2)\in I_{j_1,j_2}$.  Also, we have $\Gamma = \cup_{j_1,j_2=1}^{d} I_{j_1,j_2}$, and $I_{j_1,j_2} \cap I_{s_1,s_2}=\emptyset$, for $(j_1,j_2)\neq (s_1,s_2)$, making the collection $I_{1,1}, I_{1,2}, \ldots, I_{d,d}$ a partition of $\Gamma$. Therefore,
\begin{align*}
\sum_{i\in\Gamma} f_i(Y_i) &= \sum_{j_1,j_2=1}^{d} \sum_{(i_1,i_2)\in I_{j_1,j_2}} f_{i_1,i_2}(Y_{i_1,i_2}) \\
&= \sum_{j_1,j_2=1}^{d} p_{j_1,j_2}\cdot\frac{1}{p_{j_1,j_2}}\sum_{(i_1,i_2)\in I_{j_1,j_2}} f_{i_1,i_2}(Y_{i_1,i_2}),
\end{align*}
where $0 < p_{j_1,j_2} <1$ are probabilities satisfying $\sum_{j_1,j_2=1}^{d} p_{j_1,j_2}=1$.  Using the above,
\begin{align}
&\mathbb{E}[\exp(r\sum_{i\in\Gamma} f_i(Y_i))]\nonumber\\
&=\mathbb{E}\Big[\exp\Big(\sum_{j_1,j_2=1}^{d} p_{j_1,j_2} \cdot \frac{r}{p_{j_1,j_2}}\sum_{(i_1,i_2)\in I_{j_1,j_2}} f_{i_1,i_2}(Y_{i_1,i_2})\Big)\Big]\nonumber\\
& \overset{(a)}{\leq} \sum_{j_1,j_2=1}^{d} p_{j_1,j_2} \mathbb{E}\Big[\exp\Big(\frac{r}{p_{j_1,j_2}}\sum_{(i_1,i_2)\in I_{j_1,j_2}}f_{i_1,i_2}(Y_{i_1,i_2})\Big)\Big]  \nonumber \\
&\overset{(b)}{=}\sum_{j_1,j_2=1}^{d} p_{j_1,j_2}\prod_{(i_1,i_2)\in I_{j_1,j_2}}\mathbb{E}\Big[\exp\Big(\frac{r}{p_{j_1,j_2}}f_{i_1,i_2}(Y_{i_1,i_2})\Big)\Big]\nonumber\\
& \overset{(c)}{\leq} \sum_{j_1,j_2=1}^{d}\! p_{j_1,j_2} \exp\Big(\frac{50C_{j_1,j_2}L^2r^2 (td^2+12t^2d^4)}{p_{j_1,j_2}^2}\Big),
\label{eq:lem1_1}
\end{align}
where step $(a)$ follows from Jensen's inequality, step $(b)$ from the independence of $Y_{i_1,i_2}$'s for $(i_1,i_2)\in I_{j_1,j_2}$,
and step $(c)$ from Lemma~\ref{lem:exp_PL_subgauss_vector_conc} with variance factor $\nu=1$ and restriction 
\begin{equation}
0 < r < \Big(5L\sqrt{2td^2 + 24t^2d^4}\Big)^{-1} \min_{(j_1,j_2)} p_{j_1,j_2}.
\label{eq:r_region}
\end{equation}

Let $p_{j_1,j_2}=\sqrt{C_{j_1,j_2}}/C$, where $C=\sum_{j_1,j_2=1}^{d} \sqrt{C_{j_1,j_2}}$, ensuring  $\sum_{j_1,j_2=1}^{d} p_{j_1,j_2} = 1$. Then,
\begin{align*}
&\sum_{j_1,j_2=1}^{d} p_{j_1,j_2} \exp\Big(\frac{50 C_{j_1,j_2}L^2r^2 (td^2+12t^2d^4)}{p_{j_1,j_2}^2}\Big)\\
&= e^{50C^2L^2r^2 (td^2+12t^2d^4)} \overset{(a)}{\leq} e^{450d^2n^2L^2(td^2+12t^2d^4)r^2} \leq e^{\kappa' |\Gamma| r^2},
\end{align*}
whenever $\kappa' \geq 450d^2L^2(td^2+12t^2d^4)$. In the above, step $(a)$ follows from:
\begin{align*}
C^2&=\Big(\sum_{j_1,j_2=1}^{d}\sqrt{C_{j_1,j_2}}\Big)^2\\
& =\sum_{j_1,j_2=1}^{d} C_{j_1,j_2} +\sum_{j_1,j_2=1}^{d}\sum_{(k_1,k_2)\neq (j_1,j_2)}\sqrt{C_{j_1,j_2}C_{k_1,k_2}}\\
&\overset{(b)}{\leq} n^2 +d^2(d^2-1) C_{1,1}\\
&  \overset{(c)}{\leq} d^2n^2 + 4(d^2-1)(d^2+nd) <9d^2n^2, 
\end{align*} 
where step $(b)$ holds because $C_{1,1}= \max_{(j_1,j_2)} C_{j_1,j_2}$ and step $(c)$ holds because $C_{1,1}=(\lfloor\frac{n-1}{d}\rfloor+1)^2 \leq (\frac{n}{d}+2)^2$.  
Finally, we consider the effective region for $r$ as required in \eqref{eq:r_region}.
Notice that 
\begin{align*}
\min_{(j_1,j_2)} p_{j_1,j_2}& = \frac{\sqrt{C_{d,d}}}{C} = \frac{\sqrt{C_{d,d}}}{\sum_{j_1,j_2=1}^{d} \sqrt{C_{j_1,j_2}}} \geq  \frac{\sqrt{C_{d,d}}}{d^2\sqrt{C_{1,1}}} \\
&= \frac{1}{d^2}\frac{\lfloor\frac{n-d}{d}\rfloor+1}{\lfloor\frac{n-1}{d}\rfloor+1}\\
&=\frac{1}{d^2}\frac{\lfloor\frac{n}{d}\rfloor}{\lfloor\frac{n-1}{d}\rfloor+1}\geq  \frac{1}{d^2}\frac{\lfloor\frac{n-1}{d}\rfloor}{\lfloor\frac{n-1}{d}\rfloor+1}\geq \frac{1}{2d^2}.
\end{align*}
Hence, if we require $0<r< (10Ld^2\sqrt{2td^2+24t^2d^4})^{-1}$, then \eqref{eq:r_region} is satisfied.
\end{proof}

\bibliographystyle{IEEEtran}
\bibliography{../NSdenoisers}

% Generated by IEEEtran.bst, version: 1.14 (2015/08/26)
\begin{thebibliography}{10}
\providecommand{\url}[1]{#1}
\csname url@samestyle\endcsname
\providecommand{\newblock}{\relax}
\providecommand{\bibinfo}[2]{#2}
\providecommand{\BIBentrySTDinterwordspacing}{\spaceskip=0pt\relax}
\providecommand{\BIBentryALTinterwordstretchfactor}{4}
\providecommand{\BIBentryALTinterwordspacing}{\spaceskip=\fontdimen2\font plus
\BIBentryALTinterwordstretchfactor\fontdimen3\font minus
  \fontdimen4\font\relax}
\providecommand{\BIBforeignlanguage}[2]{{%
\expandafter\ifx\csname l@#1\endcsname\relax
\typeout{** WARNING: IEEEtran.bst: No hyphenation pattern has been}%
\typeout{** loaded for the language `#1'. Using the pattern for}%
\typeout{** the default language instead.}%
\else
\language=\csname l@#1\endcsname
\fi
#2}}
\providecommand{\BIBdecl}{\relax}
\BIBdecl

\bibitem{MaRushBaron17}
Y.~Ma, C.~Rush, and D.~Baron, ``Analysis of approximate message passing with a
  class of non-separable denoisers,'' \emph{Proc. IEEE Int. Symp. Inf. Theory},
  June 2017, full version: \url{https://arxiv.org/abs/1705.03126}.

\bibitem{DonMalMont09}
D.~L. Donoho, A.~Maleki, and A.~Montanari, ``{Message passing algorithms for
  compressed sensing},'' \emph{Proc. Nat. Academy Sci.}, vol. 106, no.~45, pp.
  18\,914--18\,919, Nov. 2009.

\bibitem{MontChap11}
\BIBentryALTinterwordspacing
A.~Montanari, ``Graphical models concepts in compressed sensing,'' in
  \emph{Compressed Sensing}, Y.~C. Eldar and G.~Kutyniok, Eds.\hskip 1em plus
  0.5em minus 0.4em\relax Cambridge University Press, 2012, pp. 394--438.
  [Online]. Available: \url{http://dx.doi.org/10.1017/CBO9780511794308.010}
\BIBentrySTDinterwordspacing

\bibitem{BayMont11}
M.~Bayati and A.~Montanari, ``The dynamics of message passing on dense graphs,
  with applications to compressed sensing,'' \emph{IEEE Trans. Inf. Theory},
  vol.~57, no.~2, pp. 764--785, Feb. 2011.

\bibitem{krz12}
F.~Krzakala, M.~M{\'e}zard, F.~Sausset, Y.~Sun, and L.~Zdeborov{\'a},
  ``Probabilistic reconstruction in compressed sensing: {A}lgorithms, phase
  diagrams, and threshold achieving matrices,'' \emph{J. Stat. Mech. -- Theory
  E.}, vol. 2012, no.~08, p. P08009, Aug. 2012.

\bibitem{Rangan11}
S.~Rangan, ``Generalized approximate message passing for estimation with random
  linear mixing,'' in \emph{Proc. IEEE Int. Symp. Inf. Theory (ISIT)}, St.
  Petersburg, Russia, July 2011, pp. 2168--2172.

\bibitem{Tan15}
J.~Tan, Y.~Ma, and D.~Baron, ``Compressive imaging via approximate message
  passing with image denoising,'' \emph{IEEE Trans.\ Signal Processing},
  vol.~63, no.~8, pp. 2085--2092, April 2015.

\bibitem{Metzler16}
C.~Metzler, A.~Maleki, and R.~G. Baraniuk, ``From denoising to compressed
  sensing,'' \emph{IEEE Trans.\ Inf.\ Theory}, vol.~62, no.~9, pp. 5117 --
  5114, Apr. 2016.

\bibitem{Ma16}
Y.~Ma, J.~Zhu, and D.~Baron, ``Approximate message passing algorithm with
  universal denoising and {G}aussian mixture learning,'' \emph{IEEE Trans.\
  Signal Processing}, vol.~64, no.~21, pp. 5611--5622, Nov. 2016.

\bibitem{SW_iDUDE2006}
K.~Sivaramakrishnan and T.~Weissman, ``Universal denoising of continuous
  amplitude signals with applications to images,'' in \emph{Proc. Int. Conf.
  Image Process.}, Oct. 2006, pp. 2609--2612.

\bibitem{Sivaramakrishnan2008}
------, ``Universal denoising of discrete-time continuous-amplitude signals,''
  \emph{IEEE Trans. Inf. Theory}, vol.~54, no.~12, pp. 5632--5660, Dec. 2008.

\bibitem{dubes1989}
R.~C. Dubes and A.~K. Jain, ``Random field models in image analysis,'' \emph{J.
  Appl. Statist.}, vol.~16, no.~2, pp. 131--164, 1989.

\bibitem{cross1983}
G.~R. Cross and A.~K. Jain, ``Markov random field texture models,'' \emph{IEEE
  Transactions on Pattern Analysis and Machine Intelligence}, no.~1, pp.
  25--39, 1983.

\bibitem{RushV18}
C.~Rush and R.~Venkataramanan, ``Finite sample analysis of approximate message
  passing algorithms,'' \emph{IEEE Trans. Inf. Theory}, vol.~64, no.~11, pp.
  7264--7286, 2018.

\bibitem{JavMonState13}
A.~Javanmard and A.~Montanari, ``State evolution for general approximate
  message passing algorithms, with applications to spatial coupling,''
  \emph{Inf. and Inference}, vol.~2, no.~2, pp. 115 -- 144, Dec. 2013.

\bibitem{RushGVISIT15}
C.~Rush, A.~Greig, and R.~Venkataramanan, ``Capacity-achieving sparse
  regression codes via approximate message passing decoding,'' \emph{Proc. IEEE
  Int. Symp. Inf. Theory}, June 2015, full version:
  \url{http://arxiv.org/abs/1501.05892}.

\bibitem{Tan.etal2015}
J.~Tan, Y.~Ma, H.~Rueda, D.~Baron, and G.~Arce, ``Compressive hyperspectral
  imaging via approximate message passing,'' \emph{IEEE J. Sel. Topics Signal
  Process.}, vol.~10, no.~2, pp. 389--401, Mar. 2016.

\bibitem{Berthier17}
R.~Berthier, A.~Montanari, and P.-M. Nguyen, ``State evolution for approximate
  message passing with non-separable functions,'' \emph{arXiv:1708.03950},
  2017.

\bibitem{Georgii98}
H.~O. Georgii, \emph{Gibbs Measures and Phase Transitions}.\hskip 1em plus
  0.5em minus 0.4em\relax Berlin: de Gruyter, 1998.

\bibitem{HammersleyClifford1971}
J.~Hammersley and P.~Clifford, ``{M}arkov fields on finite graphs and
  lattices,'' 1971, unpublished manuscript.

\bibitem{Li2009}
S.~Z. Li, \emph{Markov random field modeling in image analysis}.\hskip 1em plus
  0.5em minus 0.4em\relax Springer Science \& Business Media, 2009.

\bibitem{BLMConc}
S.~Boucheron, G.~Lugosi, and P.~Massart, \emph{Concentration inequalities: A
  nonasymptotic theory of independence}.\hskip 1em plus 0.5em minus 0.4em\relax
  OUP Oxford, 2013.

\bibitem{Champagnat1998}
F.~Champagnat, J.~Idier, and Y.~Goussard, ``Stationary {M}arkov random fields
  on a finite rectangular lattice,'' \emph{IEEE Trans. Inf. Theory}, vol.~44,
  no.~7, pp. 2901--2916, 1998.

\bibitem{Beck2009FISTA}
A.~Beck and M.~Teboulle, ``A fast iterative shrinkage-thresholding algorithm
  for linear inverse problems,'' \emph{SIAM J. Imag. Sci.}, vol.~2, no.~1, pp.
  183--202, Mar. 2009.

\bibitem{kulske2003}
C.~K{\"u}lske, ``Concentration inequalities for functions of {G}ibbs fields
  with application to diffraction and random {G}ibbs measures,'' \emph{Commun.
  Math. Phys.}, vol. 239, no. 1-2, pp. 29--51, 2003.

\end{thebibliography}

% can use a bibliography generated by BibTeX as a .bbl file
% BibTeX documentation can be easily obtained at:
% http://mirror.ctan.org/biblio/bibtex/contrib/doc/
% The IEEEtran BibTeX style support page is at:
% http://www.michaelshell.org/tex/ieeetran/bibtex/
%\bibliographystyle{IEEEtran}
% argument is your BibTeX string definitions and bibliography database(s)
%\bibliography{IEEEabrv,../bib/paper}
%
% <OR> manually copy in the resultant .bbl file
% set second argument of \begin to the number of references
% (used to reserve space for the reference number labels box)
%\begin{thebibliography}{1}
%
%\bibitem{IEEEhowto:kopka}
%H.~Kopka and P.~W. Daly, \emph{A Guide to \LaTeX}, 3rd~ed.\hskip 1em plus
%  0.5em minus 0.4em\relax Harlow, England: Addison-Wesley, 1999.
%
%\end{thebibliography}

% biography section
% 
% If you have an EPS/PDF photo (graphicx package needed) extra braces are
% needed around the contents of the optional argument to biography to prevent
% the LaTeX parser from getting confused when it sees the complicated
% \includegraphics command within an optional argument. (You could create
% your own custom macro containing the \includegraphics command to make things
% simpler here.)
%\begin{IEEEbiography}[{\includegraphics[width=1in,height=1.25in,clip,keepaspectratio]{mshell}}]{Michael Shell}
% or if you just want to reserve a space for a photo:

%\begin{IEEEbiography}{Michael Shell}
%Biography text here.
%\end{IEEEbiography}

% if you will not have a photo at all:
\begin{IEEEbiographynophoto}{Yanting Ma}
 (S’13–M’18) received the Ph.D. degree
in electrical engineering from North Carolina State
University, Raleigh, NC, USA, in 2017.
She is currently a Research Scientist with Mitsubishi Electric Research Laboratories (MERL),
Cambridge, MA, USA. Prior to joining MERL, she
was a Postdoctoral Associate with Boston University,
Boston, MA, USA. Her research interests include algorithm design and analysis for inverse problems using statistical inference and optimization techniques. Additionally, she
is generally interested in applied probability and convex analysis.
\end{IEEEbiographynophoto}

\begin{IEEEbiographynophoto}{Cynthia Rush}
 (S'14-M’16) is currently an Assistant Professor with the Department of Statistics, Columbia University, New York, NY, USA.  She received the M.A. and Ph.D. degrees both in Statistics from Yale University, New Haven, CT, USA, in 2016 and a B.A. degree in mathematics from the University of North Carolina, Chapel Hill, NC, USA, in 2010. Her research interests include statistics, information theory, machine learning, and applied probability.
\end{IEEEbiographynophoto}

\begin{IEEEbiographynophoto}{Dror Baron}
(S'99--M'03--SM'10) received the B.Sc. (summa cum laude) and
M.Sc. degrees from the Technion -- Israel Institute of Technology, Haifa, Israel,
in 1997 and 1999, and the Ph.D. degree from the University of Illinois at Urbana-Champaign
in 2003, all in electrical engineering.
From 1997 to 1999, Dr. Baron worked at Witcom Ltd. in modem design.
From 1999 to 2003, he was a research assistant at the University of Illinois at
Urbana-Champaign, where he was also a Visiting Assistant Professor in 2003.
From 2003 to 2006, he was a Postdoctoral Research Associate in the Department
of Electrical and Computer Engineering at Rice University, Houston, TX. From
2007 to 2008, he was a quantitative financial analyst with Menta Capital, San
Francisco, CA, and from 2008 to 2010 a Visiting Scientist in the Department
of Electrical Engineering at the Technion -- Israel Institute of Technology,
Haifa. Since 2010,  Dr. Baron has been with the Electrical and Computer Engineering
Department at North Carolina State University, where he is currently an
Associate Professor.

Dr. Baron's research interests combine information theory, signal processing,
and fast algorithms; in recent years, he has focused on information theoretic
aspects of compressed sensing.
Dr. Baron was a recipient of the 2002 M. E. Van Valkenburg Graduate Research
Award, and received honorable mention at the Robert Bohrer Memorial Student
Workshop in April 2002, both at the University of Illinois. He also participated
from 1994 to 1997 in the Program for Outstanding Students, comprising the top
0.5\% of undergraduates at the Technion.
\end{IEEEbiographynophoto}

% insert where needed to balance the two columns on the last page with
% biographies
%\newpage

%\begin{IEEEbiographynophoto}{Jane Doe}
%Biography text here.
%\end{IEEEbiographynophoto}

% You can push biographies down or up by placing
% a \vfill before or after them. The appropriate
% use of \vfill depends on what kind of text is
% on the last page and whether or not the columns
% are being equalized.

%\vfill

% Can be used to pull up biographies so that the bottom of the last one
% is flush with the other column.
%\enlargethispage{-5in}

% that's all folks
\end{document}